\begin{document}

\newcommand{\from}[3]{\textbf{FROM} #1 \textbf{TO} #2: #3}
\def\FP{\text{\rm FP}}
\def\spanL{\text{\rm SpanL}}
\def\spanLL{\text{\rm SpanLL}}
\def\sharpP{\text{\rm \#P}}
\def\sharpL{\text{\rm \#L}}
\def\NL{\text{\rm NL}}
\def\PTIME{\text{\rm P}}
\def\PH{\text{\rm PH}}
\def\NP{\text{\rm NP}}
\def\LOGSPACE{\text{\rm L}}
\def\NSPACE{\rm NSPACE}
\def\co{\rm co\text{-}}
\def\PSPACE{\rm PSPACE}
\def\EXPTIME{\rm EXPTIME}
\def\TWOEXPTIME{\rm 2EXPTIME}
\def\AEXSPACE{\rm AEXSPACE}
\def\ACZ{\rm AC_0}
\def\hard{\rm \text{-}hard}
\def\complete{\text{-{\rm complete}}}

\newcommand{\depth}[1]{\mathsf{depth}(#1)}
\newcommand{\rank}[1]{\mathsf{rank}(#1)}
\newcommand{\sdepth}[1]{\star\text{-}\mathsf{depth}(#1)}
\newcommand{\mh}[1]{\mathsf{mh}(#1)}
\newcommand{\ma}[1]{\mathsf{ma}(#1)}
\newcommand{\md}[1]{\mathsf{md}(#1)}
\newcommand{\smd}[1]{\star\text{-}\mathsf{md}(#1)}
\newcommand{\omd}[1]{\obl\text{-}\mathsf{md}(#1)}
\newcommand{\gforest}[1]{\mathsf{gforest}(#1)}
\newcommand{\gtree}[1]{\mathsf{gtree}(#1)}
\newcommand{\slinpath}[1]{simple linear $#1$-path}
\newcommand{\linpath}[1]{linear $#1$-path}
\newcommand{\guardedpath}[1]{guarded $#1$-path}
\newcommand{\stickypath}[1]{sticky $#1$-path}
\newcommand{\R}{\mathcal{R}}
\newcommand{\Lang}{\mathcal{L}}
\newcommand{\mi}[1]{\mathit{#1}}
\newcommand{\ins}[1]{\mathbf{#1}}
\newcommand{\adom}[1]{\mathsf{dom}(#1)}
\newcommand{\ra}{\rightarrow}
\newcommand{\fr}[1]{\mathsf{fr}(#1)}
\newcommand{\dep}{\Sigma}
\newcommand{\sch}[1]{\mathsf{sch}(#1)}
\newcommand{\sign}{\ins{S}}
\newcommand{\body}[1]{\mathsf{body}(#1)}
\newcommand{\head}[1]{\mathsf{head}(#1)}
\newcommand{\guard}[1]{\mathsf{guard}(#1)}
\newcommand{\class}[1]{\mathsf{#1}}
\newcommand{\pos}[1]{\mathsf{pos}(#1)}
\newcommand{\spos}[1]{\mathsf{spos}(#1)}
\newcommand{\app}[2]{\langle #1,#2 \rangle}
\newcommand{\tup}[1]{\langle #1 \rangle}
\newcommand{\crel}[1]{\prec_{#1}}
\newcommand{\tcrel}[1]{\prec_{#1}^{\star}}
\newcommand{\rctaa}{\class{CT}_{\forall \forall}^{\mathsf{res}}}
\newcommand{\rctaapr}{\mathsf{CT}_{\forall \forall}^{\mathsf{res}}}
\newcommand{\rctae}{\class{CT}_{\forall \exists}^{\mathsf{res}}}
\newcommand{\rctaepr}{\mathsf{CT}_{\forall \exists}^{\mathsf{res}}}
\newcommand{\base}[1]{\mathsf{base}(#1)}
\newcommand{\eqt}[1]{\mathsf{eqtype}(#1)}
\newcommand{\var}[1]{\mathsf{var}(#1)}
\newcommand{\const}[1]{\mathsf{const}(#1)}
\newcommand{\result}[2]{\mathsf{result}(#1,#2)}
\newcommand{\soresult}[2]{\sobl\text{-}\mathsf{result}(#1,#2)}
\newcommand{\oresult}[2]{\obl\text{-}\mathsf{result}(#1,#2)}
\newcommand{\sresult}[2]{\star\text{-}\mathsf{result}(#1,#2)}

\newcommand{\reach}[1]{\rightsquigarrow_{#1}}
\newcommand{\obl}{\mathsf{o}}
\newcommand{\sobl}{\mathsf{so}}
\newcommand{\std}{\mathsf{std}}
\newcommand{\cta}[1]{\class{CT}_{\forall \forall}^{#1}}
\newcommand{\cte}[1]{\class{CT}_{\forall \exists}^{#1}}

\newcommand{\ctda}[2]{\class{CT}_{\forall,#2}^{#1}}
\newcommand{\ctde}[2]{\class{CT}_{\exists,#2}^{#1}}

\newcommand{\ct}[1]{\class{CT}_{\forall}^{#1}}
\newcommand{\ctd}[2]{\class{CT}^{#1}_{#2}}
\newcommand{\ctapr}[1]{\mathsf{CT}_{\forall \forall}^{#1}}
\newcommand{\ctepr}[1]{\mathsf{CT}_{\forall \exists}^{#1}}
\newcommand{\ctdapr}[1]{\mathsf{CT}_{\forall}^{#1}}
\newcommand{\ctdepr}[1]{\mathsf{CT}_{\exists}^{#1}}
\newcommand{\ctpr}[1]{\mathsf{CT}_{\forall}^{#1}}
\newcommand{\ctdpr}[1]{\mathsf{CT}^{#1}}
\newcommand{\cri}[1]{\mathsf{cr}(#1)}
\newcommand{\lin}[1]{\mathsf{Lin_{\class{S}}}(#1)}
\newcommand{\ling}[1]{\mathsf{lin}(#1)}
\newcommand{\shape}[1]{\mathsf{shape}(#1)}
\newcommand{\svar}[1]{\mathsf{svar}(#1)}
\newcommand{\constfree}[1]{\mathsf{c\text{-}free}(#1)}
\newcommand{\id}[2]{\mathsf{id}_{#1}(#2)}
\newcommand{\unique}[1]{\mathsf{unique}(#1)}
\newcommand{\simple}[1]{\mathsf{simple}(#1)}
\newcommand{\gsimple}[1]{\mathsf{gsimple}(#1)}
\def\sub{\sqsubseteq}
\def\substrict{\sqsubset}

\newcommand{\norm}[1]{\mathsf{Norm}(#1)}
\newcommand{\depg}[1]{\mathsf{dg}(#1)}
\newcommand{\edepg}[1]{\mathsf{edg}(#1)}
\newcommand{\sodg}[1]{\mathsf{so\text{-}dg}(#1)}
\newcommand{\odg}[1]{\mathsf{o\text{-}dg}(#1)}
\newcommand{\ex}[1]{\mathsf{exvar}(#1)}
\newcommand{\mgu}[2]{\mathsf{mgu}(#1,#2)}
\newcommand{\res}[1]{\mathsf{res}(#1)}
\newcommand{\f}[2]{f_{#1}(#2)}
\newcommand{\arity}[1]{\mathsf{ar}(#1)}
\newcommand{\atoms}[1]{\mathsf{atoms}(#1)}
\newcommand{\birth}[2]{\mathsf{birth}_{#1}(#2)}
\newcommand{\OMIT}[1]{}
\newcommand{\crt}[1]{\texttt{cr}(#1)}
\newcommand{\precd}[1]{\prec_{#1}}
\newcommand{\lvl}[2]{\texttt{lv}_{#1}(#2)}
\newcommand{\posvar}[2]{\mathsf{pos}(#1,#2)}
\newcommand{\posterm}[2]{\mathsf{pos}(#1,#2)}
\newcommand{\varpos}[2]{\mathsf{var}(#1,#2)}
\newcommand{\termpos}[2]{\mathsf{term}(#1,#2)}
\newcommand{\CT}[2]{\mathsf{CT}^{#1}_{#2}}
\newcommand{\frpos}[1]{\mathsf{frpos}(#1)}
\newcommand{\dom}{\mathbf{C}}
\newcommand{\freshdom}{\mathbf{N}}
\newcommand{\frontier}[1]{\mathsf{fr}(#1)}
\newcommand{\eqtype}[1]{\mathsf{eqtype}(#1)}
\def\iso{\simeq}
\newcommand{\can}[1]{\mathsf{can}(#1)}
\newcommand{\proj}[2]{\Pi_{#1}(#2)}
\newcommand{\pred}[1]{\mathit{pred}(#1)}
\newcommand{\predt}[1]{[#1]}
\newcommand{\atom}[1]{\underline{#1}}
\newcommand{\tuple}[1]{\bar{#1}}
\newcommand{\resolv}[1]{[#1]}
\newcommand{\parent}[1]{\mathit{par}(#1)}
\newcommand{\dept}[1]{\mathit{depth}(#1)}
\newcommand{\chase}[2]{\mathsf{chase}(#1,#2)}
\newcommand{\chasesize}[2]{\mathsf{chsize}(#1,#2)}
\newcommand{\starchasei}[3]{\star\text{-}\mathsf{chase}^{#3}(#1,#2)}
\newcommand{\starchase}[2]{\star\text{-}\mathsf{chase}(#1,#2)}
\newcommand{\sochasei}[3]{\sobl\text{-}\mathsf{chase}^{#3}(#1,#2)}
\newcommand{\sochase}[2]{\sobl\text{-}\mathsf{chase}(#1,#2)}
\newcommand{\ochasei}[3]{\obl\text{-}\mathsf{chase}^{#3}(#1,#2)}
\newcommand{\ochase}[2]{\obl\text{-}\mathsf{chase}(#1,#2)}
\newcommand{\completion}[2]{\mathsf{complete}(#1,#2)}
\newcommand{\mar}[1]{\hat{#1}}
\newcommand{\nullobl}[3]{\bot^{#1}_{#2,#3}}
\newcommand{\nullsobl}[3]{\bot^{#1}_{#2,#3_{|\frontier{#2}}}}
\newcommand{\startype}[1]{\star\textrm{-}\mathsf{type}(#1)}
\newcommand{\type}[2]{\mathsf{type}_{#1}(#2)}
\newcommand{\types}[2]{#1\textrm{-}\mathsf{types}(#2)}
\newcommand{\src}[1]{\mathsf{src}(#1)}
\newcommand{\IDTGD}{\class{ID}}
\newcommand{\DLLITETGD}{\mathsf{DL\textrm{-}Lite^{TGD}}}
\newcommand{\SLTGD}{\class{SL}}
\newcommand{\LTGD}{\class{L}}
\newcommand{\GTGD}{\class{G}}
\newcommand{\WGTGD}{\class{WG}}
\newcommand{\RATGD}{\class{RA}}
\newcommand{\LARATGD}{\class{LARA}}
\newcommand{\LCRATGD}{\class{LCRA}}
\newcommand{\LCWATGD}{\class{LCWA}}
\newcommand{\WATGD}{\class{WA}}
\newcommand{\DAT}{\class{DAT}}
\newcommand{\UCQ}{\class{UCQ}}
\newcommand{\SLRATGD}{\class{SLRA}}
\newcommand{\SLWATGD}{\class{SLWA}}
\newcommand{\LRATGDP}{\class{LRA}^{+}}
\newcommand{\LWATGDP}{\class{LWA}^{+}}
\newcommand{\oblrew}[1]{\mathsf{enrichment}(#1)}

\newcommand{\termeq}{\equiv}
\newcommand{\skolem}[3]{\bot_{#1,#2}^{#3}}

\def\qed{\hfill{\qedboxempty}      
  \ifdim\lastskip<\medskipamount \removelastskip\penalty55\medskip\fi}

\def\qedboxempty{\vbox{\hrule\hbox{\vrule\kern3pt
                 \vbox{\kern3pt\kern3pt}\kern3pt\vrule}\hrule}}

\def\qedfull{\hfill{\qedboxfull}   
  \ifdim\lastskip<\medskipamount \removelastskip\penalty55\medskip\fi}

\def\qedboxfull{\vrule height 4pt width 4pt depth 0pt}

\newcommand{\markfull}{\qedboxfull}
\newcommand{\markempty}{\qed}

\newtheorem{claim}[theorem]{Claim}
\newtheorem{fact}[theorem]{Fact}
\newtheorem{observation}{Observation}
\newtheorem{remark}{Remark}
\newtheorem{apptheorem}{Theorem}[section]
\newtheorem{appcorollary}[apptheorem]{Corollary}
\newtheorem{appproposition}[apptheorem]{Proposition}
\newtheorem{applemma}[apptheorem]{Lemma}
\newtheorem{appclaim}[apptheorem]{Claim}
\newtheorem{appfact}[apptheorem]{Fact}

\fancyhead{}

\title{Non-Uniformly Terminating Chase: Size and Complexity}

\author{Marco Calautti}
\affiliation{%
	\institution{University of Trento}
	\country{}
}
\email{marco.calautti@unitn.it}

\author{Georg Gottlob}
\affiliation{%
	\institution{University of Oxford}
	\country{}
}
\email{georg.gottlob@cs.ox.ac.uk}

\author{Andreas Pieris}
\affiliation{%
	\institution{University of Edinburgh \&}
	\country{}
}
\affiliation{%
	\institution{University of Cyprus}
	\country{}
}
\email{apieris@inf.ed.ac.uk}

\begin{abstract}
	The chase procedure, originally introduced for checking implication of database constraints, and later on used for computing data exchange solutions, has recently become a central algorithmic tool in rule-based ontological reasoning. In this context, a key problem is non-uniform chase termination: does the chase of a database w.r.t.~a rule-based ontology terminate? And if this is the case, what is the size of the result of the chase?
	We focus on guarded tuple-generating dependencies (TGDs), which form a robust rule-based ontology language, and study the above central questions for the semi-oblivious version of the chase.
	One of our main findings is that non-uniform semi-oblivious chase termination for guarded TGDs is feasible in polynomial time w.r.t.~the database, and the size of the result of the chase (whenever is finite) is linear w.r.t.~the database.
	Towards our results concerning non-uniform chase termination, we show that basic techniques such as simplification and linearization, originally introduced in the context of ontological query answering, can be safely applied to the chase termination problem.
\end{abstract}

\maketitle

\section{Introduction}\label{sec:introduction}

Nowadays we need to deal with data that is very large, heterogeneous,
distributed in different sources, and incomplete. This makes the task of extracting information from such data by means of queries very complex. 
At the same time, we have very large amounts of knowledge about the application domain of the data in the form of ontologies. This gave rise to a research field, recently coined as {\em knowledge-enriched data management}~\cite{manifesto}, that lies at the intersection of data management and knowledge representation and reasoning.
A major challenge for knowledge-enriched data management is to provide end users with flexible and integrated access to data by using the available knowledge about the underlying application domain.
{\em Ontology-based data access} (OBDA)~\cite{PLCD*08} has been proposed as a general paradigm for addressing the above challenge.
The main algorithmic task underlying OBDA is querying knowledge-enriched data, that is, during the query answering process we also need to take into account the inferred knowledge. This problem is also known as {\em ontological query answering}.

Typically, the ontologies employed in data-intensive applications such as OBDA are modeled via description logics, in particular, members of the DL-Lite~\cite{CDLL*07} and $\mathcal{EL}$~\cite{BaBL05} families, mainly due to their good computational properties when it comes to ontological query answering.
On the other hand, there is a consensus that rule-based ontologies, i.e., ontologies consisting of {\em tuple-generating dependencies} (TGDs) (a.k.a.~{\em existential rules}), are also well-suited for data-intensive applications since they allow us to conveniently deal with higher-arity relations that appear in standard relational databases.
In particular, {\em linear} and {\em guarded} TGDs strike a good balance between expressiveness and complexity that make them suitable ontology languages for data-intensive applications~\cite{CaGK13}.
Interestingly, the main members of the DL-Lite family (modulo some easily handled features) are special cases of linear TGDs (in fact, {\em simple linear} TGDs, where variables are not repeated in rule-bodies), while the main members of the $\mathcal{EL}$ family are (up to a certain normal form) special cases of guarded TGDs.

A prominent tool for studying rule-based ontological query answering is the \emph{chase procedure} (or simply chase). 
It takes as input a database $D$, and a rule-based ontology $\dep$, and, if it terminates, it computes a finite instance $D_\dep$ that is a {\em universal model} of $D$ and $\dep$, i.e., a model that can be homomorphically embedded into every other model of $D$ and $\dep$. 
This is why the chase is an important algorithmic tool for rule-based ontological query answering. 
And this is not only in theory. There are efficient implementations of the chase that allow us to solve ontological query answering by adopting a materialization-based approach.
It has been recently observed that for RAM-based implementations the {\em restricted} (a.k.a.~{\em standard}) version of the chase is the indicated tool~\cite{BKMMPST17,KrMR19}, while for 
%
RDBMS-based implementations 
the {\em semi-oblivious} chase is preferable~\cite{BKMMPST17}.
%
For being able, though, to employ existing chase implementations, we need a guarantee that the chase terminates.

This brings us to the chase termination problem (which is clearly parameterized by the version of the chase) that comes in two different variants: {\em uniform} and {\em non-uniform}.
The uniform variant takes as input a rule-based ontology $\dep$, and the question is whether the chase terminates for {\em every} database w.r.t.~$\dep$.
It is clear that whenever uniform chase termination is guaranteed, we can solve ontological query answering via materialization no matter how the database looks like. But even if uniform chase termination is not guaranteed, we can still rely in some cases on materialization, depending on the input database. This reveals the relevance of the non-uniform variant of the chase termination problem, which takes as input a database $D$ and a rule-based ontology $\dep$, and asks if the chase of $D$ w.r.t.~$\dep$ terminates.
It is well-known that, no matter which version of the chase we consider, both uniform and non-uniform chase termination is undecidable when we consider arbitrary rule-based ontologies~\cite{DeNR08,GoMa14}.
On the other hand, once we focus on well-behaved classes of TGDs, we have several positive results concerning the uniform variant of the chase termination problem.
The first such results were established for (simple) linear and guarded TGDs, and the semi-oblivious version of the chase~\cite{CaGP15}. There are also recent results for sticky TGDs (another well-behaved class that is inherently unguarded), and the semi-oblivious version of the chase~\cite{CaPi19}.
The restricted chase has been recently studied with linear TGDs~\cite{GoMP20KI,LMTU19}, as well as with guarded and sticky TGDs~\cite{GoMaP20}.

Although the uniform variant of chase termination has been extensively studied in the literature, the non-uniform one has largely remained unexplored. In this work, we concentrate on the semi-oblivious chase, and study the following questions: given a database $D$ and a (simple) linear or guarded rule-based ontology $\dep$: 
\begin{enumerate}
	\item What is the worst-case optimal size of the result of the chase of $D$ w.r.t.~$\dep$ (whenever is finite)?
	\item Can we decide whether the result of the chase of $D$ w.r.t.~$\dep$ is finite, and if yes, what is the exact complexity?
\end{enumerate}

\medskip

\noindent
\textbf{Summary of Contributions.} After illustrating the different nature of the non-uniformly terminating chase compared to the uniformly terminating one, we establish characterizations of non-uniform chase termination of the following form:

\medskip

\noindent \textsc{Main Characterizations.} 
Given a database $D$, and a (simple) linear or guarded rule-based ontology $\dep$, the following are equivalent:
	\begin{enumerate}
		\item The result of the chase of $D$ w.r.t.~$\dep$ is finite.
		\item The size of the result of the chase of $D$ w.r.t.~$\dep$ is bounded by $|D| \cdot f(\dep)$, for some computable function $f$ that maps ontologies to the natural numbers.
		\item $\dep$ enjoys a syntactic property (relative to the database $D$) that relies on a non-uniform version of weak-acyclicity.
	\end{enumerate}

\medskip

Interestingly, the result of the chase (whenever is finite) is linear w.r.t.~the size of the given database.
In the case of simple linear (resp., linear, guarded) rule-based ontologies, the function $f$ in the above characterizations is exponential (resp., double-exponential, triple-exponential) in the arity, and exponential (resp., exponential, double-exponential) in the number of predicates of the underlying schema. We further provide lower bounds on the size of the chase showing that the above upper bounds are worst-case optimal.

We then exploit the above characterizations (in fact, item (3)) to establish several complexity results for the problem in question that range from \text{AC}$_0$ to \textsc{2ExpTime}.
Among other results, we obtain that non-uniform semi-oblivious chase termination for linear (resp., guarded) rule-based ontologies is in \text{AC}$_0$ (resp., \textsc{PTime}-complete) in data complexity, i.e., when the ontology is considered fixed.

Towards the above characterizations, we establish results of independent interest concerning the basic techniques of simplification and linearization, originally introduced in the context of ontological query answering. Simplification eliminates the repetition of variables in rule-bodies, while linearization converts guarded rule-based ontologies into linear ones. We show that both techniques can be applied in the context of chase termination in the sense that they preserve the finiteness of the chase, and, more importantly, the depth of the terms occurring in the result of the chase.
\section{Preliminaries}\label{sec:preliminaries}

We consider the disjoint countably infinite sets $\ins{C}$, $\ins{N}$, and $\ins{V}$ of {\em constants}, {\em (labeled) nulls}, and {\em variables}, respectively. We refer to constants, nulls and variables as {\em terms}. For an integer $n > 0$, we may write $[n]$ for the set $\{1,\ldots,n\}$.

\medskip

\noindent 
\textbf{Relational Databases.} A {\em schema} $\ins{S}$ is a finite set of relation symbols (or predicates) with associated arity. We write $R/n$ to denote that $R$ has arity $n > 0$; we may also write $\arity{R}$ for the integer $n$.
A {\em (predicate) position} of $\ins{S}$ is a pair $(R,i)$, where $R/n \in \ins{S}$ and $i \in [n]$, that essentially identifies the $i$-th argument of $R$. We write $\pos{\ins{S}}$ for the set of positions of $\ins{S}$, that is, the set $\{(R,i) \mid R/n \in \ins{S} \text{ and } i \in [n]\}$.
An {\em atom} over $\ins{S}$ is an expression of the form $R(\bar t)$, where $R/n \in \ins{S}$ and $\bar t$ is an $n$-tuple of terms. A {\em fact} is an atom whose arguments consist only of constants.
For a variable $x$ in $\bar t = (t_1,\ldots,t_n)$, let $\posvar{R(\bar t)}{x} = \{(R,i) \mid t_i = x\}$. 
We write $\var{R(\bar t)}$ for the set of variables in $\bar t$. The notations $\posvar{\cdot}{x}$ and $\var{\cdot}$ extend to sets of atoms.
An {\em instance} over $\ins{S}$ is a (possibly infinite) set of atoms over $\ins{S}$ with constants and nulls. A {\em database} over $\ins{S}$ is a finite set of facts over $\ins{S}$. The {\em active domain} of an instance $I$, denoted $\adom{I}$, is the set of terms (constants and nulls) occurring in $I$. 

\medskip

\noindent
\textbf{Substitutions and Homomorphisms.}
A {\em substitution} from a set of terms $T$ to a set of terms $T'$ is a function $h : T \ra T'$. Henceforth, we treat a substitution $h$ as the set of mappings $\{t \mapsto h(t) \mid t \in T\}$.
The restriction of $h$ to a subset $S$ of $T$, denoted $h_{|S}$, is the substitution $\{t \mapsto h(t) \mid t \in S\}$.
A {\em homomorphism} from a set of atoms $A$ to a set of atoms $B$ is a substitution $h$ from the set of terms in $A$ to the set of terms in $B$ such that $h$ is the identity on $\ins{C}$, and $R(t_1,\ldots,t_n) \in A$ implies $h(R(t_1,\ldots,t_n)) =  R(h(t_1),\ldots,h(t_n)) \in B$.

\medskip

\noindent
\textbf{Tuple-Generating Dependencies.} A {\em tuple-generating dependency} (TGD) $\sigma$ is a (constant-free) first-order sentence of the form
$
\forall \bar x \forall \bar y \left(\phi(\bar x,\bar y) \ra \exists \bar z\, \psi(\bar x,\bar z)\right),
$
where $\bar x, \bar y$ and $\bar z$ are tuples of variables of $\ins{V}$, and $\phi(\bar x,\bar y)$ and $\psi(\bar x,\bar z)$ are non-empty conjunctions of atoms that mention only variables from $\bar x \cup \bar y$ and $\bar x \cup \bar z$, respectively. Note that, by abuse of notation, we may treat a tuple of variables as a set of variables.
We write $\sigma$ as $\phi(\bar x,\bar y) \ra \exists \bar z\, \psi(\bar x,\bar z)$, and use comma instead of $\wedge$ for joining atoms. We refer to $\phi(\bar x,\bar y)$ and $\psi(\bar x,\bar z)$ as the {\em body} and {\em head} of $\sigma$, denoted $\body{\sigma}$ and $\head{\sigma}$, respectively.
The {\em frontier} of the TGD $\sigma$, denoted $\fr{\sigma}$, is the set of variables $\bar x$, i.e., the variables that appear both in the body and the head of $\sigma$. 
The {\em schema} of a set $\dep$ of TGDs, denoted $\sch{\dep}$, is the set of predicates occurring in $\dep$, and we write $\arity{\dep}$ for the maximum arity over all those predicates. 
We assume, w.l.o.g., that no two TGDs of $\dep$ share a variable, and we let $||\dep|| = |\atoms{\dep}| \cdot |\sch{\dep}| \cdot \arity{\dep}$, where $\atoms{\dep}$ is the set of atoms occurring in the TGDs of $\dep$.
An instance $I$ satisfies a TGD $\sigma$ as the one above, written $I \models \sigma$, if whenever there exists a homomorphism $h$ from $\phi(\bar x, \bar y)$ to $I$, then there is $h' \supseteq h_{|\bar x}$ that is a homomorphism from $\psi(\bar x,\bar z)$ to $I$; we may treat a conjunction of atoms as a set of atoms. The instance $I$ satisfies a set $\dep$ of TGDs, written $I \models \dep$, if $I \models \sigma$ for each $\sigma \in \dep$.

\medskip

\noindent
\textbf{Guardedness.} A TGD $\sigma$ is called \emph{guarded} if there exists an atom in $\body{\sigma}$ that contains (or ``guards'') all the variables in $\body{\sigma}$. Conventionally, the leftmost such an atom in $\body{\sigma}$ is the {\em guard} of $\sigma$, denoted $\guard{\sigma}$. The class of guarded TGDs, denoted
$\class{G}$, is defined as the family of all finite sets of
guarded TGDs.
A TGD is called {\em linear} if it has only one atom in its body, and the corresponding class is denoted $\class{L}$. We further call a linear TGD {\em simple} if no variable occurs more than once in its body-atom, and the corresponding class is denoted $\class{SL}$. 
It is clear that $\class{SL} \subsetneq \class{L} \subsetneq \class{G}$.
\section{The Semi-Oblivious Chase Procedure}\label{sec:chase-procedure}
The semi-oblivious chase procedure (or simply chase) takes as input a database $D$ and a set $\dep$ of TGDs, and constructs an instance that contains $D$ and satisfies $\dep$.
Central notions in this context are those of trigger, active trigger, and trigger application.

\begin{definition}
Given a set $\dep$ of TGDs, and an instance $I$, a {\em trigger} for $\dep$  on $I$ is a pair $(\sigma,h)$, where $\sigma \in \dep$ and $h$ is a homomorphism from $\body{\sigma}$ to $I$.
The {\em result} of $(\sigma,h)$, denoted $\result{\sigma}{h}$, is the set $\mu(\head{\sigma})$, where $\mu : \var{\head{\sigma}} \ra \ins{C} \cup \ins{N}$ is defined as follows:
%
\[
\mu(x)\
=\ \left\{
\begin{array}{ll}
	h(x) & \quad \text{if } x \in \fr{\sigma}\\
	&\\
	\bot_{\sigma,h_{|\fr{\sigma}}}^{x} & \quad \text{otherwise}
\end{array} \right.
\]
where $\bot_{\sigma,h_{|\fr{\sigma}}}^{x}$ is a null value from $\ins{N}$.
The trigger $(\sigma,h)$ is {\em active} if $\result{\sigma}{h} \not\subseteq I$.
The {\em application} of $(\sigma,h)$ to $I$ returns the instance $J = I \cup \result{\sigma}{h}$, and is denoted as $I \app{\sigma}{h} J$.
\hfill\markfull
\end{definition}

Observe that in the definition of $\result{\sigma}{h}$ above each existentially quantified variable $z$ of $\head{\sigma}$ is mapped by $\mu$ to a null value of $\ins{N}$ whose name is uniquely determined by the trigger $(\sigma,h)$ and $z$ itself. This means that, given a trigger $(\sigma,h)$, we can unambiguously write down the set of atoms 
$\result{\sigma}{h}$.


The main idea of the chase is, starting from a database $D$, to exhaustively apply active triggers for the given set $\dep$ of TGDs on the instance constructed so far. This is formalized via the notion of (semi-oblivious) chase derivation, which can be finite or infinite.

\begin{definition}
	Consider a database $D$, and a set $\dep$ of TGDs.
\begin{itemize}
\item A finite sequence $(I_i)_{0 \leq i \leq n}$ of instances, with $D = I_0$ and $n \geq 0$, is a {\em chase derivation} of $D$ w.r.t.~$\dep$ if, for each $i \in \{0,\ldots,n-1\}$, there is an active trigger $(\sigma,h)$ for $\dep$ on $I_i$ with $I_i \app{\sigma}{h} I_{i+1}$, and there is no active trigger for $\dep$ on $I_n$. The {\em result} of such a chase derivation is the instance $I_n$.

\item An infinite sequence $(I_i)_{i \geq 0}$ of instances, with $D = I_0$, is a {\em chase derivation} of $D$ w.r.t.~$\dep$ if, for each $i \geq 0$, there is an active trigger $(\sigma,h)$ for $\dep$ on $I_i$ such that $I_i \app{\sigma}{h} I_{i+1}$. Moreover, $(I_i)_{i \geq 0}$ is {\em fair} if, for each $i \geq 0$, and for every active trigger $(\sigma,h)$ for $\dep$ on $I_i$, there exists $j > i$ such that $(\sigma,h)$ is not an active trigger for $\dep$ on $I_j$. 
The {\em result} of such a chase derivation is the instance $\bigcup_{i \geq 0} \, I_i$.
\end{itemize}
%
A chase derivation is {\em valid} if it is finite, or infinite and fair.  \hfill\markfull
\end{definition}

Let us stress that infinite but unfair chase derivations are not considered as valid ones since they do not serve the main purpose of the chase, that is, to build an instance that satisfies the given set of TGDs. Indeed, given the set $\dep$ consisting of the TGDs
\[
\sigma\ =\ R(x,y) \ra \exists z \, R(y,z) \qquad \sigma'\ =\ R(x,y) \ra P(x,y),
\]
the result of the unfair chase derivation of $D = \{R(a,b)\}$ w.r.t.~$\dep$ that involves only triggers of the form $(\sigma,\cdot)$, i.e., only the TGD $\sigma$ is used, does not satisfy $\sigma'$, and thus, it does not satisfy $\dep$.

%

\medskip

\noindent
\textbf{Non-Uniform Chase Termination.}
A valid chase derivation may be infinite even for very simple settings: it is easy to see that the only chase derivation of the database $D = \{R(a,b)\}$ w.r.t.~the singleton set of TGDs $\dep = \{R(x,y) \ra \exists z \, R(y,z)\}$ is infinite.
The questions that come up are, given a database $D$ and a set $\dep$ of TGDs: 
\begin{enumerate}
	\item What is the worst-case optimal size (that is, the cardinality) of the result of a finite chase derivation of $D$ w.r.t.~$\dep$?
	\item Can we decide whether all or some valid chase derivations of $D$ w.r.t.~$\dep$ are finite? 
\end{enumerate}
To properly formalize the above questions, we need to recall some central classes of sets of TGDs, parameterized by a database $D$:
\[
\class{CT}^{\forall}_{D}\ =\ \left\{\dep \mid \text{ {\em every} valid chase derivation of } D \text{ w.r.t.~} \dep \text{ is finite}\right\}\
\]
and
\[
\begin{array}{rcl}
\class{CT}^{\exists}_{D} &=& \left\{\dep\ ~\left|
\begin{array}{c}
\text{there {\em exists} a valid chase derivation of }\\
D \text{ w.r.t.~} \dep \text{ that is finite}
\end{array} \right.\right\}.
\end{array}
\]
It is well-known from~\cite{GrOn18} that, for every database $D$, $\class{CT}^{\forall}_{D} = \class{CT}^{\exists}_{D}$; henceforth, we simply write $\class{CT}_{D}$ for $\class{CT}^{\forall}_{D}$ and $\class{CT}^{\exists}_{D}$.
Furthermore, it is known that for every database $D$ and set $\dep$ of TGDs, two distinct valid chase derivations of $D$ w.r.t.~$\dep$ have always the same result, which we denote by $\chase{D}{\dep}$. Consequently,
\[
\class{CT}_{D}\ =\ \left\{\dep \mid \text{the instance } \chase{D}{\dep} \text{ is finite}\right\}.
\]
Therefore, question (1) above essentially asks for a worst-case optimal upper bound on the cardinality of the instance $\chase{D}{\dep}$, assuming that $\dep \in \class{CT}_D$, that is, $\chase{D}{\dep}$ is finite.
On the other hand, question (2) above corresponds to the following decision problem, parameterized by a class $\class{C}$ of sets of TGDs:

\medskip

\begin{center}
	\fbox{
		\begin{tabular}{ll}
			{\small PROBLEM} : & $\mathsf{ChTrm}(\class{C})$
			\\
			{\small INPUT} : & A database $D$ and a set $\dep \in \class{C}$.
			\\
			{\small QUESTION} : &  Is it the case that $\dep \in \class{CT}_{D}$?
	\end{tabular}}
\end{center}




\medskip

\noindent
\textbf{Our Goal.}
With $\class{TGD}$ being the class of arbitrary sets of TGDs, we know that $\mathsf{ChTrm}(\class{TGD})$ is undecidable. 
This was shown in~\cite{DeNR08} for the restricted chase, but it was observed in~\cite{Marn09} that the same proof applies to the semi-oblivious version of the chase.
In view of the fact that the set of TGDs employed in the undecidability proof of~\cite{DeNR08} is far from being guarded, we are interested in studying the above questions for guarded TGDs, and subclasses thereof. In other words, we are focussing on the following questions, for $\class{C} \in \{\class{SL},\class{L},\class{G}\}$: 
\begin{itemize}
	\item Given a database $D$, and a set of TGDs $\dep \in \class{C} \cap \class{CT}_D$, what is the worst-case optimal size of the instance $\chase{D}{\dep}$?
	\item Is $\mathsf{ChTrm}(\class{C})$ decidable, and if yes, what is the complexity?
\end{itemize}
The rest of the paper is devoted to providing answers to the above research questions. To this end, we are going to establish, for each class $\class{C} \in \{\class{SL},\class{L},\class{G}\}$, a characterization of non-uniform chase termination of the following form, which is of independent interest:

\medskip

\noindent \textsc{Target Characterization.} 
{\em Consider a database $D$, and a set $\dep \in \class{C}$ of TGDs. The following are equivalent:
\begin{enumerate}
	\item $\dep \in \class{CT}_D$.
	\item $|\chase{D}{\dep}| \leq |D| \cdot f_\class{C}(\dep)$, for some computable function $f_\class{C} : \class{C} \ra \mathbb{N}$ (as usual, $\mathbb{N}$ denotes the set of natural numbers).
	\item $\dep$ enjoys a syntactic property (relative to the database $D$) that relies on a non-uniform version of weak-acyclicity.
\end{enumerate}
}

\medskip

Notice that, for all the classes of TGDs in question, the size of the chase instance is linear w.r.t. the size of the given database.
We further complement the above characterization 
with a lower bound showing that the provided upper bound is worst-case optimal.

Once we have the above results in place, for each class $\class{C} \in \{\class{SL},\class{L},\class{G}\}$ of TGDs, we will immediately get an answer concerning the worst-case optimal size of the chase instance.
%
Concerning the decidability of $\mathsf{ChTrm}(\class{C})$, it is clear that item (2) of the above characterization will provide a simple decision
procedure: given a database $D$ and a set $\dep \in \class{C}$, simply construct the instance $\chase{D}{\dep}$, and if $|\chase{D}{\dep}|$ exceeds the value $|D| \cdot f_\class{C}(\dep)$, then reject; otherwise, accept. However, in most of the cases, this naive approach will not provide worst-case optimal complexity upper bounds.
Towards optimal complexity bounds, we are going to exploit item (3) of the above characterization. Indeed, as we shall see, in all the cases, the procedure of checking whether the syntactic property in question holds provides optimal complexity bounds for $\mathsf{ChTrm}(\class{C})$.


\OMIT{
\subsection{Some Useful Results}
Before we proceed further, let us present a couple of results that will be very useful for our analysis.


\paragraph{Fairness.} The first one concerns the fairness condition of the chase. As one might expect, we are going to focus on the complement of $\mathsf{CT}(\class{C})$, where $\class{C} \in \{\class{G},\class{L},\class{SL}\}$, and pinpoint the complexity of the following problem: given a database $D$ and a set $\dep \in \class{C}$, is there a valid infinite chase derivation $\delta$ of $D$ w.r.t.~$\dep$.
However, as noted in~\cite{CaGP15}, where the uniform chase termination problem has been studied, one of the main difficulties is to ensure that $\delta$ indeed enjoys the fairness condition. Interestingly, as shown in~\cite{CaGP15}, we can completely neglect the fairness condition since the existence of a (possibly unfair) infinite $\star$-chase derivation of some database w.r.t.~$\dep$ implies the existence of a fair one.

\begin{proposition}\label{pro:fairness}
	Consider a database $D$, and a set $\dep$ of TGDs. The following are equivalent:
	\begin{enumerate}
		\item There is a valid infinite $\star$-chase derivation of $D$ w.r.t.~$\dep$.
		\item There is an infinite $\star$-chase derivation of $D$ w.r.t.~$\dep$
	\end{enumerate}
\end{proposition}

Note that the above result holds for arbitrary sets of TGDs, not necessarily guarded.

\OMIT{
\paragraph{Common schema.}
Let $\star \in \{\obl,\sobl\}$, and consider a database $D$ and a set $\dep$ of TGDs. Then, note that $\starchase{D}{\dep}$ is finite iff $\starchase{D'}{\dep}$ is finite, for every database $D'$ obtained from $D$, where every atom with predicate not occurring in $\sch{\dep}$ is removed. Thus, we can always assume w.l.o.g.\, that a database $D$ only contains atoms with predicates occurring in the given set $\dep$ of TGDs. Hereafter, when a database $D$ and a set $\dep$ of TGDs are considered, we always assume, unless specified otherwise, that every predicate occurring in $D$ also occurs in $\sch{\dep}$.

\paragraph{Reduction between chase varations.} Since the semi-oblivious chase is a refined version of the oblivious chase, it is not surprising that $\ctdpr{\obl}(\class{TGD})$ can be reduced to $\ctdpr{\sobl}(\class{TGD})$.
This relies on a very simple construction, known as enrichment~\cite{GrOn18}. Formally, the \emph{enrichment} of a set $\dep$ of TGDs, denoted $\oblrew{\dep}$, is the set of TGDs obtained by replacing each TGD $\sigma \in \dep$ of the form $\phi(\bar x,\bar y) \ra \exists \bar z\ \psi(\bar x,\bar z)$ with the TGD
\[
\phi(\bar x,\bar y) \ra \exists \bar z\ \psi(\bar x,\bar z), \text{\rm Aux}_{\sigma}(\bar x, \bar y),
\]
where $\text{\rm Aux}_{\sigma}$ is an auxiliary predicate of arity $(|\bar x|+|\bar y|)$ not occurring in $\sch{\dep}$.
It is an easy exercise to show that the notion of enrichment provides the desired reduction from $\ctdpr{\obl}(\class{TGD})$ to $\ctdpr{\sobl}(\class{TGD})$. More precisely:

\begin{lemma}\label{lem:obl-sobl-reduction}
	For every database $D$, and set $\dep$ of TGDs, the following hold:
	\begin{itemize}
		\item $\dep \in \ctd{\obl}{D}$ iff $\oblrew{\dep} \in \ctd{\sobl}{D}$, and 
		\item $|\ochase{D}{\dep}| \le |\sochase{D}{\oblrew{\dep}}| \le |\ochase{D}{\dep}| + N$,
	\end{itemize}
	where $N$ is the number of nulls in $\ochase{D}{\dep}$.
\end{lemma}

Thus, in what follows, we will focus on the semi-oblivious chase. We will then exploit Lemma~\ref{lem:obl-sobl-reduction} to transfer our results to the oblivious chase.
}

\paragraph{Bounding the Chase Size.} The second result establishes a generic upper bound on the number of atoms occurring in the instance $\chase{D}{\dep}$, for some database $D$ and set $\dep \in \class{G}$ of TGDs, assuming that $\dep \in \class{CT}_D$, based on the depth of the terms in $\chase{D}{\dep}$.
The {\em depth (w.r.t.~$D$ and $\dep$)} of a term $t \in \adom{\chase{D}{\dep}}$ is inductively defined as follows: 
\begin{itemize}
\item if $t \in \adom{D}$, then $\depth{t} = 0$,
\item if $t$ is a null of the form $\bot_{\sigma,h_{|\fr{\sigma}}}^{z}$ and $\fr{\sigma} = \emptyset$ (which means that $h_{|\fr{\sigma}} = \emptyset$), then $\depth{t} = 1$, and 
\item if $t$ is a null of the form $\bot_{\sigma,h_{|\fr{\sigma}}}^{z}$ and $\fr{\sigma} \neq \emptyset$, then
$
\depth{t} = \max_{x \in \fr{\sigma}} \{\depth{h(x)}\} +1.
$
\end{itemize}
It is clear that, if $\chase{D}{\dep}$ is finite, then there exists an integer $k_{D,\dep} \geq 0$ such that, for each $t \in \adom{\chase{D}{\dep}}$, $\depth{t} \leq k_{D,\dep}$.
We are now ready to provide the generic bound on the number of atoms occurring in $\chase{D}{\dep}$ claimed above.
Let $\mathsf{chsize}(\cdot,\cdot)$ be the function from pairs consisting of a database and a set of TGDs to the natural numbers defined as expected, that is, $\mathsf{chsize}(D,\dep) = |\chase{D}{\dep}|$.

\begin{proposition}\label{pro:generic-bound}
	Consider a database $D$, and a set $\dep \in \class{G}$ of TGDs. If $\dep \in \class{CT}_{D}$, then
	\[
	\chasesize{D}{\dep}\ \in\ O\left(|D|  \cdot ||\dep||^{2 \cdot \arity{\dep} \cdot (d+1)} \cdot d\right),
	\]
	where $d = \max_{t \in \adom{\chase{D}{\dep}}} \{\depth{t}\}$.
\end{proposition}

\begin{proposition}\label{pro:generic-bound}
	Consider a database $D$, and a set $\dep \in \class{G}$ of TGDs. If $\dep \in \class{CT}_{D}$, then
	\[
	\chasesize{D}{\dep}\ \in\ |D|  \cdot ||\dep||^{O(\arity{\dep} \cdot d)} \cdot (d+1),
	\]
	where $d = \max_{t \in \adom{\chase{D}{\dep}}} \{\depth{t}\}$.
\end{proposition}

\begin{proposition}\label{pro:generic-bound}
	Consider a database $D$, and a set $\dep \in \class{G}$ of TGDs. If $\dep \in \class{CT}_{D}$, then
	\[
	\chasesize{D}{\dep}\ \in\ |D|  \cdot ||\dep||^{O(d)} \cdot \arity{\dep}^{O(\arity{\dep} \cdot d)} \cdot (d+1),
	\]
	where $d = \max_{t \in \adom{\chase{D}{\dep}}} \{\depth{t}\}$.
\end{proposition}

Note that the above result, whose proof can be found in the appendix, heavily relies on guardedness, and it does not hold for arbitrary sets of TGDs.
The way that Proposition~\ref{pro:generic-bound} is stated, it gives the impression that the number of atoms in $\chase{D}{\dep}$, whenever $\dep \in \class{G} \cap \class{CT}_D$, is at least exponential in the database $D$, since the depth of a null may depend on $D$, which is of course an undesirable outcome. Interestingly, as we shall see in the next sections, due to guardedness, the depth of a term depends only on the set of TGDs, which in turn implies that $|\chase{D}{\dep}|$ is always polynomial in $|D|$.

\OMIT{
\paragraph{Depth.} We now recall a useful connection between the finiteness of the (semi-)oblivious chase and the so-called \emph{depth} of the nulls occurring in it.

As discussed above, we focus only one the $\sobl$-chase.
Consider a database $D$, a set $\dep$ of TGDs and let $t \in \adom{\sochase{D}{\dep}}$ be a term occurring in the chase of $D$ w.r.t.\ $\dep$.
The \emph{depth} of $t$ (w.r.t.\ $D$ and $\dep$), is the integer defined as follows. If $t \in \adom{D}$, then $\depth{t} = 0$. Otherwise, let $t$ be a null of the form $\nullobl{x}{\sigma}{h}$, for some TGD $\sigma \in \dep$, existential variable $x$ of $\sigma$, and mapping $h$. Then, we define
$$\depth{t} = \max\limits_{x \in \fr{\sigma}} \{ \depth{h(x)} \} + 1.$$
Note that if $\fr{\sigma} = \emptyset$, we assume by convention that $\depth{t} = 1$.
If $\sochase{D}{\dep}$ is finite, note that there exists an integer that is the maximum depth among the terms in $\sochase{D}{\dep}$. We denote such a depth as $\md{D,\dep}$. We extend the notion of depth to atoms as follows. Let $\alpha \in \sochase{D}{\dep}$. The depth of $\alpha$ (w.r.t.\ $D$ and $\dep$), denoted $\depth{\alpha}$, is the integer
$\depth{\alpha} = \max_{t \in \adom{\alpha}} \{\depth{t} \}$.\footnote{We remark that for guarded TGDs, the depth of an atom does not necessarily coincide with the ``depth'' such an atom has in a tree of the guarded chase forest.}

\begin{proposition}\label{prop:depth}
	For every database $D$ and every set $\dep$ of TGDs, $\sochase{D}{\dep}$ is finite iff the depth of all its terms is bounded.
\end{proposition}
\begin{proof}
	TODO Do we need a proof?
\end{proof}
}

}
\section{Uniformly vs. Non-Uniformly Terminating Chase}\label{sec:comparison}

The first step of our analysis towards the characterizations described above, will be to provide a generic upper bound on the size of the instance $\chase{D}{\dep}$ (whenever is finite), for a database $D$ and a set $\dep$ of guarded TGDs, of the form $|D| \cdot k$, where $k$ depends only on $\dep$ and the depth of the terms occurring in $\chase{D}{\dep}$ (the notion of depth is defined below).
We will then provide, for each class $\class{C} \in \{\class{SL},\class{L},\class{G}\}$ of TGDs, a database-independent upper bound on the depth of the terms that occur in a chase instance.
By combining the above bounds, we will eventually get the desired upper bound of the form $|D| \cdot f_\class{C}(\dep)$ on the size of $\chase{D}{\dep}$ (whenever is finite), for a database $D$ and a set of TGDs $\dep \in \class{C}$.

At this point, one may think that the above bounds can be obtained by merely adapting existing results for arbitrary (not necessarily guarded) TGDs concerning the uniformly terminating chase. Therefore, before we proceed with our analysis, we would like to stress the different nature of the non-uniformly terminating chase.

\medskip

\noindent 
\textbf{Bounded Chase Size.}
We know from~\cite{Marn09} that, if we focus on arbitrary sets of TGDs that ensure the {\em uniform} termination of the chase, then we can bound the size of the chase instance via a uniform function over the input database. More precisely, with
\[
\class{CT}\ =\ \left\{\dep \in \class{TGD} \mid \text{for {\em every} database } D, \chase{D}{\dep} \text{ is finite}\right\}
\]
the following result is implicit in~\cite{Marn09}:

\begin{theorem}\label{the:marnette}
	For a set $\dep \in \class{CT}$, there is a computable function $f_\dep$ from databases to $\mathbb{N}$ s.t., for every database $D$, $|\chase{D}{\dep}| \leq f_\dep(D)$.
\end{theorem}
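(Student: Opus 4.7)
The plan is to take $f_\dep$ to be the map $D \mapsto |\chase{D}{\dep}|$ itself, so the claimed inequality collapses to an equality. The two things to verify are that this map is well-defined (always returns a natural number) and that it is computable under the hypothesis $\dep \in \class{CT}$.

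Well-definedness is immediate from the hypothesis. By definition $\dep \in \class{CT}$ says that for every database $D$ one has $\dep \in \class{CT}_D$, i.e., every valid chase derivation of $D$ w.r.t.~$\dep$ is finite. Since any two valid chase derivations from the same input have been recalled in Section~\ref{sec:chase-procedure} to produce the same result $\chase{D}{\dep}$, this result is a finite instance and $|\chase{D}{\dep}| \in \mathbb{N}$.

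For computability, I would exhibit the naive chase algorithm. On input $D$, maintain a current instance starting from $I_0 = D$. At each step, compute the finite set of triggers $(\sigma,h)$ for $\dep$ on the current instance $I_n$; if some is active, apply the one selected by a fixed fair priority rule (e.g., the oldest active trigger first), producing $I_{n+1}$; otherwise halt and output $|I_n|$. The fair priority rule guarantees that the produced derivation is valid (finite, or infinite and fair), and the hypothesis $\dep \in \class{CT}$ then forces it to be finite, so the algorithm halts on every input. This yields a total computable function $D \mapsto |\chase{D}{\dep}|$, which we take as $f_\dep$.

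The crucial observation — and essentially the entire content of the statement — is that without the assumption $\dep \in \class{CT}$, running the chase is only a semi-decision procedure: when it fails to terminate there is no general effective way to detect this. The hypothesis is precisely what lifts the partial procedure ``run the chase and output its size'' into a total computable function, and it is the only place where $\dep \in \class{CT}$ actually enters the argument. Accordingly, I do not expect any serious obstacle; the proof amounts to spelling out these observations.
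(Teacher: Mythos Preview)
Your argument is correct for the theorem as literally stated: once $\dep \in \class{CT}$ guarantees termination on every input, running the chase and returning its size is a total computable function, and the bound holds with equality. There is no gap.

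However, the paper does not give its own proof of this theorem; it attributes the result to~\cite{Marn09} and, crucially, immediately remarks that the function $f_\dep$ obtained there is in fact a \emph{polynomial} in $|D|$. That is the substantive content the paper relies on and contrasts with the non-uniform setting (Proposition~\ref{pro:marnette-non-uniform}). Your choice $f_\dep(D) = |\chase{D}{\dep}|$ is computable but carries no a priori information about growth in $|D|$; Marnette's argument, by contrast, goes through a bound on the depth of nulls (uniform over all databases, cf.~Theorem~\ref{the:term-depth-uniform}) to obtain a polynomial-size chase. So while your proof suffices for the bare statement, it is strictly weaker than what the paper is invoking, and it would not support the subsequent sentence ``the function $f_\dep$ provided by Theorem~\ref{the:marnette} is actually a polynomial''.
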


Let us stress that the function $f_\dep$ provided by Theorem~\ref{the:marnette} is actually a polynomial, which in turn implies that, for a set $\dep \in \class{CT}$, $|\chase{D}{\dep}|$ is polynomial w.r.t.~the size of $D$, for every database $D$.
One may think that a version of Theorem~\ref{the:marnette} for the non-uniform case, where guardedness does not play any crucial role, can be easily obtained by adapting the proof of Theorem~\ref{the:marnette}. 
In other words, one may expect a result of the following form: for a set of (not necessarily guarded) TGDs $\dep$, there is a computable function $f_\dep$ from databases to $\mathbb{N}$ such that, for every database $D$ with $\dep \in \class{CT}_D$, $|\chase{D}{\dep}| \leq f_\dep(D)$.
We proceed to show that this is not true.

\begin{proposition}\label{pro:marnette-non-uniform}
	There exists a set $\dep$ of TGDs such that the following holds: for every computable function $f_\dep$ from databases to $\mathbb{N}$, there exists a database $D$ with $\dep \in \class{CT}_D$ such that $|\chase{D}{\dep}| > f_\dep(D)$.
\end{proposition}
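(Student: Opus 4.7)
The plan is to derive the statement from the undecidability of chase termination for a \emph{fixed} set of TGDs via a short diagonalization-style argument: if a computable size bound $f_\dep$ on $|\chase{D}{\dep}|$ existed, then $\dep \in \class{CT}_D$ would become decidable in $D$ alone, contradicting a reduction from the halting problem.

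First, I would construct a fixed set $\dep$ of TGDs that simulates an arbitrary Turing machine encoded in the database. The predicates encode the transition table of a machine $M$ (as facts $\text{Trans}(q,a,q',b,d)$), the contents of tape cells at each configuration (via $\text{Cell}(c,i,a)$), and the head state and position ($\text{Head}(c,q,i)$). A database $D$ stores the transition relation of some $M$ together with an initial configuration for input $w$. The TGDs of $\dep$ produce, at each simulated step, a fresh configuration identifier via an existentially quantified head variable, copy forward the unaffected cells, rewrite the cell under the head, and move the head according to an applicable transition fact from $D$. Since the frontier of the ``step'' rule contains the current configuration identifier---a brand new null at every step---the semi-oblivious chase fires exactly one trigger per simulated step. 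The key property is that $\chase{D}{\dep}$ is finite if and only if $M$ halts on $w$: a halting configuration has no matching transition fact, so no trigger is active, while an infinite computation yields an infinite sequence of distinct configuration identifiers and hence an infinite chase. Taking $M$ to be a universal Turing machine, the halting set of $M$ is not recursive, so $\{D \mid \dep \in \class{CT}_D\}$ is undecidable.

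For the main reduction, suppose towards a contradiction that there is a computable $f_\dep : \text{databases} \to \mathbb{N}$ with $|\chase{D}{\dep}| \leq f_\dep(D)$ for every $D$ with $\dep \in \class{CT}_D$. Then $\dep \in \class{CT}_D$ can be decided as follows: compute $f_\dep(D)$, and enumerate the semi-oblivious chase of $D$ w.r.t.~$\dep$ in some fair order until either it terminates (accept) or it produces more than $f_\dep(D)$ atoms (reject). By Step~1 this is a decision procedure for an undecidable set, a contradiction; hence the claimed $f_\dep$ cannot exist, which is exactly what the proposition asserts (for any candidate $f_\dep$ there must be some $D \in \class{CT}_D$ violating the bound).

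The main obstacle is the construction in the first step: the TGDs must simulate $M$ faithfully while ensuring that, whenever $M$ reaches a halting state, the semi-oblivious chase also stops. In particular, any auxiliary ``copy'' rule used to transport unchanged cells to the next configuration must be guarded by the existence of an applicable transition, so that no spurious triggers remain active after $M$ halts; additionally, one must check that distinct simulated steps really give distinct frontier substitutions (so semi-obliviousness does not collapse them), which is exactly what fresh configuration identifiers provide. Once this encoding is in place, the computable-bound/decidability argument in the last paragraph is routine.
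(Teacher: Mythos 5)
Your proposal is correct and follows essentially the same route as the paper: first establish that non-uniform chase termination is undecidable even for a \emph{fixed} set of TGDs (the machine and its input being encoded in the database, with the fixed rules simulating the computation), and then derive a contradiction by observing that a computable bound $f_\dep$ would let one decide $\dep \in \class{CT}_D$ by running the chase until it either terminates or exceeds $f_\dep(D)$ atoms. The only differences are cosmetic details of the Turing-machine encoding (your configuration-identifier scheme versus the paper's grid-style adaptation of the Deutsch--Nash--Remmel construction), which do not change the argument.
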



To show the above proposition, we first strengthen the undecidability result for $\mathsf{ChTrm}(\class{TGD})$ from~\cite{DeNR08} by showing that it remains undecidable even in data complexity. More precisely, we show that there exists a set $\dep^\star$ of TGDs such that the problem

\medskip

\begin{center}
	\fbox{
		\begin{tabular}{ll}
			{\small PROBLEM} : & $\mathsf{ChTrm}(\dep^\star)$
			\\
			{\small INPUT} : & A database $D$.
			\\
			{\small QUESTION} : &  Is it the case that $\dep^\star \in \class{CT}_{D}$?
	\end{tabular}}
\end{center}

\medskip

\noindent is undecidable; the proof can be found in the appendix.
Now, by contradiction, assume that there is a computable function $f_{\dep^\star}$ from databases to $\mathbb{N}$ such that, for every database $D$ with $\dep^\star \in \class{CT}_D$, $|\chase{D}{\dep^\star}| \leq f_{\dep^\star}(D)$. This implies that, for every database $D$, $\dep^\star \in \class{CT}_D$ iff $|\chase{D}{\dep^\star}| \leq f_{\dep^\star}(D)$. Since $f_{\dep^\star}$ is computable, we conclude that $\mathsf{ChTrm}(\dep^\star)$ is decidable, which is a contradiction.

\medskip

\noindent 
\textbf{Bounded Term Depth.} As said above, a key notion that will play a crucial role in our analysis is the depth of a term occurring in a chase instance. The formal definition follows.

\begin{definition}\label{def:term-depth}
Consider a database $D$, and a set $\dep$ of TGDs. For a term $t \in \adom{\chase{D}{\dep}}$, the {\em depth of $t$}, denoted $\depth{t}$, is inductively defined as follows:
\begin{itemize}
	\item if $t$ is a constant, then $\depth{t} = 0$, and
	\item if $t$ is a null value of the form $\bot_{\sigma,h}^{z}$, then $\depth{t} = 1 + \max \{\{\depth{h(x)} \mid x \in \fr{\sigma}\} \cup \{0\}\}$.
\end{itemize}
We write $\mathsf{maxdepth}(D,\dep)$ for the maximal depth over all terms of $\adom{\chase{D}{\dep}}$, that is, $\max_{t \in \adom{\chase{D}{\dep}}} \{\depth{t}\}$; in case $\chase{D}{\dep}$ is infinite, then $\mathsf{maxdepth}(D,\dep) = \infty$. \hfill\markfull
\end{definition}

An interesting result in the case of uniformly terminating chase, which provides a database-independent bound on the depth of terms in a chase instance, has been recently established in~\cite{BLMTUG19}:

\begin{theorem}\label{the:term-depth-uniform}
	For every set $\dep \in \class{CT}$, there exists an integer $k_\dep \geq 0$ such that, for every database $D$, $\mathsf{maxdepth}(D,\dep) \leq k_\dep$.
\end{theorem}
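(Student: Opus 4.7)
The plan is to argue by contrapositive: assume $\dep \in \class{CT}$, yet for every $k \geq 0$ there exists a database $D_k$ with $\mathsf{maxdepth}(D_k, \dep) \geq k$, witnessed by a null $t_k \in \adom{\chase{D_k}{\dep}}$ of depth at least $k$. The goal is to synthesise from this sequence a single finite database $D^\star$ on which some chase derivation of $\dep$ is infinite, contradicting $\dep \in \class{CT}$. The first step is to encode each $t_k$ by its \emph{derivation tree} $T_k$ of height $\geq k$: the root is $t_k$; a node corresponding to a null $\bot^{z}_{\sigma, h}$ has as children the terms $h(x)$ for $x \in \fr{\sigma}$; and the leaves are constants of $D_k$. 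Each internal node is labelled by a tuple from a finite alphabet $\Lambda_\dep$ that records the TGD $\sigma$, the existential variable $z$, the equality pattern of $h$ restricted to $\fr{\sigma}$, and the atoms of $\body{\sigma}$ together with their equality patterns; the key point is that $|\Lambda_\dep|$ depends only on $\dep$.

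Next, I would apply a compactness argument. Consider the tree $\mathcal{T}$ whose nodes are finite label-sequences in $\Lambda_\dep^{*}$ that occur as root-to-node labellings along some spine of some $T_k$, with parent--child relation given by one-symbol extension. Because $\{T_k\}_{k \geq 0}$ has unbounded height, $\mathcal{T}$ is infinite; because $|\Lambda_\dep|$ is finite, $\mathcal{T}$ is finitely branching. K\"onig's lemma then yields an infinite branch $\ell_0, \ell_1, \ldots$ of $\mathcal{T}$ realised as a spine inside $T_k$ for arbitrarily large $k$. By a Ramsey-style pigeonhole on $\Lambda_\dep$, one extracts indices $i < j$ with $\ell_i = \ell_j$ such that, in addition, the finite \emph{side-information} attached to positions $i$ and $j$ --- namely, the atoms occurring in the bodies of the spine-triggers together with their embedding into the surrounding chase instance --- coincides up to renaming of nulls.

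The final step is to pump. The chase segment between positions $i$ and $j$ acts as a \emph{gadget}: starting from a null realising label $\ell_i$ together with its finite side-context $\Delta$, it produces after finitely many trigger applications a fresh null realising the same label $\ell_j = \ell_i$ together with a copy of $\Delta$. I would fold this gadget into a finite database $D^\star$ by extracting $\Delta$ and the relevant ancestors of the starting null from $D_k$, replacing the input null by a fresh constant. Then in $\chase{D^\star}{\dep}$ the gadget re-fires on its own output indefinitely, yielding an infinite chase derivation and hence the desired contradiction.

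The main obstacle is this last step: cleanly isolating the side-context $\Delta$ and proving that the gadget genuinely re-fires on its own output. Trigger applications along the spine may rely on atoms that are neither ancestors of the spine-null nor produced by the gadget itself, so one must enrich $\Lambda_\dep$ sufficiently to track this extraneous context while keeping it finite, and the Ramsey extraction of $i < j$ must match the enriched labels. Striking this balance --- a rich-enough but still finite alphabet --- is the crux of the argument.
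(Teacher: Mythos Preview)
The paper does not give its own proof of this statement: it is quoted as a result of~\cite{BLMTUG19}, so there is no in-paper argument to compare against.

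That said, your route is considerably more elaborate than the standard argument, and the obstacle you flag at the end is a genuine gap rather than a detail to be filled in. The usual proof goes through the \emph{critical database} $D^{\star} = \{R(c,\ldots,c) \mid R \in \sch{\dep}\}$ over a single constant $c$. The map $\mu$ sending every constant to $c$ extends to nulls via $\bot^{z}_{\sigma,h} \mapsto \bot^{z}_{\sigma,\mu\circ h}$, and an easy induction shows this is a depth-preserving homomorphism $\chase{D}{\dep} \to \chase{D^{\star}}{\dep}$ for every $D$. Hence $\mathsf{maxdepth}(D,\dep) \leq \mathsf{maxdepth}(D^{\star},\dep)$; since $\dep \in \class{CT}$ forces $\chase{D^{\star}}{\dep}$ to be finite, the right-hand side is a finite $k_\dep$ depending only on $\dep$. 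No K\"onig, no Ramsey, no pumping.

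Your pumping step, by contrast, does not close as written. The triggers fired along the spine between positions $i$ and $j$ may use body atoms whose own derivations reach into arbitrarily large portions of $D_k$ and of the off-spine chase; nothing bounds the side-context $\Delta$ in terms of $\dep$ alone. Enriching $\Lambda_\dep$ to carry this context either makes the alphabet infinite (killing K\"onig) or collapses distinct terms---at which point you are, in effect, rediscovering the critical-database homomorphism. And even when the labels at $i$ and $j$ match, the underlying null values differ; in the semi-oblivious chase $\result{\sigma}{h}$ depends on the concrete values of $h_{|\fr{\sigma}}$, so ``the gadget re-fires on its own output'' requires a genericity argument that your finite label alphabet does not provide.
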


Again, one may be tempted to think that a version of Theorem~\ref{the:term-depth-uniform} for the non-uniform case, where guardedness does not play any crucial role, can be obtained by adapting the proof of Theorem~\ref{the:term-depth-uniform}. It is not difficult to show though that this is not the case.

\begin{proposition}\label{pro:unbounded-depth}
	There is a set $\dep$ of TGDs, and family of databases $\{D_n\}_{n>1}$ with $n = |D_n|$ and $\dep \in \class{CT}_{D_n}$ s.t. $\mathsf{maxdepth}(D_n,\dep) = n-1$.
\end{proposition}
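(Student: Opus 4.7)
The plan is to exhibit a single TGD together with a family of chain-shaped databases whose semi-oblivious chase terminates but produces nulls of depth growing linearly with the database size. The design idea is to engineer the TGD so that every application ``carries'' the freshly created null through the frontier, forcing the depth to increase by one at each step, while the underlying chain of constants in the database provides a finite fuel that guarantees termination.

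Concretely, I would take $\dep = \{\sigma\}$ with
\[
\sigma\ =\ S(x,y,z),\, N(x,w)\ \ra\ \exists u\, S(w,z,u),
\]
so $\fr{\sigma} = \{w,z\}$, and for each $n > 1$ define
\[
D_n\ =\ \{S(a_0,a_0,a_0)\}\ \cup\ \{N(a_{i-1},a_i) \mid 1 \le i \le n-1\},
\]
which has exactly $n$ atoms. I would then prove by induction on $k \in \{1,\ldots,n-1\}$ that the chase creates an atom of the form $S(a_k,t_k,\bot_k)$, where $\bot_k$ is a null with $\depth{\bot_k} = k$. For the base case $k = 1$, the trigger of $\sigma$ combining $S(a_0,a_0,a_0)$ with $N(a_0,a_1)$ maps the frontier via $(w,z) \mapsto (a_1,a_0)$, and thus produces $S(a_1,a_0,\bot_1)$ with $\depth{\bot_1} = 1$ by Definition~\ref{def:term-depth}. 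For the inductive step, the atom $S(a_k,t_k,\bot_k)$ combined with $N(a_k,a_{k+1})$ triggers $\sigma$ with frontier image $(w,z) \mapsto (a_{k+1},\bot_k)$, yielding $S(a_{k+1},\bot_k,\bot_{k+1})$ with $\depth{\bot_{k+1}} = 1 + \max(\depth{a_{k+1}},\depth{\bot_k}) = k+1$. Iterating up to $k = n-1$ gives $\depth{\bot_{n-1}} = n-1$.

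It remains to verify that the chase halts, i.e.\ that $\dep \in \class{CT}_{D_n}$. Every $N$-atom present in the instance has its first argument in $\{a_0,\ldots,a_{n-2}\}$ (no $N$-atom mentions a null), and every $S$-atom produced has its first argument in $\{a_0,\ldots,a_{n-1}\}$; since the semi-oblivious chase identifies triggers by their frontier restriction, each distinct image $(w,z)$ fires at most once, and the inductive construction above exhausts exactly $n-1$ such images. In particular, once the final atom $S(a_{n-1},t_{n-1},\bot_{n-1})$ has been produced, no new trigger can arise because no $N(a_{n-1},\cdot)$ exists. Hence $\chase{D_n}{\dep}$ is finite and $\mathsf{maxdepth}(D_n,\dep) = n-1$, as required. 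I expect the only mildly delicate point in the full write-up to be the semi-oblivious trigger accounting just outlined, which, however, reduces to the straightforward observation that distinct frontier images $(w,z)$ correspond to distinct triggers and that the database chain permits exactly $n-1$ such images.
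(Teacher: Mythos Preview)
Your proposal is correct and takes essentially the same approach as the paper: a single two-atom TGD whose frontier carries the freshly invented null forward along a chain of $n-1$ binary atoms, forcing depth $n-1$ while the finite chain guarantees termination. The paper's construction uses $\sigma = R(x,y),P(x,z,v)\ra\exists w\,P(y,w,z)$ with $D_n = \{P(a_1,b,b)\}\cup\{R(a_i,a_{i+1})\mid 1\le i\le n-1\}$, which is isomorphic to yours up to predicate and variable renaming.
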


	To prove the above result, consider the family $\{D_n\}_{n>1}$, where
	\[
	D_n\ =\ \{P(a_1,b,b),R(a_1,a_2),R(a_2,a_3),\ldots,R(a_{n-1},a_n)\},
	\]
	and the singleton set of TGDs $\dep$ consisting of
	\[
	\sigma\ =\ R(x,y),P(x,z,v)\ \ra\ \exists w \, P(y,w,z).
	\]
	It is clear that $\dep \in \class{CT}_{D_n}$, and that $\mathsf{maxdepth}(D_n,\dep) = n-1$.
	\OMIT{	
	In particular, $\chase{D_n}{\dep}$ is
	\[
	D_n\ \cup\ \{P(a_2,\bot_1,b)\}\ \cup\ \{P(a_i,\bot_{i-1},\bot_{i-2})\}_{i \in \{3,\ldots,n\}}
	\]
	where
	\begin{eqnarray*}
		\bot_1 &=& \bot^{w}_{\sigma,\{y \mapsto a_2, z \mapsto b\}}\\
		\bot_2 &=& \bot^{w}_{\sigma,\{y \mapsto a_3, z \mapsto \bot_1\}}\\
		\bot_j &=& \bot^{w}_{\sigma,\{y \mapsto a_{j+1}, z \mapsto \bot_{j-1}\}}
	\end{eqnarray*}
	for each $j \in \{3,\ldots,n-1\}$. It is clear that $\depth{\bot_j} = j$, and thus, $\mathsf{maxdepth}(D_n,\dep) = n-1$, as needed.}
%
Note that this does not contradict Theorem~\ref{the:term-depth-uniform} since $\dep \not\in\class{CT}$; indeed, $\chase{D}{\dep}$, where $D = \{P(a,a,a),R(a,a)\}$, is infinite.

Propositions~\ref{pro:marnette-non-uniform} and~\ref{pro:unbounded-depth} illustrate the different nature of the non-uniformly terminating chase compared to the uniformly terminating one. As we shall see, our analysis reveals that there are versions of Theorems~\ref{the:marnette} and~\ref{the:term-depth-uniform} for the non-uniform case, providing that we concentrate on guarded TGDs, which rely on techniques and proofs different than the ones used for Theorems~\ref{the:marnette} and~\ref{the:term-depth-uniform}.
\section{Bounding the Chase Size}\label{sec:generic-bound}

We proceed to provide a generic upper bound on the size of the instance $\chase{D}{\dep}$ (whenever is finite), for some database $D$ and set $\dep \in \class{G}$ of TGDs, based on $\mathsf{maxdepth}(D,\dep)$. To this end, we first need to recall some auxiliary notions.

\medskip

\noindent
\textbf{Guarded Chase Forest and Trees.}
Consider a valid chase derivation $\delta = (I_i)_{i \geq 0}$ of $D$ w.r.t.~$\dep$ with $I_i \app{\sigma_i}{h_i} I_{i+1}$ for each $i \geq 0$, which means that $I_{i+1} = I_i \cup \result{\sigma_i}{h_i}$. The {\em guarded chase forest of $\delta$}, denoted $\gforest{\delta}$, is the directed graph $(V,E)$, where $V = \bigcup_{i \geq 0} I_i$, and $(\alpha,\beta) \in E$ iff there is $i \geq 0$ such that $\alpha = h_i(\guard{\sigma_i})$ and $\beta \in I_{i+1} \setminus I_i$.
It is easy to verify that $\gforest{\delta}$ is indeed a forest consisting of directed trees rooted at the atoms of $D$. For an atom $\alpha \in D$, let $\gtree{\delta,\alpha}$ be the tree of $\gforest{\delta}$ rooted at $\alpha$.
In what follows, by abuse of notation, we may treat $\gforest{\delta}$ and $\gtree{\delta,\alpha}$ as the sets of their nodes, namely as sets of atoms.

\medskip

\noindent
\textbf{The Generic Bound.}
The notion of depth defined for terms can be transferred to atoms: given an atom $\alpha = R(t_1,\ldots,t_n) \in \chase{D}{\dep}$, the {\em depth of $\alpha$}, denoted $\depth{\alpha}$, is defined as $\max_{i \in [n]} \{\depth{t_i}\}$.
We show a lemma that provides an upper bound on the number of atoms of a certain depth occurring in $\gtree{\delta,\alpha}$.
We define the set
\[
\mathsf{gtree}^{i}(\delta,\alpha)\ =\ \{\beta \in \gtree{\delta,\alpha} \mid \depth{\beta} = i\}, 
\]
that is, the set of atoms of $\gtree{\delta,\alpha}$ of depth $i \geq 0$. 
We can then show the following key technical lemma:

\begin{lemma}\label{lem:depth-bound}
	Consider a database $D$, and a set $\dep \in \class{G}$ of TGDs. Let $\delta$ be a valid chase derivation of $D$ w.r.t.~$\dep$. For each $\alpha \in D$ and $i \geq 0$,
	\[
	|\mathsf{gtree}^{i}(\delta,\alpha)|\ \leq\ ||\dep||^{2 \cdot \arity{\dep} \cdot (i+1)}.
	\]
\end{lemma}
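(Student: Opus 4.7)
The proof is by induction on $i$. The overall strategy is to bound the number of distinct terms of depth at most $i$ appearing in $\gtree{\delta,\alpha}$, and then use that each atom of depth $i$ is determined by its predicate together with its arity-tuple of such arguments, giving $|\mathsf{gtree}^{i}(\delta,\alpha)| \leq |\sch{\dep}| \cdot |T_i|^{\arity{\dep}}$, where $T_i$ denotes this set of terms.

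For the base case $i = 0$, an atom of depth $0$ in $\gtree{\delta,\alpha}$ contains only constants. A simple induction along the tree, relying on the fact that the frontier variables of any guarded TGD inherit their values from the parent guard atom, shows that every such constant must already belong to $\adom{\alpha}$. Since $|\adom{\alpha}| \leq \arity{\dep}$, the depth-$0$ atoms number at most $|\sch{\dep}| \cdot \arity{\dep}^{\arity{\dep}}$, which is at most $||\dep||^{2\arity{\dep}}$. For the inductive step, each new depth-$i$ null has the form $\bot^{z}_{\sigma, h_{|\fr{\sigma}}}$, where $h_{|\fr{\sigma}}$ is a tuple of terms of depth at most $i - 1$. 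Crucially, guardedness forces this tuple to arise as the frontier projection of a specific guard atom $\gamma$ already present in the tree, rather than as an arbitrary combination of depth-$(i-1)$ terms. This coupling ties the growth of $|T_i|$ to the number of tree atoms controlled by the inductive hypothesis, and unwinding the resulting recurrence together with the branching factor (at most $|\atoms{\dep}|$) of the guarded chase tree yields $|\mathsf{gtree}^{i}(\delta,\alpha)| \leq ||\dep||^{2\arity{\dep}(i+1)}$.

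The main obstacle is obtaining a recurrence tight enough to match the exponent $2\arity{\dep}(i+1)$. A naive count of new depth-$i$ nulls by $|\ex{\dep}| \cdot |T_{i-1}|^{\arity{\dep}}$ iterates to a tower of exponentials in $i$, far weaker than what is required. To get the correct exponent, one must exploit guardedness to link each new null to an actual guard atom in the tree (not to arbitrary frontier tuples), and additionally use the structural fact that same-depth chains along tree paths are bounded in length, because the atoms along them live on a set of at most $\arity{\dep}$ distinct terms. Calibrating these ingredients so that the induction closes exactly at $||\dep||^{2\arity{\dep}(i+1)}$ --- in particular, handling the case where the parent of a depth-$i$ atom lies at tree depth larger than $i - 1$ --- is the technical crux.
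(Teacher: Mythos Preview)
Your proposal identifies several correct ingredients but does not actually close the induction, and the stated ``overall strategy'' cannot work as written. Bounding $|\mathsf{gtree}^{i}(\delta,\alpha)|$ by $|\sch{\dep}|\cdot |T_i|^{\arity{\dep}}$, where $T_i$ is the set of \emph{all} terms of depth $\le i$ in the tree, is too weak: $|T_i|$ itself grows with $|\mathsf{gtree}^{i-1}(\delta,\alpha)|$, and plugging that in gives a recurrence of the form $a_i \le c\cdot a_{i-1}^{\arity{\dep}}$, which blows up to a tower in $i$. You recognise this (``a naive count \ldots iterates to a tower''), but the fix you describe---linking nulls to guard atoms and bounding same-depth chains---remains a promise: no recurrence is written down and no calibration is carried out. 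As it stands, the argument is incomplete.

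The paper's proof avoids the global term set $T_i$ entirely. Its key structural observation is sharper than what you state: for \emph{any} atom $\beta$ in the guarded chase tree, the set of terms of $\beta$ that already occur in some ancestor in fact occur \emph{together} in a single ancestor. Using this, the paper splits $\mathsf{gtree}^{i}(\delta,\alpha)$ into the atoms $\mathsf{gtree}^{i}_\exists(\delta,\alpha)$ where a depth-$i$ null is freshly invented, and the rest. Every atom in the latter set has all its terms inside a single atom of $\mathsf{gtree}^{i}_\exists(\delta,\alpha)$, so $|\mathsf{gtree}^{i}| \le |\mathsf{gtree}^{i}_\exists|\cdot |\sch{\dep}|\cdot \arity{\dep}^{\arity{\dep}}$. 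For $\mathsf{gtree}^{i}_\exists$, each such atom is produced by a trigger $(\sigma,h)$ whose frontier image $h(\fr{\sigma})$ lies inside a single ancestor of depth $i-1$, giving $|\mathsf{gtree}^{i}_\exists| \le |\atoms{\dep}|\cdot |\mathsf{gtree}^{i-1}|\cdot \arity{\dep}^{\arity{\dep}}$. Combining yields the \emph{linear} recurrence $|\mathsf{gtree}^{i}| \le |\mathsf{gtree}^{i-1}|\cdot ||\dep||^{2\arity{\dep}}$, which unwinds directly to the bound. Your sketch is missing precisely this decomposition into $\exists$-atoms and non-$\exists$-atoms together with the single-ancestor property; without it, there is no linear recurrence and the induction does not close at the stated exponent.
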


The proof of the above result, which can be found in the appendix,  is by induction on the depth $i \geq 0$.
We can now show the following:

\begin{proposition}\label{pro:generic-bound}
	Consider a database $D$, and a set $\dep \in \class{G} \cap \class{CT}_{D}$ of TGDs. With $d = \mathsf{maxdepth}(D,\dep)$, it holds that
	\[
	|\chase{D}{\dep}|\ \leq\ |D|  \cdot (d+1) \cdot ||\dep||^{2 \cdot \arity{\dep} \cdot (d+1)}.
	\]
\end{proposition}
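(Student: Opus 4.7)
The plan is to start by fixing an arbitrary valid chase derivation $\delta$ of $D$ w.r.t.\ $\dep$, which exists since $\dep \in \class{CT}_D$ (in fact every such derivation is finite by the equivalence $\class{CT}^{\forall}_D = \class{CT}^{\exists}_D$ noted earlier). Its result is $\chase{D}{\dep}$. I would then invoke the structural observation accompanying the definition of the guarded chase forest: every atom in $V = \bigcup_{i \geq 0} I_i = \chase{D}{\dep}$ that is not in $D$ is produced at a unique step $i+1$, and the edge relation assigns it exactly one parent (the image of the guard at that step), while atoms of $D$ have no incoming edges. Hence $\gforest{\delta}$ is genuinely a forest whose trees $\gtree{\delta,\alpha}$ (for $\alpha \in D$) partition $\chase{D}{\dep}$, giving
\[
|\chase{D}{\dep}|\ =\ \sum_{\alpha \in D} |\gtree{\delta,\alpha}|.
\]

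Next I would use the depth bound on nulls. Since $\dep \in \class{CT}_D$, every term of $\chase{D}{\dep}$ has depth at most $d = \mathsf{maxdepth}(D,\dep) < \infty$, and therefore so does every atom. Consequently, for every $\alpha \in D$,
\[
\gtree{\delta,\alpha}\ =\ \bigcup_{i=0}^{d}\, \mathsf{gtree}^{i}(\delta,\alpha),
\]
and applying Lemma~\ref{lem:depth-bound} to each layer yields
\[
|\gtree{\delta,\alpha}|\ \leq\ \sum_{i=0}^{d} ||\dep||^{2 \cdot \arity{\dep} \cdot (i+1)}\ \leq\ (d+1) \cdot ||\dep||^{2 \cdot \arity{\dep} \cdot (d+1)},
\]
where the last inequality bounds the (geometric) sum by $(d+1)$ copies of its largest term (the one for $i=d$).

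Summing over the $|D|$ roots of the forest gives the claimed bound
\[
|\chase{D}{\dep}|\ \leq\ |D| \cdot (d+1) \cdot ||\dep||^{2 \cdot \arity{\dep} \cdot (d+1)}.
\]
The argument is essentially a bookkeeping step on top of Lemma~\ref{lem:depth-bound}, so there is no real obstacle here; the only point that deserves care is justifying that $\gforest{\delta}$ is truly a forest whose node set equals $\chase{D}{\dep}$ (so that the roots indexed by $D$ really do partition the whole chase instance), which follows directly from the fact that each new atom $\beta \in I_{i+1} \setminus I_i$ is added at a single step and is assigned the unique parent $h_i(\guard{\sigma_i})$.
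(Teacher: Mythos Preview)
Your proof is correct and follows essentially the same route as the paper: fix a valid chase derivation, decompose $\chase{D}{\dep}$ as the union of the guarded chase trees $\gtree{\delta,\alpha}$ over $\alpha \in D$, split each tree by atom depth into layers $\mathsf{gtree}^{i}(\delta,\alpha)$ for $i \in \{0,\ldots,d\}$, apply Lemma~\ref{lem:depth-bound} to each layer, and sum. The only cosmetic difference is that you argue the trees actually partition the chase (yielding an equality before bounding), whereas the paper is content with the union and the corresponding inequality.
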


	To prove the above result, consider a valid chase derivation $\delta$ of $D$ w.r.t~$\dep$. It is straightforward to see that, for an atom $\alpha \in D$, 
	\[
	\gtree{\delta,\alpha}\ =\ \bigcup_{i=0}^{d} \mathsf{gtree}^{i}(\delta,\alpha).
	\]
	Therefore, by Lemma~\ref{lem:depth-bound}, 
	$|\gtree{\delta,\alpha}| \leq (d+1) \cdot ||\dep||^{2 \cdot \arity{\dep} \cdot (d+1)}$. 
	It is clear that $\chase{D}{\dep}$ coincides with $\gforest{\delta}$, and thus,
	\[
	\chase{D}{\dep}\ =\ \bigcup_{\alpha \in D} \gtree{\delta,\alpha}.
	\] 
	Hence, $|\chase{D}{\dep}| \leq |D|  \cdot (d+1) \cdot ||\dep||^{2 \cdot \arity{\dep} \cdot (d+1)}$, as needed.

%
The way that Proposition~\ref{pro:generic-bound} is stated gives the impression that the provided upper bound on $|\chase{D}{\dep}|$ is at least exponential w.r.t.~$D$, since the depth of a null may depend on $D$, which is of course undesirable. Interestingly, as we shall see in the next sections, the depth of a term is actually bounded by an integer that depends only on the set of TGDs, and thus, $|\chase{D}{\dep}|$ is always linear in $|D|$, for all the classes of TGDs considered in this work. In particular, we will see that, for each class of TGDs $\class{C} \in \{\class{SL},\class{L},\class{G}\}$, given a database $D$ and a set $\dep \in \class{C} \cap \class{CT}_D$,
$
\mathsf{maxdepth}(D,\dep) \leq \mathsf{d}_{\class{C}}(\dep),
$
where the function $\mathsf{d}_{\class{C}} : \class{C} \ra \mathbb{N}$ is defined as follows:
\begin{eqnarray*}
	\mathsf{d}_{\class{SL}}(\dep) &=& |\sch{\dep}| \cdot \arity{\dep}\\
	\mathsf{d}_{\class{L}}(\dep) &=& |\sch{\dep}| \cdot \arity{\dep}^{\arity{\dep}+1}\\
	\mathsf{d}_{\class{G}}(\dep) &=& |\sch{\dep}| \cdot \arity{\dep}^{2 \cdot \arity{\dep}+1} \cdot 2^{|\sch{\dep}| \cdot \arity{\dep}^{\arity{\dep}}}.
\end{eqnarray*}

\OMIT{
\medskip

It is clear that, if $\chase{D}{\dep}$ is finite, then there exists an integer $k_{D,\dep} \geq 0$ such that, for each $t \in \adom{\chase{D}{\dep}}$, $\depth{t} \leq k_{D,\dep}$.
We are now ready to provide the generic bound on the number of atoms occurring in $\chase{D}{\dep}$ claimed above.
Let $\mathsf{chsize}(\cdot,\cdot)$ be the function from pairs consisting of a database and a set of TGDs to the natural numbers defined as expected, that is, $\mathsf{chsize}(D,\dep) = |\chase{D}{\dep}|$.

\begin{proposition}\label{pro:generic-bound}
	Consider a database $D$, and a set $\dep \in \class{G}$ of TGDs. If $\dep \in \class{CT}_{D}$, then
	\[
	\chasesize{D}{\dep}\ \in\ O\left(|D|  \cdot ||\dep||^{2 \cdot \arity{\dep} \cdot (d+1)} \cdot d\right),
	\]
	where $d = \max_{t \in \adom{\chase{D}{\dep}}} \{\depth{t}\}$.
\end{proposition}

\begin{proposition}\label{pro:generic-bound}
	Consider a database $D$, and a set $\dep \in \class{G}$ of TGDs. If $\dep \in \class{CT}_{D}$, then
	\[
	\chasesize{D}{\dep}\ \in\ |D|  \cdot ||\dep||^{O(\arity{\dep} \cdot d)} \cdot (d+1),
	\]
	where $d = \max_{t \in \adom{\chase{D}{\dep}}} \{\depth{t}\}$.
\end{proposition}

\begin{proposition}\label{pro:generic-bound}
	Consider a database $D$, and a set $\dep \in \class{G}$ of TGDs. If $\dep \in \class{CT}_{D}$, then
	\[
	\chasesize{D}{\dep}\ \in\ |D|  \cdot ||\dep||^{O(d)} \cdot \arity{\dep}^{O(\arity{\dep} \cdot d)} \cdot (d+1),
	\]
	where $d = \max_{t \in \adom{\chase{D}{\dep}}} \{\depth{t}\}$.
\end{proposition}

Note that the above result, whose proof can be found in the appendix, heavily relies on guardedness, and it does not hold for arbitrary sets of TGDs.
The way that Proposition~\ref{pro:generic-bound} is stated, it gives the impression that the number of atoms in $\chase{D}{\dep}$, whenever $\dep \in \class{G} \cap \class{CT}_D$, is at least exponential in the database $D$, since the depth of a null may depend on $D$, which is of course an undesirable outcome. Interestingly, as we shall see in the next sections, due to guardedness, the depth of a term depends only on the set of TGDs, which in turn implies that $|\chase{D}{\dep}|$ is always polynomial in $|D|$.
}
\section{Simple Linear TGDs}\label{sec:simple-linear}

We now concentrate on the class of simple linear TGDs, the easier class to analyse among the classes of TGDs considered in this work, and provide a characterization of non-uniform chase termination as the one described in Section~\ref{sec:chase-procedure}, together with a matching lower bound for the size of the chase instance. We then exploit this characterization to pinpoint the complexity of $\mathsf{ChTrm}(\class{SL})$.

We know from~\cite{CaGP15} that in the case of simple linear TGDs, uniform chase termination can be characterized via weak-acyclicity. Recall that weak-acyclicity was introduced in~\cite{FKMP05} as the main language for data exchange purposes, which guarantees the finiteness of the chase for {\em every} input database. 
Unsurprisingly, we employ a non-uniform version of weak-acyclicity, that is, a database-dependent variant of weak-acyclicity, for characterizing non-uniform chase termination. Let us clarify, however, that since we want to provide a worst-case optimal upper bound on the size of the chase instance, we need to rely on a more refined analysis than the one performed in~\cite{CaGP15}; the size of the chase instance was not considered in~\cite{CaGP15}.

\medskip

\noindent
\textbf{Non-Uniform Weak-Acyclicity.}
We first need to recall the notion of the {\em dependency graph} of a set $\dep$ of TGDs, 
defined as a directed multigraph $\depg{\dep}=(N,E)$, where $N = \pos{\sch{\dep}}$, and $E$ contains {\em only} the following edges.
For each TGD $\sigma \in \dep$ with $\head{\sigma} = \{\alpha_1,\ldots,\alpha_k\}$, for each $x \in \frontier{\sigma}$, and for each position $\pi \in \posvar{\body{\sigma}}{x}$:
\begin{itemize}
		\item[-] For each $i \in [k]$, and for each $\pi' \in \posvar{\alpha_i}{x}$, there exists a \emph{normal} edge $(\pi,\pi') \in E$.
		\item[-] For each existentially quantified variable $z$ in $\sigma$, $i \in [k]$, and $\pi' \in \posvar{\alpha_i}{z}$, there is a \emph{special} edge $(\pi,\pi') \in E$.
\end{itemize}

We further need to define when a predicate is reachable from another predicate. 
Given predicates $R,P \in \sch{\dep}$, we write $R \ra_\dep P$  if $R = P$, or there exists a TGD $\sigma \in \dep$ such that $R$ occurs in $\body{\sigma}$ and $P$ occurs in $\head{\sigma}$. We say that {\em $P$ is reachable from $R$ (w.r.t.~$\dep$)}, denoted $R \reach{\dep} P$, if (i) $R \ra_\dep P$, or (ii) there exists $T \in \sch{\dep}$ such that $R \reach{\dep} T$ and $T \ra_\dep P$.
Given a database $D$, we say that a (not necessarily simple, and possibly cyclic) path $C$ in $\depg{\dep}$ is \emph{$D$-supported} if there is $R(\bar t) \in D$ and a node $(P,i)$ in $C$ with $R \reach{\dep} P$.
%

\begin{definition}\label{def:dwa}
	Consider a database $D$, and a set $\dep$ of TGDs. We say that $\dep$ is {\em weakly-acyclic w.r.t.~$D$}, or simply {\em $D$-weakly-acyclic}, if there is no $D$-supported cycle in $\depg{\dep}$ with a special edge. \hfill\markfull
\end{definition}

The key properties of non-uniform weak-acyclicity that are crucial for our analysis are summarized by the following two technical lemmas; the proofs are in the appendix.
The first one, which holds for arbitrary TGDs, provides a database-independent upper bound on the depth of terms occurring in $\chase{D}{\dep}$ via $\mathsf{d}_{\class{SL}} : \class{SL} \ra \mathbb{N}$ defined in Section~\ref{sec:generic-bound}; recall that $\mathsf{d}_{\class{SL}}(\dep) = |\sch{\dep}| \cdot \arity{\dep}$.

\begin{lemma}\label{lem:depth-bound-sl}
	Consider a database $D$, and a set $\dep$ of TGDs that is $D$-weakly-acyclic. It holds that $\mathsf{maxdepth}(D,\dep) \leq \mathsf{d}_{\class{SL}}(\dep)$.
\end{lemma}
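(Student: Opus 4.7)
The plan is to prove the depth bound by a trace argument in the dependency graph $\depg{\dep}$, combined with the structural consequences of $D$-weak-acyclicity. Fix a term $t \in \adom{\chase{D}{\dep}}$ of depth $k = \depth{t}$ occurring at some position $\pi$ of an atom of $\chase{D}{\dep}$; I aim to show $k \leq |\sch{\dep}| \cdot \arity{\dep} = \mathsf{d}_{\class{SL}}(\dep)$.

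First, I will construct, by reverse induction along the chase derivation, a walk $W = \pi_0, \pi_1, \ldots, \pi_m = \pi$ in $\depg{\dep}$ together with a sequence $t_0, t_1, \ldots, t_m = t$ of tracked terms such that each $t_j$ occurs at $\pi_j$ in the chase, and every transition is justified by one chase event. If the occurrence of $t_{j+1}$ at $\pi_{j+1}$ was produced by a trigger $(\sigma, h)$ whose head variable at $\pi_{j+1}$ is a frontier variable $x$, then I take $(\pi_j, \pi_{j+1})$ to be the normal edge from the body position of $x$, set $t_j = t_{j+1}$, and keep the tracked depth unchanged; if the head variable at $\pi_{j+1}$ is an existential $z$ (so $t_{j+1}$ is a fresh null $\bot^{z}_{\sigma, h_{|\fr{\sigma}}}$), then I take $(\pi_j, \pi_{j+1})$ to be the special edge from the body position of a frontier variable $x^{\ast}$ realising the maximum-depth frontier image, set $t_j = h(x^{\ast})$, and the tracked depth drops by exactly one. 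The recursion halts when $t_j$ is either a database constant ($\depth{t_j} = 0$) or a null created by a frontier-less rule ($\depth{t_j} = 1$), so that $\depth{t_0} \leq 1$ and, by construction, the number of special edges along $W$ equals $k - \depth{t_0}$.

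Next, I will verify that $W$ lies in the subgraph $G^D$ of $\depg{\dep}$ induced by the $D$-supported positions. The starting node $\pi_0$ is $D$-supported: if $t_0$ is a database constant then $\pi_0 = (R_0, i_0)$ with $R_0(\bar{t}) \in D$; and if $t_0$ is a depth-$1$ null created by a frontier-less rule $\sigma'$, then $\pi_0$ is a head position of $\sigma'$ whose body atom (matched somewhere in the chase) has a predicate reachable from $\sch{D}$, so $\pi_0$'s predicate is reachable too. Since every edge of $\depg{\dep}$ from $(P, i)$ to $(P', i')$ comes from a TGD with $P$ in the body and $P'$ in the head, so that $P \ra_\dep P'$, $D$-supportedness is preserved forward along $W$. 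I now apply $D$-weak-acyclicity to $G^D$: no cycle of $G^D$ contains a special edge, so every special edge of $G^D$ connects two distinct strongly connected components, and the condensation of $G^D$ is a DAG that cannot be traversed back to a previously visited SCC. Hence the number of special edges along $W$ is at most (number of SCCs of $G^D$) $- 1 \leq |\pos{\sch{\dep}}| - 1 = |\sch{\dep}| \cdot \arity{\dep} - 1$. Combined with $\depth{t_0} \leq 1$, this gives $k \leq \mathsf{d}_{\class{SL}}(\dep)$, and taking the maximum over $t$ proves the lemma.

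The main technical obstacle will be the bookkeeping of the trace walk: I must check that at every step the currently tracked term indeed occupies the claimed body position, so that the recursion can continue, and that the edge-counting invariant ``tracked depth changes by $0$ along normal edges and by $+1$ along special edges'' holds exactly across both propagation and creation steps, including the frontier-less corner case where the trace bottoms out at a depth-$1$ null rather than a depth-$0$ database constant. This is handled by unfolding the definition of $\result{\sigma}{h}$ and using the semi-oblivious invariant that each null first appears at the head positions of its creating rule and subsequently reaches new positions only through rule applications that carry it across normal edges of $\depg{\dep}$.
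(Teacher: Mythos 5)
Your proof is correct and gives the paper's bound, but via a different decomposition. The paper works with a global potential on positions: it defines the rank of a node of $\depg{\dep}$ as the maximum number of special edges over incoming paths, uses $D$-weak-acyclicity to show that infinite-rank positions host no chase terms at all, and proves by induction on rank that a term occurring at a position of finite rank $i$ has depth at most $i$, finishing with the observation that finite ranks are bounded by $|\pos{\sch{\dep}}| \leq |\sch{\dep}| \cdot \arity{\dep}$. You instead argue per term occurrence: the backward trace through the derivation produces a walk in $\depg{\dep}$ whose special-edge count is $\depth{t}$ minus $0$ or $1$, and $D$-weak-acyclicity enters through the condensation of the $D$-supported subgraph, where every special edge must cross between distinct strongly connected components, capping the count by the number of positions minus one. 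Both arguments rest on the same two facts (only special edges increment depth, and weak-acyclicity relative to $D$ excludes special edges from cycles among positions that can actually host chase terms), but your trace handles propagation of terms more transparently than the paper's rank induction. The one piece of bookkeeping to tighten is the $D$-supportedness of the walk: justify it uniformly by the fact, provable by induction on the derivation and which you already invoke in the frontier-less case, that every predicate occurring in $\chase{D}{\dep}$ is $\reach{\dep}$-reachable from a predicate of $D$; this also covers the constant case, where strictly speaking your trace should continue until it reaches a database atom (as it does automatically, since TGDs are constant-free) rather than stop as soon as the tracked term is a constant.
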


What is more interesting is the fact that whenever there exists a bound on the depth of the terms occurring in a chase instance $\chase{D}{\dep}$, for a database $D$ and set $\dep \in \class{SL}$, then $\dep$ is necessarily $D$-weakly-acyclic. For this property, simple linearity is crucial.

\begin{lemma}\label{lem:bounded-depth-implies-wa}
	Consider a database $D$, and a set $\dep \in \class{SL}$.
	If there is $k \geq 0$ such that $\mathsf{maxdepth}(D,\dep) \leq k$, then $\dep$ is $D$-weakly-acyclic.
\end{lemma}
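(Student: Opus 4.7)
I would prove the contrapositive: assuming $\dep$ is not $D$-weakly-acyclic, I exhibit terms of arbitrarily large depth in $\chase{D}{\dep}$. By Definition~\ref{def:dwa}, there is a $D$-supported cycle $C = (P_0,i_0) \ra (P_1,i_1) \ra \cdots \ra (P_\ell,i_\ell) = (P_0,i_0)$ in $\depg{\dep}$ containing at least one special edge; for each $j \in [\ell]$, let $\sigma_j \in \dep$ be the TGD witnessing the edge from $(P_{j-1},i_{j-1})$ to $(P_j,i_j)$. Let $R(\bar t) \in D$ with $R \reach{\dep} P_0$ be the support witness for $C$.

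First, using $R \reach{\dep} P_0$ I obtain, by iterating a chain of simple linear TGDs starting from $R(\bar t)$, some atom $\alpha_0 \in \chase{D}{\dep}$ with predicate $P_0$. Here simple linearity is convenient: each intermediate TGD has a single body atom with pairwise distinct variables, so any atom of the appropriate predicate matches the body and the chain executes unconditionally.

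Second, the crux is to show that starting from $\alpha_0$ and applying $\sigma_1, \ldots, \sigma_\ell$ in order yields an atom $\alpha_1 \in \chase{D}{\dep}$ with predicate $P_0$ whose term at position $i_0$ has strictly larger depth than that of $\alpha_0$. Again by simple linearity, each $\sigma_j$ is applicable to the atom produced at the previous step, since $\body{\sigma_j}$ is a single atom of predicate $P_{j-1}$ with pairwise distinct variables. I track the term at the ``active'' position $i_{j-1}$ through the cycle: across a normal edge, the frontier variable witnessing the edge occurs both at body-position $i_{j-1}$ and head-position $i_j$, so the term is copied and its depth is preserved; across a special edge, head-position $i_j$ carries an existential variable $z$, so by Definition~\ref{def:term-depth} the null placed there has depth $\ge 1 + \depth{h(x)}$, where $x$ is the frontier variable at body-position $i_{j-1}$. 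Since $C$ contains at least one special edge, one round-trip strictly increases the tracked depth by at least $1$.

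Iterating the cycle $k$ times produces an atom in $\chase{D}{\dep}$ with a term of depth at least $k$ at position $i_0$, so $\mathsf{maxdepth}(D,\dep) = \infty$, contradicting the hypothesis. The main subtlety I expect is the bookkeeping in the second step: to argue that each successive application of $\sigma_j$ is genuinely an active trigger on a fresh atom carrying the tracked term, and that normal edges truly copy the tracked term. Both facts follow from the simple-linear assumption (single body atom, no repeated variables) together with the precise definitions of normal and special edges in $\depg{\dep}$, so no deeper technology is needed beyond careful case analysis.
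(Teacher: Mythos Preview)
Your proposal is correct and follows essentially the same approach as the paper: prove the contrapositive by taking a $D$-supported cycle with a special edge, use simple linearity to reach an initial atom with the right predicate and then to fire every TGD along the cycle unconditionally, and track the term at the active position to show each traversal of the cycle strictly increases its depth. The only minor remark is that you need not worry about triggers being \emph{active}: since $\chase{D}{\dep}$ satisfies $\dep$, whenever the body of $\sigma_j$ maps into it the corresponding $\result{\sigma_j}{h}$ is already contained in $\chase{D}{\dep}$, which is all you need.
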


As discussed above, (uniform) weak-acyclicity has been used in~\cite{CaGP15} to characterize uniform chase termination in the case of simple linear TGDs. However,~\cite{CaGP15} did not establish properties analogous to Lemmas~\ref{lem:depth-bound-sl} and~\ref{lem:bounded-depth-implies-wa} for (uniform) weak-acyclicity since estimating the size of the chase was not part of the investigation.

\medskip

\noindent
\textbf{Characterizing Non-Uniform Termination.}
We are now ready to establish the desired characterization of non-uniform chase termination in the case of simple linear TGDs. Let $f_\class{SL}$ be the function from $\class{SL}$ to the natural numbers $\mathbb{N}$ defined as follows:
\[
f_\class{SL}(\dep)\ =\ \left(\mathsf{d}_{\class{SL}}(\dep)+1\right) \cdot ||\dep||^{2 \cdot \arity{\dep} \cdot (\mathsf{d}_{\class{SL}}(\dep)+1)}.
\]
We can then show the following result:

\begin{theorem}\label{thm:characterization-simple-linear}
	Consider a database $D$, and a set $\dep \in \class{SL}$ of TGDs. The following are equivalent:
	\begin{enumerate}
		\item $\dep \in \class{CT}_D$.
		\item $|\chase{D}{\dep}|\ \leq\ |D|  \cdot f_\class{SL}(\dep)$.
		\item $\dep$ is $D$-weakly-acyclic.
	\end{enumerate}
\end{theorem}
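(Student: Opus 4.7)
The plan is to close the cycle of implications $(2) \Rightarrow (1) \Rightarrow (3) \Rightarrow (2)$, using essentially nothing beyond three facts already in hand: the generic size bound of Proposition~\ref{pro:generic-bound}, the depth bound from weak-acyclicity of Lemma~\ref{lem:depth-bound-sl}, and its converse for simple linear TGDs of Lemma~\ref{lem:bounded-depth-implies-wa}. All the real work sits inside these three results, so the theorem itself should reduce to a short composition of their statements.

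The implication $(2) \Rightarrow (1)$ is trivial: $f_\class{SL}(\dep)$ is a finite natural number, so a bound of the form $|\chase{D}{\dep}| \le |D| \cdot f_\class{SL}(\dep)$ immediately forces $\chase{D}{\dep}$ to be finite, that is, $\dep \in \class{CT}_D$. For $(1) \Rightarrow (3)$, I would assume $\dep \in \class{CT}_D$; then $\chase{D}{\dep}$ is finite and $\mathsf{maxdepth}(D,\dep)$ equals some natural number $k$, so Lemma~\ref{lem:bounded-depth-implies-wa}---the place where simple linearity is genuinely needed---yields that $\dep$ is $D$-weakly-acyclic.

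The one step carrying actual content is $(3) \Rightarrow (2)$. Assuming $\dep$ is $D$-weakly-acyclic, Lemma~\ref{lem:depth-bound-sl} hands us the finite bound $\mathsf{maxdepth}(D,\dep) \le \mathsf{d}_\class{SL}(\dep)$. By the convention of Definition~\ref{def:term-depth}---where $\mathsf{maxdepth}(D,\dep)$ is set to $\infty$ whenever $\chase{D}{\dep}$ is infinite---this finite bound entails $\dep \in \class{CT}_D$, so Proposition~\ref{pro:generic-bound} applies and yields
\[ |\chase{D}{\dep}| \ \le\ |D| \cdot (d+1) \cdot ||\dep||^{2 \cdot \arity{\dep} \cdot (d+1)}, \]
where $d = \mathsf{maxdepth}(D,\dep) \le \mathsf{d}_\class{SL}(\dep)$. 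Since the right-hand side is monotone non-decreasing in $d$ (using the harmless $||\dep|| \ge 1$, which holds whenever $\dep$ is non-empty, and is vacuous otherwise), replacing $d$ by $\mathsf{d}_\class{SL}(\dep)$ gives exactly $|D| \cdot f_\class{SL}(\dep)$, as desired.

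I do not anticipate any obstacle in the proof of the theorem itself: the task is assembling existing pieces. The serious difficulty lies upstream, in the preparatory lemmas of Section~\ref{sec:simple-linear}: Lemma~\ref{lem:depth-bound-sl} requires an argument that turns the absence of $D$-supported special cycles in $\depg{\dep}$ into a polynomial depth bound, while Lemma~\ref{lem:bounded-depth-implies-wa} requires the converse, and it is here that simple linearity is essential---one must extract a $D$-supported cyclic dependency from an unbounded cascade of null creations, which is only tractable when bodies have a single atom with no repeated variables.
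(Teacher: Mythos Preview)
Your proposal is correct and follows essentially the same route as the paper: $(2)\Rightarrow(1)$ trivially, $(1)\Rightarrow(3)$ via Lemma~\ref{lem:bounded-depth-implies-wa}, and $(3)\Rightarrow(2)$ by combining Lemma~\ref{lem:depth-bound-sl} with Proposition~\ref{pro:generic-bound}. You are in fact slightly more careful than the paper's terse proof, explicitly noting that the finite depth bound forces $\dep\in\class{CT}_D$ (so that the hypothesis of Proposition~\ref{pro:generic-bound} is met) and spelling out the monotonicity step that lets you replace $d$ by $\mathsf{d}_{\class{SL}}(\dep)$.
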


\begin{proof}
The interesting directions are $(1) \Rightarrow (3)$ and $(3) \Rightarrow (2)$; $(2) \Rightarrow (1)$ is trivial.
For $(1) \Rightarrow (3)$, observe that if $\dep \in \class{CT}_D$, which means that $\chase{D}{\dep}$ is finite, then there is $k \geq 0$ that bounds $\mathsf{maxdepth}(D,\dep)$, and $(3)$ follows by Lemma~\ref{lem:bounded-depth-implies-wa}.
The statement $(3) \Rightarrow (2)$ follows from Proposition~\ref{pro:generic-bound} and Lemma~\ref{lem:depth-bound-sl}.
\end{proof}

We complement the above characterization with a result that essentially states the following: the size of the chase instance is unavoidably exponential in the arity and the number of predicates of the underlying schema; the proof can be found in the appendix.

\begin{theorem}\label{the:lower-bound-sl}
	There exists a family of databases $\{D_{\ell}\}_{\ell > 0}$ with $\ell = |D_{\ell}|$, and a family of sets of TGDs $\{\dep_{n,m} \in \class{SL} \cap \class{CT}_{D_\ell}\}_{n,m>0}$ with $n = |\sch{\dep_{n,m}}|-1$ and $m = \arity{\dep_{n,m}}$, such that
	\[
	|\chase{D_{\ell}}{\dep_{n,m}}|\ \geq\ \ell \cdot m^{n \cdot m}.
	\]
\end{theorem}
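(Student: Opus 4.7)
The plan is to construct, for every $n,m>0$, a schema $\ins{S}_{n,m} = \{R_0, R_1, \ldots, R_n\}$ of $n+1$ predicates each of arity $m$, together with the database $D_\ell = \{R_0(c_{k,1}, \ldots, c_{k,m}) : k \in [\ell]\}$ whose $\ell m$ constants are pairwise distinct. For each $i \in \{0, \ldots, n-1\}$, each $j \in [m]$, and each function $f : [m] \setminus \{j\} \to [m]$, I include in $\dep_{n,m}$ the rule
\[
\sigma_{i,j,f}\,:\ R_i(x_1, \ldots, x_m)\ \ra\ \exists z\, R_{i+1}(\tau_1, \ldots, \tau_m),\, R_{i+1}(x_1, \ldots, x_m),
\]
where $\tau_j = z$ and $\tau_l = x_{f(l)}$ for $l \neq j$. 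The second ``copy'' head-atom $R_{i+1}(x_1, \ldots, x_m)$ is included solely to put every body variable into the frontier, so that $\fr{\sigma_{i,j,f}} = \{x_1, \ldots, x_m\}$. Membership $\dep_{n,m} \in \class{SL}$ is immediate since each body is a single atom with pairwise distinct variables. The dependency graph of $\dep_{n,m}$ carries edges only from positions of $R_i$ to positions of $R_{i+1}$, so it is acyclic and in particular $D_\ell$-weakly-acyclic; Theorem~\ref{thm:characterization-simple-linear} then yields $\dep_{n,m} \in \class{CT}_{D_\ell}$.

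I would prove by induction on $i$ that every $R_i$-atom $\beta$ appearing in the chase (for $0 \leq i < n$) has exactly $m^m + 1$ distinct $R_{i+1}$-children. The $m \cdot m^{m-1} = m^m$ triggers $(\sigma_{i,j,f}, h)$ on $\beta$ produce $m^m$ pairwise distinct ``first'' head-atoms: triggers with $j \neq j'$ place the fresh null at different positions of the output atom, while triggers with $j = j'$ but $f \neq f'$ yield nulls $\bot^z_{\sigma_{i,j,f},\, h_{|\fr{\sigma_{i,j,f}}}}$ that differ syntactically in their first subscript. The ``copy'' head-atom $R_{i+1}(\beta[1], \ldots, \beta[m])$ is added only once per parent and is distinct from each first head-atom because the latter carries a fresh null whereas the former reuses only pre-existing terms. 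Because the frontier is the full body, distinct parents $\beta_1 \neq \beta_2$ give rise to distinct restrictions $h_{1|\fr{\sigma}} \neq h_{2|\fr{\sigma}}$ and hence to distinct first-atom nulls, while the copy atoms $R_{i+1}(\beta_k[1],\ldots,\beta_k[m])$ themselves already distinguish different parents; together with the pairwise distinct constants in $D_\ell$, this rules out collisions even across the $\ell$ different root-subtrees, which are therefore disjoint. Iterating, level $n$ of the chase contains at least $(m^m + 1)^n \geq m^{nm}$ atoms per root, so $|\chase{D_\ell}{\dep_{n,m}}| \geq \ell \cdot m^{nm}$.

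The main obstacle is the no-collision analysis of the previous paragraph: one must carefully verify that the syntactic identifier $\bot^z_{\sigma, h_{|\fr{\sigma}}}$ truly records both the rule $\sigma$ and the parent (via its frontier-restriction), and that the copy-atom trick genuinely forces a full frontier. Without that trick, two distinct parents agreeing on the image of some non-injective $f$ would generate identical nulls and thus collapse the effective branching factor below $m^m$, causing the lower bound to fail.
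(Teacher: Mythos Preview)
Your construction is correct and proves the stated lower bound, but it differs from the paper's in an interesting way.

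The paper takes $D_\ell = \{R_0(c_1),\ldots,R_0(c_\ell)\}$ with $R_0$ \emph{unary}, and at each level $i$ first fires a single existential TGD
\[
R_i(\bar x)\ \ra\ \exists z_1 \cdots \exists z_m\, R_i(\bar x),\, R_{i+1}(z_1,\ldots,z_m)
\]
to introduce $m$ fresh nulls per $R_i$-atom, and then uses $2m$ \emph{full} TGDs (transpositions $R_i(\ldots,x_1,\ldots,x_j,\ldots)\ra R_i(\ldots,x_j,\ldots,x_1,\ldots)$ together with copy rules overwriting position~$1$) to spread that single $R_{i+1}$-atom into all $m^m$ tuples over those $m$ nulls. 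This yields the same $m^{n\cdot m}$ branching with only $O(nm)$ TGDs overall.

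Your route instead encodes the branching directly in the TGD set: you write down $m\cdot m^{m-1}=m^m$ distinct existential rules per level, each manufacturing one child, and you force a full frontier via the copy head-atom so that the semi-oblivious null name $\bot^{z}_{\sigma_{i,j,f},\,h_{|\fr{\sigma}}}$ records the entire parent tuple. Your no-collision analysis is sound: distinct $(j,f)$ give distinct nulls via the $\sigma$-subscript, distinct parents give distinct nulls via the frontier restriction, and a fresh level-$i$ null cannot occur in any $R_i$-atom because all edges in the dependency graph point strictly forward.

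The trade-off is that your $\dep_{n,m}$ has $n\cdot m^{m}$ TGDs whereas the paper's has $O(nm)$. Since the theorem constrains only $|\sch{\dep_{n,m}}|$ and $\arity{\dep_{n,m}}$, not $|\dep_{n,m}|$, your version satisfies the statement; but the paper's construction is sharper in that it shows the bound is already attained by TGD sets of polynomial size in $n,m$, which matters if one cares about $||\dep||$ as a parameter (cf.\ the upper bound in Proposition~\ref{pro:generic-bound}).
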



\noindent
\textbf{Complexity.} We now study the complexity of $\mathsf{ChTrm}(\class{SL})$. Observe that the naive approach, which relies on item (2) of Theorem~\ref{thm:characterization-simple-linear},  that explicitly constructs the chase instance, shows that $\mathsf{ChTrm}(\class{SL})$ is in \textsc{ExpTime} (even if we bound the arity, we still get \textsc{ExpTime}), and in \textsc{PTime} in data complexity, i.e., when the set of TGDs is considered fixed. 
Notice also that Theorem~\ref{the:lower-bound-sl} tells us that this is the best that we can achieve via the naive approach since there is no way to lower the upper bound on the size of the chase instance provided by Theorem~\ref{thm:characterization-simple-linear}. 
It turns out that the exact complexity of $\mathsf{ChTrm}(\class{SL})$ is significantly lower than what we get via the the naive approach.

\begin{theorem}\label{the:complexity-sl}
	$\mathsf{ChTrm}(\class{SL})$ is \textsc{NL}-complete, even for schemas with unary and binary predicates, and in \textsc{AC}$_0$ in data complexity.
\end{theorem}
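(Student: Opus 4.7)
The plan is to leverage Theorem~\ref{thm:characterization-simple-linear}, which reduces $\mathsf{ChTrm}(\class{SL})$ to the combinatorial task of deciding whether a given $\dep \in \class{SL}$ is $D$-weakly-acyclic, i.e., whether $\depg{\dep}$ contains no $D$-supported cycle traversing a special edge. Since $\depg{\dep}$ has size polynomial in $||\dep||$ and is logspace-constructible from $\dep$, the problem becomes a reachability-style task on this graph.

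For the \textsc{NL} upper bound, I would show that the complement is in \textsc{NL} and then invoke Immerman--Szelepcs\'enyi. The complement asks for a bad cycle: nondeterministically guess a starting node $(P_0,i_0)$ of $\depg{\dep}$, traverse edges step by step (each transition being a small local check against the TGDs of $\dep$), maintain a single bit indicating whether a special edge has already been used, and accept upon returning to $(P_0,i_0)$ with the bit set. In parallel, nondeterministically pick a position $(P,i)$ visited on the cycle together with a fact $R(\bar t) \in D$, and verify $R \reach{\dep} P$ via a nested \textsc{NL}-reachability test on the $\ra_\dep$-graph over $\sch{\dep}$. All of this fits in logarithmic space.

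For \textsc{NL}-hardness, I would reduce from directed $s$--$t$ non-reachability, which is \textsc{coNL}-complete and hence \textsc{NL}-complete. Given $G=(V,E)$ with $s,t \in V$, use a binary predicate $R_v$ for each $v \in V$, take $D=\{R_s(a,b)\}$, and define $\dep$ to consist of the TGDs $R_u(x,y)\to R_v(x,y)$ for every $(u,v) \in E$, together with one closing TGD $\sigma^\star : R_t(x,y) \to \exists z\, R_s(y,z)$. The only special edge of $\depg{\dep}$ is $((R_t,2),(R_s,2))$, contributed by $\sigma^\star$; any cycle using it must first travel from $(R_s,2)$ to $(R_t,2)$ along the second-coordinate normal edges $((R_u,2),(R_v,2))$, which correspond exactly to a directed path $s \rightarrow \cdots \rightarrow t$ in $G$. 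Any such cycle is automatically $D$-supported because $R_s$ appears in $D$ and $R_s \reach{\dep} R_s$. Hence $\dep \in \class{CT}_D$ iff $s$ does not reach $t$ in $G$; the resulting schema uses only binary predicates and $\dep \in \class{SL}$, matching the stated restriction.

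For the \textsc{AC}$_0$ data complexity bound, fix $\dep$. Then $\depg{\dep}$, its finite collection $\mathcal{C}$ of cycles containing a special edge, and, for each $C \in \mathcal{C}$, the set $S_C = \{R \in \sch{\dep} \mid \exists (P,i) \in C,\ R \reach{\dep} P\}$ are all independent of the input database. The characterization then becomes the fixed first-order sentence $\bigwedge_{C \in \mathcal{C}} \bigwedge_{R \in S_C} \forall \bar x\, \neg R(\bar x)$ evaluated over $D$, which is in \textsc{AC}$_0$. The main obstacle I foresee is orchestrating the on-the-fly cycle guess, the special-edge flag, and the $D$-support witness check simultaneously within logarithmic space for the combined-complexity upper bound; however, each component is a standard \textsc{NL}-style subroutine and they compose cleanly.
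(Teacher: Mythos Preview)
Your proposal is correct and, for the upper bounds, essentially mirrors the paper's argument: the paper also reduces to checking $D$-weak-acyclicity via Theorem~\ref{thm:characterization-simple-linear}, runs the same two nested \textsc{NL} reachability routines (on $\depg{\dep}$ for the special-edge cycle and on the predicate graph for $D$-support), and for data complexity builds a fixed UCQ $Q_\dep$ over the set $P_\dep=\{R\mid \exists (P,i)\text{ on a bad cycle with }R\reach{\dep}P\}$---which is exactly the negation of your FO sentence once one observes that $\bigcup_{C\in\mathcal C}S_C=P_\dep$. One small point to tighten: your ``finite collection $\mathcal C$ of cycles'' is only literally finite if you mean simple cycles, and then you implicitly use that a $D$-supported bad cycle always yields a $D$-supported \emph{simple} bad cycle; this holds because every edge $(P,i)\to(Q,j)$ in $\depg{\dep}$ entails $P\reach{\dep}Q$, so $D$-support transfers along the cycle to whichever simple sub-cycle carries the special edge---but it is worth saying so. For \textsc{NL}-hardness the paper simply invokes~\cite{CaGP15}, whereas you supply a clean self-contained reduction from $s$--$t$ non-reachability using only binary predicates; your construction is correct (the unique special edge is $((R_t,2),(R_s,2))$, position-$1$ nodes form a sink component, and every special-edge cycle passes through $(R_s,2)$ and is thus $D$-supported), so this is a nice alternative that makes the lower bound explicit rather than inherited.
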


The finer procedures that lead to Theorem~\ref{the:complexity-sl} rely on item (3) of Theorem~\ref{thm:characterization-simple-linear}. In particular, we know that $\dep \in \class{CT}_D$ iff $\dep$ is $D$-weakly-acyclic.
Thus, it suffices to show that, given a database $D$ and a set $\dep \in \class{SL}$ of TGDs, the problem of deciding whether $\dep$ is {\em not} $D$-weakly-acyclic is in \textsc{NL} in general, and in \textsc{AC}$_0$ when $\dep$ is fixed.
The former is shown via a nondeterministic algorithm that performs a reachability check on the dependency graph of $\dep$, while the latter is shown by constructing a union of conjunctive queries $Q_\dep$ (which depends only on $\dep$) such that $\dep$ is not $D$-weakly-acyclic iff $D$ satisfies $Q_\dep$.
The details can be found in the appendix.

The  \textsc{NL}-hardness of $\mathsf{ChTrm}(\class{SL})$, even for schemas with unary and binary predicates, is inherited from~\cite{CaGP15}, where uniform chase termination is studied. In particular,~\cite{CaGP15} shows that, given a set $\dep \in \class{SL}$ such that $\sch{\dep}$ consists of unary and binary predicates, the following problem is \textsc{NL}-hard: with $D_\dep = \{P(c) \mid P/1 \in \sch{\dep}\} \cup \{R(c,c) \mid R/2 \in \sch{\dep}\}$, is it the case that $\chase{D_\dep}{\dep}$ is finite?
\section{Linear TGDs}\label{sec:linear}

We now concentrate on the class of linear TGDs, and provide a characterization of non-uniform chase termination as the one described in Section~\ref{sec:chase-procedure}, together with a matching lower bound for the size of the chase instance. We then exploit this characterization to pinpoint the complexity of $\mathsf{ChTrm}(\class{L})$.
In contrast to simple linear TGDs, non-uniform weak-acyclicity is not powerful enough for characterizing the finiteness of the chase instance. 

\begin{example}
	Consider the database $D = \{R(a,b)\}$, and the singleton set $\dep$ consisting of the (non-simple) linear TGD
	\[
	R(x,x)\ \ra\ \exists z \, R(z,x). 
	\]
	It is easy to see that there is no trigger for $\dep$ on $D$. This means that $\chase{D}{\dep} = D$ is finite, whereas $\dep$ is {\em not} $D$-weakly-acyclic. \hfill\markfull
\end{example}

In~\cite{CaGP15}, an extended version of weak-acyclicity, called {\em critical-weak-acyclicity}, has been proposed for characterizing uniform chase termination in the case of linear TGDs. Therefore, one could employ a non-uniform version of critical-weak-acyclicity for characterizing non-uniform chase termination. However, due to the involved definition of critical-weak-acyclicity, it is not clear how a database-independent bound on the depth of the terms occurring in a chase instance can be established.
Hence, to obtain a result analogous to Theorem~\ref{thm:characterization-simple-linear}, we exploit a technique, called {\em simplification}, that converts linear TGDs into simple linear TGDs.
This is a rather folklore technique in the context of ontological query answering, but this is the first time that it is being applied in the context of chase termination. Thus, our main technical task will be to show that simplification preserves the finiteness of the chase, and, more importantly, the depth of the terms occurring in a chase instance. 
This is a non-trivial task since the chase instance of the simplified version of a set $\dep$ of linear TGDs is structurally different than the chase instance of $\dep$ in the sense that there is no immediate correspondence between their terms and atoms.

\medskip

\noindent
\textbf{Simplification.}
Let $\bar t = (t_1,\ldots,t_n)$ be a tuple of (not necessarily distinct) terms. We write $\unique{\bar t}$ for the tuple obtained from $\bar t$ by keeping only the first occurrence of each term in $\bar t$.
For example, if $\bar t = (x,y,x,z,y)$, then $\unique{\bar t} = (x,y,z)$.
For each $i \in [n]$, the \emph{identifier of $t_i$ in $\bar t$}, denoted $\id{\bar t}{t_i}$, is the integer that identifies the position of $\unique{\bar t}$ at which $t_i$ appears. 
We write $\id{}{\bar t}$ for the tuple $(\id{\bar t}{t_1},\ldots,\id{\bar t}{t_n})$.
For example, if $\bar t = (x,y,x,z,y)$, then $\id{}{\bar t} = (1,2,1,3,2)$.
For an atom $\alpha = R(\bar t)$, the {\em simplification of $\alpha$}, denoted $\simple{\alpha}$, is the atom $R_{\id{}{\bar t}}(\unique{\bar t})$. We can naturally refer to the simplification of a set of atoms.
For a tuple of variables $\bar x = (x_1,\ldots,x_n)$, a \emph{specialization of $\bar x$} is a function $f$ from $\bar x$ to $\bar x$ such that $f(x_1) = x_1$, and $f(x_i) \in \{f(x_1),\ldots,f(x_{i-1}),x_i\}$, for each $i \in \{2,\ldots,n\}$.
We write $f(\bar x)$ for $(f(x_1),\ldots,f(x_n))$. We are now ready to define the simplification of linear TGDs.

\begin{definition}\label{def:simplification}
	Consider a linear TGD $\sigma$ of the form
	\[
	R(\bar x) \ra \exists \bar z\, \psi(\bar y,\bar z), 
	\]
	where $\bar y \subseteq \bar x$, and a specialization $f$ of $\bar x$. The {\em simplification of $\sigma$ induced by $f$} is the simple linear TGD
	\[
	\simple{R(f(\bar x))} \rightarrow \exists \bar z\, \simple{\psi(f(\bar y),\bar z)}.
	\]
	We write $\simple{\sigma}$ for the set of all simplifications of $\sigma$ induced by some specialization of $\bar x$.
	For a set $\dep \in \class{L}$ of TGDs, the {\em simplification of $\dep$} is defined as the set
	\[
	\simple{\dep}\ =\ \bigcup_{\sigma \in \dep} \simple{\sigma}
	\]
	consisting only of simple linear TGDs. \hfill\markfull
\end{definition}

The next result, which is one of our main technical results, shows that the technique of simplification can be safely applied in the context of chase termination in the sense that it preserves the finiteness of the chase, as well as the maximal depth over all terms in a chase instance. The key ingredient underlying the proof of this result is a delicate correspondence between the terms and atoms in the instance $\chase{D}{\dep}$, for some database $D$ and set $\dep$ of linear TGD, with those in the instance $\chase{\simple{D}}{\simple{\dep}}$; the low-level details are deferred to the appendix.

\begin{proposition}\label{pro:simplification}
	Consider a database $D$, and a set $\dep \in \class{L}$. Then:\footnote{It is easy to see that (2) implies (1), but we explicitly state (1) since it is interesting in its own right, and it is also explicitly used in the proof of Theorem~\ref{thm:characterization-linear}.}
	\begin{enumerate}
		\item $\dep \in \class{CT}_{D}$ if and only if $\simple{\dep} \in \class{CT}_{\simple{D}}$.
		\item $\mathsf{maxdepth}(D,\dep) = \mathsf{maxdepth}(\simple{D},
		\simple{\dep})$.
	\end{enumerate}
\end{proposition}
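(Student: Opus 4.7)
The plan is to establish a tight bijective correspondence between $\chase{D}{\dep}$ and $\chase{\simple{D}}{\simple{\dep}}$, atom-by-atom, that is compatible with simplification and preserves the depth of every term. Part~(2) is then immediate. Part~(1) follows from (2): by Proposition~\ref{pro:generic-bound}, any chase of a linear (hence guarded) set of TGDs whose terms have bounded depth is finite, and conversely a finite chase trivially has bounded depth; hence $\dep \in \class{CT}_D$ iff $\mathsf{maxdepth}(D,\dep)$ is finite, and likewise for the simplified side, so the equality in~(2) forces the equivalence in~(1).

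For the correspondence, the first step is to canonically match triggers across the two chases. Fix any trigger $(\sigma,h)$ for $\dep$ on an instance $I$, say with $\body{\sigma} = R(x_1,\ldots,x_n)$. Define the specialization $f_h$ of $(x_1,\ldots,x_n)$ by setting $f_h(x_i) = x_j$, where $j$ is the smallest index with $h(x_j) = h(x_i)$; intuitively, $f_h$ encodes the equality type that $h$ induces on the body of $\sigma$. Then $\sigma_{f_h} \in \simple{\dep}$, and restricting $h$ to the distinct variables of $f_h(x_1,\ldots,x_n)$ yields a homomorphism $h'$ from $\body{\sigma_{f_h}}$ to $\simple{I}$. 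A direct check then shows that $(\sigma,h)$ is an active trigger on $I$ if and only if $(\sigma_{f_h},h')$ is an active trigger on $\simple{I}$, and conversely every active trigger for $\simple{\dep}$ on an instance of the form $\simple{I}$ arises in this way from a unique active trigger for $\dep$ on $I$.

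Next I would construct, in parallel with any fair chase derivation $(I_i)_{i \geq 0}$ of $D$ w.r.t.~$\dep$, a fair chase derivation $(J_i)_{i \geq 0}$ of $\simple{D}$ w.r.t.~$\simple{\dep}$, together with a bijection $\phi_i : I_i \to J_i$ that commutes with simplification up to a canonical renaming of nulls. The base case takes $J_0 = \simple{D}$ and $\phi_0(\alpha) = \simple{\alpha}$. For the inductive step, if $I_i \app{\sigma}{h} I_{i+1}$, then we apply $(\sigma_{f_h}, h' \circ \phi_i^{-1})$ to $J_i$ to obtain $J_{i+1}$, identifying each fresh null $\bot_{\sigma,h_{|\fr{\sigma}}}^{z}$ produced on the left with its counterpart $\bot_{\sigma_{f_h},h'_{|\fr{\sigma_{f_h}}}}^{z}$ produced on the right. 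This identification is consistent across the entire derivation because $h(x) = h'(f_h(x))$ for every frontier variable $x$ of $\sigma$, so restrictions of corresponding homomorphisms to frontiers always agree through $\phi_i$.

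The main obstacle is verifying, simultaneously with the construction, that corresponding nulls have equal depth. This is proven by induction on depth: assuming the identification preserves depth for all nulls created before step $i$, the null $\bot_{\sigma,h_{|\fr{\sigma}}}^{z}$ has depth $1 + \max_{x \in \fr{\sigma}} \depth{h(x)}$, while its counterpart has depth $1 + \max_{x' \in \fr{\sigma_{f_h}}} \depth{h'(x')}$. These agree because the frontier of $\sigma_{f_h}$ consists exactly of the distinct representatives selected by $f_h$ among $\fr{\sigma}$, and $h'$ sends each such representative to the same term (modulo the null renaming) that $h$ sends its entire equivalence class to. The delicate bookkeeping lies in verifying that $f_h$ correctly collapses frontier duplicates: if frontier variables $x_i, x_j$ of $\sigma$ satisfy $h(x_i) = h(x_j)$, they are merged by $f_h$, and $h'$ witnesses this collapse coherently in $\simple{I}$, which is consistent only because $\simple{I}$ already records the equality pattern of each of its atoms in the subscripted predicate symbol.
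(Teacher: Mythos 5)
Your plan hinges on a step-by-step simulation that yields a depth-preserving \emph{bijection} between $\chase{D}{\dep}$ and $\chase{\simple{D}}{\simple{\dep}}$, but such a bijection does not exist, because simplification is lossy at the level of semi-oblivious nulls. On the original side a null records only $\sigma$ and the restriction of $h$ to $\fr{\sigma}$; on the simplified side the very name of the TGD $\sigma_f$ additionally records the equality type of the \emph{whole} body, including non-frontier positions. Two original triggers with the same frontier restriction but different body equality types therefore create one and the same null on the left, yet distinct nulls (and atoms) on the right. Concretely, take $D=\{R(a,b),R(a,a)\}$ and $\dep=\{\sigma\}$ with $\sigma = R(x,y)\ra\exists z\,Q(x,z)$. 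Then $\chase{D}{\dep}=D\cup\{Q(a,\bot)\}$ with the single null $\bot=\bot^{z}_{\sigma,\{x\mapsto a\}}$, whereas $\simple{D}=\{R_{(1,2)}(a,b),R_{(1,1)}(a)\}$ and $\chase{\simple{D}}{\simple{\dep}}$ contains the two atoms $Q_{(1,2)}(a,\bot_1)$ and $Q_{(1,2)}(a,\bot_2)$, where $\bot_1=\bot^{z}_{\sigma_{f_1},\{x\mapsto a\}}$, $\bot_2=\bot^{z}_{\sigma_{f_2},\{x\mapsto a\}}$, $f_1$ is the identity specialization and $f_2$ merges $y$ into $x$. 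So the two chases have $3$ versus $4$ atoms, and your converse claim fails: after you mirror the one trigger that fires in the original derivation, the trigger $(\sigma_{f_2},\{x\mapsto a\})$ is still active on the simplified side, while the original trigger it arises from, $(\sigma,\{x\mapsto a,y\mapsto a\})$, is \emph{not} active on $\chase{D}{\dep}$. Consequently your mirrored sequence $(J_i)_{i\geq 0}$ is not a valid (fair, respectively terminal) chase derivation of $\simple{D}$ w.r.t.~$\simple{\dep}$: it misses atoms of $\chase{\simple{D}}{\simple{\dep}}$.

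What your simulation does establish is an injective, depth-preserving map from $\chase{D}{\dep}$ into $\chase{\simple{D}}{\simple{\dep}}$, i.e., the inequality $\mathsf{maxdepth}(D,\dep)\leq\mathsf{maxdepth}(\simple{D},\simple{\dep})$ and the implication $\simple{\dep}\in\class{CT}_{\simple{D}}\Rightarrow\dep\in\class{CT}_D$. The missing half---that every atom and null of $\chase{\simple{D}}{\simple{\dep}}$ has a counterpart of equal depth in $\chase{D}{\dep}$, so that finiteness and maximal depth also transfer in the other direction---needs a separate argument going the opposite way, e.g., a second induction over a chase derivation of $\simple{D}$ w.r.t.~$\simple{\dep}$ that maps each simplified trigger back to an original trigger whose result (not necessarily produced by an \emph{active} trigger) already lies in $\chase{D}{\dep}$. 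This is precisely what the paper's equivalence up to simplification delivers: the classes $\mathsf{ES}(\alpha)$ form a partition of $\chase{\simple{D}}{\simple{\dep}}$ but, as the example shows, need not be singletons, so the correspondence is inherently many-to-one. A further minor point: for ``(2) $\Rightarrow$ (1)'' you invoke Proposition~\ref{pro:generic-bound}, which is stated under the hypothesis $\dep\in\class{CT}_D$; the implication ``bounded depth implies finite chase'' should instead be drawn from Lemma~\ref{lem:depth-bound}, which holds for any valid derivation.
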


\noindent
\textbf{Characterizing Non-Uniform Termination.}
We are now ready to establish a result for linear TGDs analogous to Theorem~\ref{thm:characterization-simple-linear} via simplification. To this end, we first provide a database-independent upper bound on the depth of terms occurring in a chase instance via $\mathsf{d}_{\class{L}} : \class{L} \ra \mathbb{N}$; recall that $\mathsf{d}_{\class{L}}(\dep) = |\sch{\dep}| \cdot \arity{\dep}^{\arity{\dep}+1}$.

\begin{lemma}\label{lem:depth-bound-linear}
	Consider a database $D$, and a set $\dep \in \class{L}$ with $\simple{\dep}$ being $\simple{D}$-weakly-acyclic. Then $\mathsf{maxdepth}(D,\dep) \leq \mathsf{d}_{\class{L}}(\dep)$.
\end{lemma}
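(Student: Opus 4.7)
The plan is to reduce this to the already-established simple-linear case via the simplification technique of Proposition~\ref{pro:simplification}, and then bound the schema parameters of the simplified set.

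First I would apply Proposition~\ref{pro:simplification}(2) to obtain the equality
\[
\mathsf{maxdepth}(D,\dep)\ =\ \mathsf{maxdepth}(\simple{D},\simple{\dep}).
\]
Since $\simple{\dep} \in \class{SL}$ by construction, and since the hypothesis gives us that $\simple{\dep}$ is $\simple{D}$-weakly-acyclic, Lemma~\ref{lem:depth-bound-sl} applies and yields
\[
\mathsf{maxdepth}(\simple{D},\simple{\dep})\ \leq\ \mathsf{d}_{\class{SL}}(\simple{\dep})\ =\ |\sch{\simple{\dep}}| \cdot \arity{\simple{\dep}}.
\]
It will then suffice to show that $|\sch{\simple{\dep}}| \cdot \arity{\simple{\dep}} \leq \mathsf{d}_{\class{L}}(\dep) = |\sch{\dep}| \cdot \arity{\dep}^{\arity{\dep}+1}$.

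The bound $\arity{\simple{\dep}} \leq \arity{\dep}$ is immediate, since $\simple{\cdot}$ removes duplicate occurrences from tuples and therefore never increases arity. The main obstacle is bounding $|\sch{\simple{\dep}}|$. For each predicate $R/n \in \sch{\dep}$, every predicate in $\sch{\simple{\dep}}$ originating from $R$ has the form $R_{\id{}{\bar{t}}}$, where the identifier tuple $\id{}{\bar{t}} = (i_1,\ldots,i_n)$ is a restricted growth sequence: $i_1=1$ and each $i_j$ is either a previously-used value or one more than the current maximum. Such sequences are in bijection with set partitions of $[n]$, so their number equals the Bell number $B_n$, which is bounded by $n^n$. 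Hence $|\sch{\simple{\dep}}| \leq |\sch{\dep}| \cdot \arity{\dep}^{\arity{\dep}}$, and multiplying by the arity bound gives the desired inequality. The hardest step to get right is verifying the combinatorial counting of identifier tuples generated by $\simple{\cdot}$; everything else is a straightforward chain of bounds.
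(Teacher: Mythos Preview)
Your proposal is correct and follows essentially the same route as the paper's proof: apply Proposition~\ref{pro:simplification}(2) to transfer the depth to the simplified setting, invoke Lemma~\ref{lem:depth-bound-sl} there, and then bound $|\sch{\simple{\dep}}| \leq |\sch{\dep}| \cdot \arity{\dep}^{\arity{\dep}}$ and $\arity{\simple{\dep}} \leq \arity{\dep}$. The only difference is that you spell out the counting of identifier tuples via restricted growth sequences and Bell numbers, whereas the paper simply asserts the schema-size bound as ``easy to see''; the crude estimate $n^n$ (each of the $n$ entries lies in $[n]$) already suffices without invoking Bell numbers.
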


\begin{proof}
By Lemma~\ref{lem:depth-bound-sl}, and item (2) of Proposition~\ref{pro:simplification},
\begin{eqnarray*}
\mathsf{maxdepth}(D,\dep) &\leq& \mathsf{d}_{\class{SL}}(\simple{\dep})\\
&=& |\sch{\simple{\dep}}| \cdot \arity{\simple{\dep}}.
\end{eqnarray*}
It is easy to see that $|\sch{\simple{\dep}}| \leq |\sch{\dep}| \cdot \arity{\dep}^{\arity{\dep}}$, and that $\arity{\simple{\dep}} \leq \arity{\dep}$. This implies that
\[
\mathsf{maxdepth}(D,\dep)\ \leq\ |\sch{\dep}| \cdot \arity{\dep}^{\arity{\dep}+1}
\]
and the claim follows by the definition of the function $\mathsf{d}_\class{L}$.
\end{proof}

Let $f_\class{L}$ be the function from $\class{L}$ to $\mathbb{N}$ defined as follows:
\[
f_\class{L}(\dep)\ =\ \left(\mathsf{d}_{\class{L}}(\dep)+1\right) \cdot ||\dep||^{2 \cdot \arity{\dep} \cdot (\mathsf{d}_{\class{L}}(\dep)+1)}.
\]
By exploiting the generic bound provided by Proposition~\ref{pro:generic-bound}, the direction $(1) \Rightarrow (3)$ of Theorem~\ref{thm:characterization-simple-linear}, item (1) of Proposition~\ref{pro:simplification}, and Lemma~\ref{lem:depth-bound-linear}, it is an easy task to show the following:

\begin{theorem}\label{thm:characterization-linear}
	Consider a database $D$, and a set $\dep \in \class{L}$ of TGDs. The following are equivalent:
	\begin{enumerate}
		\item $\dep \in \class{CT}_D$.
		\item $|\chase{D}{\dep}|\ \leq\ |D|  \cdot f_\class{L}(\dep)$.
		\item $\simple{\dep}$ is $\simple{D}$-weakly-acyclic.
	\end{enumerate}
\end{theorem}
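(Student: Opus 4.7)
The plan is to prove the three-way equivalence by showing $(2) \Rightarrow (1)$, $(1) \Rightarrow (3)$, and $(3) \Rightarrow (2)$, reducing everything to the simple linear case via simplification. The direction $(2) \Rightarrow (1)$ is immediate: a finite cardinality bound forces $\chase{D}{\dep}$ to be finite, i.e., $\dep \in \class{CT}_D$.

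For $(1) \Rightarrow (3)$, I would first transfer finiteness across simplification: from $\dep \in \class{CT}_D$, item (1) of Proposition~\ref{pro:simplification} yields $\simple{\dep} \in \class{CT}_{\simple{D}}$. Since $\simple{\dep} \in \class{SL}$, I then apply the direction $(1) \Rightarrow (3)$ of Theorem~\ref{thm:characterization-simple-linear} to conclude that $\simple{\dep}$ is $\simple{D}$-weakly-acyclic, which is exactly item (3).

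For $(3) \Rightarrow (2)$, I first need to re-establish finiteness so that I can invoke the generic bound, which requires $\dep \in \class{CT}_D$ as a hypothesis. From $\simple{\dep}$ being $\simple{D}$-weakly-acyclic, Theorem~\ref{thm:characterization-simple-linear} (the $(3) \Rightarrow (2) \Rightarrow (1)$ chain) gives $\simple{\dep} \in \class{CT}_{\simple{D}}$, and then Proposition~\ref{pro:simplification}(1) gives $\dep \in \class{CT}_D$. Now Lemma~\ref{lem:depth-bound-linear} gives $\mathsf{maxdepth}(D,\dep) \leq \mathsf{d}_{\class{L}}(\dep)$, and since $\class{L} \subseteq \class{G}$, Proposition~\ref{pro:generic-bound} applies to $\dep$ with $d = \mathsf{maxdepth}(D,\dep) \leq \mathsf{d}_{\class{L}}(\dep)$, yielding
\[
|\chase{D}{\dep}|\ \leq\ |D| \cdot (d+1) \cdot ||\dep||^{2 \cdot \arity{\dep} \cdot (d+1)}\ \leq\ |D| \cdot f_\class{L}(\dep),
\]
where the last inequality follows by monotonicity of the right-hand side in $d$ and the definition of $f_\class{L}$.

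There is no real obstacle here since all the heavy lifting has been done in the supporting lemmas and propositions; the proof is essentially a bookkeeping exercise that chains them together in the right order. The only mild subtlety is that the generic bound of Proposition~\ref{pro:generic-bound} assumes $\dep \in \class{CT}_D$, so in the $(3) \Rightarrow (2)$ step one must first pass through item (1) rather than directly combining Lemma~\ref{lem:depth-bound-linear} with Proposition~\ref{pro:generic-bound}; the route through Proposition~\ref{pro:simplification}(1) and Theorem~\ref{thm:characterization-simple-linear} takes care of this.
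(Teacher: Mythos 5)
Your proof is correct and follows essentially the same route as the paper's: $(2)\Rightarrow(1)$ trivially, $(1)\Rightarrow(3)$ via item (1) of Proposition~\ref{pro:simplification} together with Theorem~\ref{thm:characterization-simple-linear}, and $(3)\Rightarrow(2)$ via Lemma~\ref{lem:depth-bound-linear} combined with the generic bound of Proposition~\ref{pro:generic-bound}. Your additional step of first recovering $\dep \in \class{CT}_D$ (through Theorem~\ref{thm:characterization-simple-linear} and Proposition~\ref{pro:simplification}(1)) so that Proposition~\ref{pro:generic-bound} applies exactly as stated is sound bookkeeping rather than a different argument, and if anything it is slightly more careful than the paper's sketch.
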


\OMIT{
\begin{proof}
	It is clear that $(2) \Rightarrow (1)$ holds trivially.
	Assume now that $(1)$ holds, i.e., $\dep \in \class{CT}_D$. By item (1) of Proposition~\ref{pro:simplification}, we get that $\simple{\dep} \in \class{CT}_{\simple{D}}$. Since, by construction, $\simple{\dep} \in \class{SL}$, Theorem~\ref{thm:characterization-simple-linear} implies that $\simple{\dep}$ is $\simple{D}$-weakly-acyclic, i.e., $(3)$ holds, as needed.
	Finally, assume that $(3)$ holds, i.e., $\simple{\dep}$ is $\simple{D}$-weakly-acyclic. By Lemma~\ref{lem:depth-bound-linear}, $\mathsf{maxdepth}(D,\dep) \leq \mathsf{d}_{\class{L}}(\dep)$. Thus, by Proposition~\ref{pro:generic-bound}, we get that (2) holds, as needed.
\end{proof}
}

We complement the above characterization with a result that states the following: the size of the chase instance is unavoidably double-exponential in the arity, and exponential in the number of predicates of the underlying schema; the proof is in the appendix.

\begin{theorem}\label{the:lower-bound-linear}
	There exists a family of databases $\{D_{\ell}\}_{\ell > 0}$ with $\ell = |D_{\ell}|$, and a family of sets of TGDs $\{\dep_{n,m} \in \class{L} \cap \class{CT}_{D_\ell}\}_{n,m>0}$ with $n = |\sch{\dep_{n,m}}|-1$ and $m = \arity{\dep_{n,m}}-3$, such that
	\[
	|\chase{D_{\ell}}{\dep_{n,m}}|\ \geq\ \ell \cdot 2^{n \cdot \left(2^{m}-1\right)}.
	\]
\end{theorem}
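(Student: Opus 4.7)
The plan is to construct the family $\{(D_\ell, \dep_{n,m})\}$ by leveraging two key features of non-simple linear TGDs: (a) the power of repeating body variables to \emph{specialize} a single TGD into exponentially many pattern-specific rules via the simplification of Definition~\ref{def:simplification}, and (b) fresh existential nulls in the heads, which serve as unique identifiers that prevent distinct derivations from coalescing in the semi-oblivious chase. The construction builds on the cascade pattern from the proof of Theorem~\ref{the:lower-bound-sl}: while that proof achieves $\ell \cdot m^{n \cdot m}$ via a fixed arity-$m$ chain, we enlarge the effective branching factor at each stage to $2^{2^m - 1}$, yielding $\ell \cdot 2^{n(2^m - 1)}$ after $n$ stages.

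Concretely, I would take $n + 1$ predicates $R_0, \ldots, R_n$ of arity $m + 3$, where the three auxiliary positions serve, respectively, as the seed identifier (propagated unchanged from $D_\ell$), a stage tag, and an \emph{origin} position populated by a fresh existential null at every TGD firing to prevent coalescence, while the remaining $m$ data positions carry a tuple whose equality pattern encodes combinatorial information. The database $D_\ell$ consists of $\ell$ seed atoms $R_0(c_1, \ldots), \ldots, R_0(c_\ell, \ldots)$. For each stage $i \in [n]$, I would design a family of linear TGDs from $R_{i-1}$ to $R_i$ whose simplification contains, for every collection $\mathcal{C} \subseteq 2^{[m]} \setminus \{\emptyset\}$ (there are $2^{2^m - 1}$ such collections), a distinct simple linear TGD specialized by a particular equality pattern on the data positions, carrying the atom forward and introducing a fresh existential null that records $\mathcal{C}$. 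Chaining $n$ stages then yields $(2^{2^m - 1})^n = 2^{n(2^m - 1)}$ distinct atoms per seed, and summing over seeds gives the desired lower bound on $|\chase{D_\ell}{\dep_{n,m}}|$.

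Termination follows by checking that $\simple{\dep_{n,m}}$ is $\simple{D_\ell}$-weakly-acyclic via the layered structure of $\depg{\simple{\dep_{n,m}}}$, in which every special edge strictly advances the stage index; then Theorem~\ref{thm:characterization-linear} yields $\dep_{n,m} \in \class{CT}_{D_\ell}$. The counting lower bound is then obtained by exhibiting, for every seed $c_j$ and every sequence $(\mathcal{C}_1, \ldots, \mathcal{C}_n)$, a concrete distinct atom of $R_n$ in the chase; the atoms of $\chase{D_\ell}{\dep_{n,m}}$ are in natural correspondence with those of $\chase{\simple{D_\ell}}{\simple{\dep_{n,m}}}$ (via the mapping $\alpha \mapsto \simple{\alpha}$, which is injective since the simplified predicate symbol records the equality type of $\alpha$), allowing the count to be transferred back to the original chase. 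The principal obstacle is ensuring that distinct sequences $(\mathcal{C}_1, \ldots, \mathcal{C}_n)$ really produce pairwise distinct chase atoms: this requires a careful interplay between the body equality patterns (so that the semi-oblivious nulls, being determined by the trigger's frontier, end up pairwise distinct) and the origin-tag position (to prevent unintended unifications across branches of the cascade), all while keeping the arity within $m + 3$.
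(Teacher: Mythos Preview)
Your plan has a genuine gap at its core combinatorial step. You want the simplification of your stage-$i$ TGDs to contain, ``for every collection $\mathcal{C} \subseteq 2^{[m]} \setminus \{\emptyset\}$, a distinct simple linear TGD specialized by a particular equality pattern on the data positions''. But an equality pattern on $m$ data positions is a partition of $[m]$, and there are only $B_m$ (the Bell number) of those---singly exponential in $m$, not doubly. Even using all $m+3$ positions you get at most $B_{m+3}$ patterns, which is nowhere near the $2^{2^m-1}$ specializations you need. Hence a bounded family of linear TGDs of arity $m+3$ cannot, via simplification alone, spawn $2^{2^m-1}$ distinct simple TGDs per stage. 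If instead you just put $2^{2^m-1}$ TGDs into $\dep_{n,m}$ directly, simplification is irrelevant and you are no longer exploiting non-simplicity at all.

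There is also a direction-of-inequality issue in your transfer argument. By Lemma~\ref{lem:simplification-aux-2}, $\{\mathsf{ES}(\alpha)\mid \alpha\in\chase{D}{\dep}\}$ partitions $\chase{\simple{D}}{\simple{\dep}}$, so $|\chase{D}{\dep}|\le |\chase{\simple{D}}{\simple{\dep}}|$; the injection $\alpha\mapsto\simple{\alpha}$ you invoke establishes exactly that inequality, which is the wrong direction for pushing a lower bound from the simplified chase back to the original one.

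The paper takes a different route: it does not try to get double-exponential branching in a single hop. Within each stage it uses only $m$ linear TGDs (one per carry length) whose repeated body variables test the pattern ``bit $m{-}j$ equals the stored $0$ and bits $m{-}j{+}1,\dots,m$ equal the stored $1$'', i.e., a binary-increment carry check. Each such firing introduces \emph{two} fresh nulls in the last position, so the stage builds a full binary tree whose depth is the number of increments from $0^m$ to $1^m$, namely $2^m-1$, giving $2^{2^m-1}$ leaves. A single $R_i\to R_{i+1}$ TGD (with body pattern $x^m$ detecting the all-ones counter) seeds the next stage from each leaf. The double-exponential growth thus comes from iterating a polynomially sized counter mechanism $2^m-1$ times per stage, not from the cardinality of the simplification.
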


\noindent
\textbf{Complexity.}
We proceed with the complexity of $\mathsf{ChTrm}(\class{L})$. We first note that the naive procedure obtained from item $(2)$ of Theorem~\ref{thm:characterization-linear} shows that $\mathsf{ChTrm}(\class{L})$ is in \textsc{2ExpTime}, in \textsc{ExpTime} if we bound the arity, and in \textsc{PTime} in data complexity. This is provably the best that we can get from the naive approach according to Theorem~\ref{the:lower-bound-linear}.
However, the exact complexity of $\mathsf{ChTrm}(\class{L})$ is lower.

\begin{theorem}\label{the:complexity-linear}
	$\mathsf{CT}(\class{L})$ is \textsc{PSpace}-complete, \textsc{NL}-complete for schemas of bounded arity, and in \textsc{AC}$_0$ in data complexity.
\end{theorem}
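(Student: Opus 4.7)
The plan is to leverage the characterization given by Theorem~\ref{thm:characterization-linear}: $\dep \in \class{CT}_D$ if and only if $\simple{\dep}$ is $\simple{D}$-weakly-acyclic. Since $\simple{\dep}$ consists of simple linear TGDs, one is tempted to materialize $\simple{\dep}$ and invoke Theorem~\ref{the:complexity-sl}, but this is only efficient when $\arity{\dep}$ is bounded: in general $\simple{\dep}$ contains up to $\arity{\dep}^{\arity{\dep}}$ specializations per TGD, yielding an exponential blow-up and only the naive \textsc{2ExpTime} bound already noted after Theorem~\ref{the:lower-bound-linear}. Avoiding this materialization in the general case is the main technical obstacle.

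For the two easier bounds I would proceed as follows. In \emph{data complexity} the set $\dep$, and hence $\simple{\dep}$, are fixed; the mapping $D \mapsto \simple{D}$ is an \textsc{AC}$_0$ transformation (each fact $R(\bar t) \in D$ is locally renamed to $R_{\id{}{\bar t}}(\unique{\bar t})$), after which I apply the \textsc{AC}$_0$ data-complexity procedure of Theorem~\ref{the:complexity-sl} on input $(\simple{D}, \simple{\dep})$. In the \emph{bounded-arity} case, $\arity{\dep}^{\arity{\dep}}$ is constant, hence $\simple{\dep}$ has size linear in $|\dep|$ and bounded arity; I construct $\simple{\dep}$ and $\simple{D}$ in logarithmic space and invoke the \textsc{NL} procedure of Theorem~\ref{the:complexity-sl}. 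The matching \textsc{NL}-hardness is inherited from $\class{SL} \subsetneq \class{L}$ together with the \textsc{NL}-hardness for $\class{SL}$ recalled at the end of Section~\ref{sec:simple-linear}.

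For the general \textsc{PSpace} upper bound, I would design a procedure that explores $\depg{\simple{\dep}}$ on the fly, without ever materializing it. A node of this graph is a position $(R_{\bar i}, j)$, and admits an encoding of size polynomial in $\arity{\dep}$ and $\log|\sch{\dep}|$. Given such a node, the out-neighbours, each labelled \emph{normal} or \emph{special}, can be enumerated on the fly by scanning TGDs $\sigma \in \dep$ whose body predicate is $R$, inspecting the specializations $f$ of $\body{\sigma}$ that give rise to $R_{\bar i}$, and examining the corresponding head positions of $\simple{\head{\sigma}}$. Using item~(3) of Theorem~\ref{thm:characterization-linear}, it suffices to decide the complement: existence of a $\simple{D}$-supported cycle through a special edge. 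A nondeterministic algorithm guesses a special edge $(\pi,\pi')$, verifies reachability from $\pi'$ back to $\pi$ by a walk of length bounded by the (at most singly-exponential) number of nodes, and checks $\simple{D}$-support by guessing a fact $R(\bar t) \in D$ and a walk from some position $(P,k)$ with $R \reach{\simple{\dep}} P$ into the cycle. All counters and the single current-node register remain polynomially bounded, so the algorithm runs in \textsc{NPSpace}, and Savitch's theorem yields \textsc{PSpace}.

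The hard part will be the matching \textsc{PSpace}-hardness. I would aim to adapt the \textsc{PSpace}-lower-bound machinery developed in~\cite{CaGP15} for \emph{uniform} chase termination of linear TGDs to the non-uniform setting: starting from a canonical \textsc{PSpace}-complete problem (e.g.\ QBF or corridor tiling), I would encode the instance into a set $\dep \in \class{L}$ together with a small input database $D$ that selectively forces precisely the specialization patterns of $\dep$ needed to create a $\simple{D}$-supported cycle through a special edge exactly when the source instance is positive. The delicate point is to design $D$ so that simplification produces the intended subset of specializations (and no spurious ones), while keeping $|D|$ polynomial so that the reduction remains log-space; in particular, in contrast to the uniform case of~\cite{CaGP15} where the ``critical'' instance enables every specialization, here $D$ must selectively activate only those specializations that encode the computation of interest.
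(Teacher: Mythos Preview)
Your upper-bound arguments are essentially those of the paper. The paper also exploits item~(3) of Theorem~\ref{thm:characterization-linear} and runs an on-the-fly version of $\mathsf{CheckWA}$ on $\depg{\simple{\dep}}$, guessing at each step a TGD $\sigma \in \dep$, a specialization $f$, and the relevant body/head atoms to produce the next edge; the resulting space is $O(\log|D| + \arity{\dep}\cdot\log||\dep||)$, which gives \textsc{PSpace} in general and \textsc{NL} for bounded arity in one shot. Your bounded-arity route (materialize $\simple{\dep}$, call the $\class{SL}$ procedure) is a harmless variant. For data complexity the paper builds the UCQ $Q_\dep$ directly over $D$ rather than first passing to $\simple{D}$, but your composition argument (the map $D\mapsto\simple{D}$ is first-order, then evaluate the fixed UCQ $Q_{\simple{\dep}}$) is equally valid.

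Where you go astray is the \textsc{PSpace} lower bound. You explicitly discard the critical-database approach, writing that ``here $D$ must selectively activate only those specializations that encode the computation of interest,'' and then set out to design a bespoke $D$. This is unnecessary and rests on a misconception. For the semi-oblivious chase one has $\dep \in \class{CT}$ iff $\dep \in \class{CT}_{D_\dep}$, where $D_\dep$ is the critical database consisting of all atoms over $\sch{\dep}$ using a single constant. Hence the \textsc{PSpace}-hardness of \emph{uniform} termination for $\class{L}$ established in~\cite{CaGP15} is already, verbatim, a hardness result for the non-uniform problem on input $(D_\dep,\dep)$, with $|D_\dep|$ polynomial in $||\dep||$. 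The paper simply cites this: ``the hardness results are inherited from~\cite{CaGP15}, where it is shown that the problem is hard for the database consisting of all the atoms that can be formed using one constant and the predicates of the underlying schema.'' No selective activation of specializations is needed; the programme you sketch would be substantially harder to carry out and is not required. (The same remark applies to the \textsc{NL}-hardness for bounded arity, though your alternative of inheriting it from $\class{SL}\subsetneq\class{L}$ also works.)
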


The finer procedures that lead to Theorem~\ref{the:complexity-linear} rely on item (3) of Theorem~\ref{thm:characterization-linear}. In particular, we know that $\dep \in \class{CT}_D$ iff $\simple{\dep}$ is $\simple{D}$-weakly-acyclic.
Thus, it suffices to show that, given a database $D$ and a set $\dep \in \class{L}$ of TGDs, the problem of deciding whether $\simple{\dep}$ is {\em not} $\simple{D}$-weakly-acyclic is in \textsc{PSpace} in general, in \textsc{NL} in the case of bounded arity, and in \textsc{AC}$_0$ when $\dep$ is fixed.
The \textsc{PSpace} and \textsc{NL} upper bounds are shown via a nondeterministic algorithm that performs a reachability check on the dependency graph of $\simple{\dep}$, but without explicitly constructing it. The \textsc{AC}$_0$ upper bound is shown by constructing a union of conjunctive queries $Q_\dep$ (which depends only on $\dep$) such that $\simple{\dep}$ is not $\simple{D}$-weakly-acyclic iff $D$ satisfies $Q_\dep$.
%
%
The hardness results are inherited from~\cite{CaGP15}, where it is shown that the problem is hard for the database consisting of all the atoms that can be formed using one constant and the predicates of the underlying schema.

\section{Guarded TGDs}\label{sec:guarded}

In this final section, we focus on the class of guarded TGDs, and provide a characterization of non-uniform chase termination as the one described in Section~\ref{sec:chase-procedure}, together with a matching lower bound for the size of the chase instance. 
We then exploit this characterization to pinpoint the complexity of $\mathsf{ChTrm}(\class{G})$.
Note that the work~\cite{CaGP15}, where the uniform termination problem for guarded TGDs has been studied, does not provide any syntactic characterization, which can then be used to devise a decision procedure, but it rather relies on a sophisticated alternating algorithm.
Towards our characterization, we combine a technique known as {\em linearization}~\cite{GoMP14}, which converts a set of guarded TGDs into a set of linear TGDs, with simplification used in the previous section.
Linearization is a very useful technique that has found several applications in the context of query answering~\cite{AmBe18,BDFLP20,BBMT17,GoMP14,GoMP20}. This is the first time, however, that it is used in the context of chase termination. Hence, as for simplification, our main technical task will be to show that linearization preserves the finiteness of the chase, and, more importantly, the depth of the terms occurring in a chase instance. The main challenge is the fact that the chase instance of the linearized version of a set $\dep$ of guarded TGDs is structurally very different than the chase instance of $\dep$ in the sense that there is no immediate correspondence between their terms and atoms.

\medskip

\noindent
\textbf{Linearization.} 
Due to space constraints, we only give a high-level description of linearization, while the details are deferred to the appendix.
Given a database $D$, a set $\dep \in \class{G}$ of TGDs, and an atom $\alpha \in \chase{D}{\dep}$, the {\em type of $\alpha$ (w.r.t.~$D$ and $\dep$)} is the set of atoms in $\chase{D}{\dep}$ that mention only terms from $\alpha$~\cite{CaGK13}. 
The key property of the type states that the set of atoms in $\chase{D}{\dep}$ that can be derived from $\alpha$ (used as a guard) is determined by the type of $\alpha$.

Now, for linearizing the database $D$, we encode each atom $\alpha \in D$ and its type as an atom of the form $[\tau](\cdot)$, where $\tau$ is a symbolic representation of $\alpha$ and its type. For example, if  $\alpha = R(a,b)$ and $\{R(a,b),S(b,a),T(a)\}$ is its type, we encode $R(a,b)$ and its type as
$
[R(1,2),\{S(2,1),T(1)\}](a,b).
$
The obtained database is denoted $\mathsf{lin}(D)$.
For the linearization of $\dep$, the intention is, for a TGD $\sigma \in \dep$, to encode the shape of the type $\tau$ of the guard of $\sigma$ in a predicate $[\tau]$, and then replace $\sigma$ with a linear TGD that uses in its body an atom of the form $[\tau](\cdot)$. We need an effective way, though, to compute the type of an atom $\alpha$ by completing its known part, which is inherited from the type of the guard atom that generates $\alpha$, with atoms that mention the new null values invented in $\alpha$. This exploits the main property of the type.
The obtained set of linear TGDs is $\mathsf{lin}(\dep)$.

The next result, which is another key technical result of this work, shows that the technique of linearization can be safely applied in the context of chase termination in the sense that it preserves the finiteness of the chase, as well as the maximal depth over all terms in a chase instance. It actually shows a property analogous to the one shown by Proposition~\ref{pro:simplification} for simplification, and the key ingredient of its proof is an intricate correspondence between the terms and atoms in the instance $\chase{D}{\dep}$, for some database $D$ and set $\dep$ of guarded TGD, with those in the instance $\chase{\mathsf{lin}(D)}{\mathsf{lin}(\dep)}$; the low-level details are deferred to the appendix.

\begin{proposition}\label{pro:linearization}
	Consider a database $D$, and a set $\dep \in \class{G}$. Then:\footnote{It is easy to see that (2) implies (1), but we explicitly state (1) since it is interesting in its own right, and it is also used in the proof of Theorem~\ref{thm:characterization-guarded}.}
	\begin{enumerate}
		\item $\dep \in \class{CT}_{D}$ if and only if $\mathsf{lin}(\dep) \in \class{CT}_{\mathsf{lin}(D)}$.
		\item $\mathsf{maxdepth}(D,\dep) = \mathsf{maxdepth}(\mathsf{lin}(D),\mathsf{lin}(\dep))$.
	\end{enumerate}
\end{proposition}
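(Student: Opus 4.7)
My plan is to prove both items of Proposition~\ref{pro:linearization} by establishing a precise correspondence between the chases $\chase{D}{\dep}$ and $\chase{\mathsf{lin}(D)}{\mathsf{lin}(\dep)}$. First I would define a decoding map $\pi$ that sends each linearized atom $[\tau](\bar t)$ to the atom $R(\bar t')$ obtained by reading off $R$ and the argument pattern from $\tau$, together with the associated type atoms on the terms in $\bar t$. By the construction of $\mathsf{lin}(D)$, applying $\pi$ to $\mathsf{lin}(D)$ recovers $D$ and the database-level types. The goal is then to lift this decoding to the full chases.

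The technical core is a pair of simulation lemmas, each proved by induction on the length of chase derivations. In the forward direction, I would show that any valid chase derivation of $D$ w.r.t.~$\dep$ can be mirrored by a valid chase derivation of $\mathsf{lin}(D)$ w.r.t.~$\mathsf{lin}(\dep)$: each trigger $(\sigma,h)$ in the original chase, firing on a guard atom $\alpha$, corresponds to a trigger $(\sigma',h')$ of the linear rule $\sigma' \in \mathsf{lin}(\dep)$ that packages $\sigma$ with the type $\tau$ of $\alpha$, firing on the linearized atom $[\tau](\bar t)$ encoding $\alpha$. The invariant to maintain is that every atom produced in the linearized chase decodes, via $\pi$, to an atom of $\chase{D}{\dep}$ paired with its correct type. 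The reverse simulation is symmetric: each trigger in the linearized chase decodes to a trigger in the original chase, using the fact that the type packaged into $[\tau](\bar t)$ records exactly the information $\sigma'$ needs from its one-atom body.

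To promote this correspondence to depths, I would exhibit an explicit bijection $\phi$ between the nulls of $\chase{D}{\dep}$ and those of $\chase{\mathsf{lin}(D)}{\mathsf{lin}(\dep)}$ induced by the simulation. Concretely, a null $\bot^{z}_{\sigma,h_{|\fr{\sigma}}}$ generated in the original chase is matched with the null $\bot^{z}_{\sigma',h'_{|\fr{\sigma'}}}$ generated in the corresponding step of the linearized chase, where $\sigma'$ is the linear rule obtained from $\sigma$ and the frontier images of $h$ and $h'$ agree up to $\phi$. A straightforward induction on the recursive Definition~\ref{def:term-depth} then shows that $\phi$ preserves depth: constants map to constants of depth $0$, and for a null of the form $\bot^{z}_{\sigma,h_{|\fr{\sigma}}}$ the maximum over its frontier preimages coincides with the corresponding maximum on the linearized side. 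This yields item~(2). For item~(1), Proposition~\ref{pro:generic-bound} (and its specialization to $\class{L}$) says that the chase of a guarded or linear instance is finite if and only if $\mathsf{maxdepth}$ is finite, so item~(2) immediately implies item~(1).

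The main obstacle is that the two chases are \emph{not} in bijective correspondence at the level of atoms: each linearized atom $[\tau](\bar t)$ encodes an entire bundle of original atoms (the guard and its type), and a single firing of a rule in $\mathsf{lin}(\dep)$ can be thought of as collapsing several consecutive original trigger applications into one step. The correspondence must therefore be phrased at the level of terms rather than atoms, and the delicate point is showing that the null-generation structure is preserved. Here the characteristic property of types for guarded TGDs---that every atom derivable from a guard $\alpha$ depends only on the type of $\alpha$---is essential: it guarantees that the linear rule $\sigma' \in \mathsf{lin}(\dep)$ has, in the body atom $[\tau](\bar t)$, all the information needed to produce, via $\phi$, the same null as the original rule $\sigma$ on $\alpha$, and conversely that every null introduced by $\sigma'$ can be back-propagated through $\phi$ to a unique null introduced by some firing of $\sigma$ on the corresponding guard in $\chase{D}{\dep}$.
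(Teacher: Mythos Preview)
Your overall strategy---set up a correspondence between $\chase{D}{\dep}$ and $\chase{\mathsf{lin}(D)}{\mathsf{lin}(\dep)}$, show it preserves term depth, and deduce both items---is exactly the paper's approach. However, the specific claim that this correspondence is a \emph{bijection} $\phi$ between the nulls is false, and this is a genuine gap in the plan.

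Here is a concrete counterexample. Take $\dep=\{\sigma\}$ with $\sigma:\ R(x,y)\to\exists z\,T(x,z)$ (so $\fr{\sigma}=\{x\}$), and $D=\{R(a,b),R(a,c),T(b,b)\}$. In the original chase there is a \emph{single} null $\bot=\bot^{z}_{\sigma,\{x\mapsto a\}}$, produced by both triggers $(\sigma,\{x\mapsto a,y\mapsto b\})$ and $(\sigma,\{x\mapsto a,y\mapsto c\})$, since these agree on the frontier. But the two guard atoms $R(a,b)$ and $R(a,c)$ have \emph{different} types (only the first has $T(b,b)$ in its neighbourhood), hence they are encoded by atoms $[\tau_1](a,b)$ and $[\tau_2](a,c)$ with $\tau_1\neq\tau_2$ in $\mathsf{lin}(D)$. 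The two distinct linearizations $\sigma_{\tau_1}$ and $\sigma_{\tau_2}$ fire separately, producing \emph{two distinct} nulls $\bot^{z}_{\sigma_{\tau_1},\{x\mapsto a\}}$ and $\bot^{z}_{\sigma_{\tau_2},\{x\mapsto a\}}$ in the linearized chase. So the term-level correspondence is genuinely many-to-one, not a bijection; your proposed $\phi$ is not even well-defined as a function on nulls, because the simulation you describe depends on the full trigger (in particular on the non-frontier variable $y$), not just on the null it names.

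The paper handles this by defining, for each term $t$ (respectively atom $\alpha$) of $\chase{D}{\dep}$, the \emph{set} $\mathsf{EL}(t)$ (respectively $\mathsf{EL}(\alpha)$) of linearized-chase terms (atoms) equivalent to it, and then proving that (i) these sets are pairwise disjoint, (ii) each is non-empty, and (iii) their union is all of $\chase{\mathsf{lin}(D)}{\mathsf{lin}(\dep)}$---i.e., a partition rather than a bijection---together with the fact that every $u\in\mathsf{EL}(t)$ has $\depth{u}=\depth{t}$. This is enough for item~(2): every depth realised on one side is realised on the other. Your argument can be repaired along exactly these lines, but you must drop the bijection claim and instead maintain the weaker invariant that the correspondence is a depth-preserving surjection-in-both-directions. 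The inductive step then has to show that \emph{every} linearization $\sigma_{\tau,g}$ whose body atom appears in the linearized chase produces nulls equivalent to some null of the original chase (not just the particular one picked out by your forward simulation), which is where the type property of guarded TGDs does the real work.
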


\noindent
\textbf{Characterizing Non-Uniform Termination.} We proceed to establish a result for guarded TGDs analogous to Theorems~\ref{thm:characterization-simple-linear} and~\ref{thm:characterization-linear} via a combination of linearization and simplification. To this end, we first provide a database-independent upper bound on the depth of terms occurring in a chase instance via the function $\mathsf{d}_{\class{G}} : \class{G} \ra \mathbb{N}$ introduced in Section~\ref{sec:generic-bound}; recall that $\mathsf{d}_{\class{G}}(\dep) = |\sch{\dep}| \cdot \arity{\dep}^{2 \cdot \arity{\dep}+1} \cdot 2^{|\sch{\dep}| \cdot \arity{\dep}^{\arity{\dep}}}$. In what follows, for brevity, we write $\gsimple{\cdot}$ instead of $\simple{\mathsf{lin}(\cdot)}$.

\begin{lemma}\label{lem:depth-bound-guarded}
	Consider a database $D$, and $\dep \in \class{G}$ with $\gsimple{\dep}$ being $\gsimple{D}$-weakly-acyclic. Then $\mathsf{maxdepth}(D,\dep) \leq \mathsf{d}_{\class{G}}(\dep)$.
\end{lemma}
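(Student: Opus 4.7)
The plan is to mimic the proof of Lemma~\ref{lem:depth-bound-linear}, but now we must apply \emph{both} linearization and simplification before invoking the weak-acyclicity depth bound for simple linear TGDs. By item~(2) of Proposition~\ref{pro:linearization} we have
\[
\mathsf{maxdepth}(D,\dep)\ =\ \mathsf{maxdepth}(\mathsf{lin}(D),\mathsf{lin}(\dep)),
\]
and then by item~(2) of Proposition~\ref{pro:simplification} applied to the linear set $\mathsf{lin}(\dep)$,
\[
\mathsf{maxdepth}(\mathsf{lin}(D),\mathsf{lin}(\dep))\ =\ \mathsf{maxdepth}(\gsimple{D},\gsimple{\dep}).
\]
Since by assumption $\gsimple{\dep}$ is $\gsimple{D}$-weakly-acyclic, Lemma~\ref{lem:depth-bound-sl} gives the upper bound $|\sch{\gsimple{\dep}}| \cdot \arity{\gsimple{\dep}}$ on the right-hand side. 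The remaining task is therefore purely combinatorial: bound the schema and arity of $\gsimple{\dep}$ in terms of $|\sch{\dep}|$ and $\arity{\dep}$.

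For the arity, observe that linearization produces predicates of the form $[\tau]$ whose arity equals the size of the atom encoded in $\tau$, hence at most $\arity{\dep}$; simplification can only reduce arity, so $\arity{\gsimple{\dep}} \leq \arity{\dep}$. For the schema, I count in two stages. In $\mathsf{lin}(\dep)$ a predicate is determined by a pair (atom shape, type), where a type is a set of atoms over $\sch{\dep}$ mentioning only the (at most $\arity{\dep}$) terms of the atom. The total number of such atoms is at most $|\sch{\dep}|\cdot \arity{\dep}^{\arity{\dep}}$, so the number of types is at most $2^{|\sch{\dep}|\cdot \arity{\dep}^{\arity{\dep}}}$, and the number of atom shapes is at most $|\sch{\dep}|\cdot \arity{\dep}^{\arity{\dep}}$; multiplying gives
\[
|\sch{\mathsf{lin}(\dep)}|\ \leq\ |\sch{\dep}|\cdot \arity{\dep}^{\arity{\dep}}\cdot 2^{|\sch{\dep}|\cdot \arity{\dep}^{\arity{\dep}}}.
\]
The second stage (simplification) replaces each predicate of arity $k\leq \arity{\dep}$ by at most $k^k \leq \arity{\dep}^{\arity{\dep}}$ specialized copies (one per specialization of its arguments), so
\[
|\sch{\gsimple{\dep}}|\ \leq\ |\sch{\dep}|\cdot \arity{\dep}^{2\arity{\dep}}\cdot 2^{|\sch{\dep}|\cdot \arity{\dep}^{\arity{\dep}}}.
\]

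Combining the two estimates,
\[
\mathsf{maxdepth}(D,\dep)\ \leq\ |\sch{\gsimple{\dep}}|\cdot \arity{\gsimple{\dep}}\ \leq\ |\sch{\dep}|\cdot \arity{\dep}^{2\arity{\dep}+1}\cdot 2^{|\sch{\dep}|\cdot \arity{\dep}^{\arity{\dep}}},
\]
which is exactly $\mathsf{d}_{\class{G}}(\dep)$. The main obstacle is really only the bookkeeping: the chain of equalities is a direct consequence of Propositions~\ref{pro:linearization} and~\ref{pro:simplification} together with Lemma~\ref{lem:depth-bound-sl}, so the substantive content is a tight count of predicates produced by linearization (dominated by the doubly exponential $2^{|\sch{\dep}|\cdot \arity{\dep}^{\arity{\dep}}}$ factor coming from the possible types), and then of the blow-up caused by simplification (a further $\arity{\dep}^{\arity{\dep}}$ factor). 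Care is needed to ensure that the bounds on type size and specializations use the arity of $\dep$ itself and not that of $\mathsf{lin}(\dep)$, but this is fine because the arity never exceeds $\arity{\dep}$ along the whole construction.
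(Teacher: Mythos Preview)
Your proposal is correct and follows essentially the same approach as the paper: chain the maxdepth equalities from Propositions~\ref{pro:linearization} and~\ref{pro:simplification}, apply Lemma~\ref{lem:depth-bound-sl} to the resulting simple linear set, and then bound $|\sch{\gsimple{\dep}}|$ and $\arity{\gsimple{\dep}}$ via the same two-stage count (types for linearization, specializations for simplification). The paper presents the same computation slightly more tersely, but the ingredients and the resulting bounds coincide.
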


\begin{proof}
	By Lemma~\ref{lem:depth-bound-sl}, and item (2) of Propositions~\ref{pro:simplification} and~\ref{pro:linearization},
	\begin{eqnarray*}
		\mathsf{maxdepth}(D,\dep) &\leq& \mathsf{d}_{\class{SL}}(\gsimple{\dep})\\
		&=& |\sch{\gsimple{\dep}}| \cdot \arity{\gsimple{\dep}}.
	\end{eqnarray*}
	It is not difficult to show that
	\[
		|\sch{\gsimple{\dep}}|\ \leq\ |\sch{\mathsf{lin}(\dep)}| \cdot \arity{\mathsf{lin}(\dep)}^{\arity{\mathsf{lin}(\dep)}}.
	\]
	We further know that
	\[
	|\sch{\ling{\dep}}|\ \leq\ |\sch{\dep}| \cdot \arity{\dep}^{\arity{\dep}} \cdot 2^{|\sch{\dep}| \cdot \arity{\dep}^{\arity{\dep}}}.
	\]
	Since $\arity{\gsimple{\dep}} \leq \arity{\mathsf{lin}(\dep)} = \arity{\dep}$, we can conclude that $\mathsf{maxdepth}(D,\dep) \leq \mathsf{d}_\class{G}(\dep)$, and the claim follows.
\end{proof}

Let $f_\class{G}$ be the function from $\class{G}$ to $\mathbb{N}$ defined as follows:
\[
f_\class{G}(\dep)\ =\ \left(\mathsf{d}_{\class{G}}(\dep)+1\right) \cdot ||\dep||^{2 \cdot \arity{\dep} \cdot (\mathsf{d}_{\class{G}}(\dep)+1)}.
\]
By using the generic bound provided by Proposition~\ref{pro:generic-bound}, the direction $(1) \Rightarrow (3)$ of Theorem~\ref{thm:characterization-linear}, item (1) of Proposition~\ref{pro:linearization}, and Lemma~\ref{lem:depth-bound-guarded}, it is an easy task to show the following:

\begin{theorem}\label{thm:characterization-guarded}
	Consider a database $D$, and a set $\dep \in \class{G}$ of TGDs. The following are equivalent:
	\begin{enumerate}
		\item $\dep \in \class{CT}_D$.
		\item $|\chase{D}{\dep}|\ \leq\ |D|  \cdot f_\class{G}(\dep)$.
		\item $\gsimple{\dep}$ is $\gsimple{D}$-weakly-acyclic.
	\end{enumerate}
\end{theorem}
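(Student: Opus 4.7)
The plan is to mirror the template used for Theorems~\ref{thm:characterization-simple-linear} and~\ref{thm:characterization-linear}, chaining together the preservation results for $\mathsf{lin}(\cdot)$ and $\simple{\cdot}$ already established. The implication $(2) \Rightarrow (1)$ is immediate: a finite upper bound on $|\chase{D}{\dep}|$ forces the chase instance to be finite, i.e., $\dep \in \class{CT}_D$.

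For $(1) \Rightarrow (3)$, I would first apply item (1) of Proposition~\ref{pro:linearization} to transfer chase termination from $\dep$ with respect to $D$ over to $\mathsf{lin}(\dep)$ with respect to $\mathsf{lin}(D)$, obtaining $\mathsf{lin}(\dep) \in \class{CT}_{\mathsf{lin}(D)}$. Since $\mathsf{lin}(\dep) \in \class{L}$ by construction, I can then invoke the implication $(1) \Rightarrow (3)$ of Theorem~\ref{thm:characterization-linear} to conclude that $\simple{\mathsf{lin}(\dep)}$ is $\simple{\mathsf{lin}(D)}$-weakly-acyclic, which, unfolding the abbreviation introduced just before Lemma~\ref{lem:depth-bound-guarded}, is exactly the condition that $\gsimple{\dep}$ is $\gsimple{D}$-weakly-acyclic.

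For $(3) \Rightarrow (2)$, I would plug the database-independent depth bound from Lemma~\ref{lem:depth-bound-guarded}, namely $\mathsf{maxdepth}(D,\dep) \leq \mathsf{d}_\class{G}(\dep)$, into the generic size bound of Proposition~\ref{pro:generic-bound}, yielding
\[
|\chase{D}{\dep}|\ \leq\ |D| \cdot (\mathsf{d}_\class{G}(\dep)+1) \cdot ||\dep||^{2 \cdot \arity{\dep} \cdot (\mathsf{d}_\class{G}(\dep)+1)}\ =\ |D| \cdot f_\class{G}(\dep),
\]
which is exactly statement (2). This closes the cycle of implications.

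Since all of the heavy technical work, namely Proposition~\ref{pro:linearization} and Lemma~\ref{lem:depth-bound-guarded}, has already been carried out, the proof of the theorem itself is a short chaining argument through results already in place. The genuine obstacle in this section is not the theorem but rather the correctness and depth-preservation of linearization (Proposition~\ref{pro:linearization}), which requires reconciling the fact that $\chase{D}{\dep}$ and $\chase{\mathsf{lin}(D)}{\mathsf{lin}(\dep)}$ live on structurally different universes of terms and atoms, with no immediate bijection between them; that obstacle, however, is discharged before the theorem is stated, so here I only need to invoke it.
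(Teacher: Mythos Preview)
Your proof is correct and follows essentially the same approach as the paper, which explicitly lists Proposition~\ref{pro:generic-bound}, the direction $(1) \Rightarrow (3)$ of Theorem~\ref{thm:characterization-linear}, item (1) of Proposition~\ref{pro:linearization}, and Lemma~\ref{lem:depth-bound-guarded} as the ingredients and declares the theorem ``an easy task'' from there. Your chaining of these results, including the unfolding of $\gsimple{\cdot}$ as $\simple{\mathsf{lin}(\cdot)}$, matches the intended argument exactly.
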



We complement the above with a result that states the following: the size of the chase instance is unavoidably triple-exponential in the arity, and double-exponential in the number of predicates of the schema; the rather involved construction is in the appendix.

\begin{theorem}\label{the:lower-bound-guarded}
	There exists a family of databases $\{D_{\ell}\}_{\ell > 0}$ with $\ell = |D_{\ell}|$, and a family of sets of TGDs $\{\dep_{n,m} \in \class{G} \cap \class{CT}_{D_\ell}\}_{n,m>0}$ with $n = \frac{|\sch{\dep_{n,m}}|}{4}-3$ and $m = \frac{\arity{\dep_{n,m}}}{2}-2$, such that
	\[
	|\chase{D_{\ell}}{\dep_{n,m}}|\ \geq\ \ell \cdot 2^{\left(2^n \cdot \left(2^{\left(2^m\right)}-1\right)\right)}.
	\]
\end{theorem}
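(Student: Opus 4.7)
The plan is to build on the construction from Theorem~\ref{the:lower-bound-linear}, which already achieves a double-exponential lower bound in the linear case, and extend it by one further exponential in both the arity parameter $m$ and the number of predicates $n$, leveraging the additional expressive power of guardedness over linearity. Concretely, I would construct a family of guarded TGD sets $\dep_{n,m}$ whose chase simulates an iteration through $2^{2^m}-1$ states of a doubly-exponential counter, where at each state there are $2^{2^n}$ possible ``colors'' that each atom can take, together yielding the claimed $\ell \cdot 2^{(2^n \cdot (2^{2^m}-1))}$ atoms once combined with the $\ell$-sized database chain $D_\ell$.

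The two main ingredients would be the following. First, for the exponential blow-up in the number of predicates ($n \mapsto 2^n$), I would use a type-indexed collection of predicates, one per subset of a base set of $n$ primitive predicate symbols; guardedness allows rule bodies to have multiple atoms sharing a single guard, so one can encode ``which subset of primitives currently holds'' in a companion atom that travels alongside the guard, effectively granting the chase access to $2^n$ distinct predicate flavors without losing guardedness. Second, for the exponential blow-up in the arity ($2^m-1 \mapsto 2^{2^m}-1$), I would design a nested counter whose outer $m$ positions enumerate $2^m$ ``macro-ticks'' and whose inner $m$ positions enumerate $2^m$ ``micro-ticks,'' so that the overall counter ranges over $2^{2^m}$ values; this exploits guardedness to have a rule whose body contains both a macro-counter atom and a micro-counter atom, both guarded by a single wider atom of arity roughly $2(m+2)$.

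The hard part will be twofold. First, maintaining guardedness throughout: the combination of the subset-indexed predicates and the nested counter naturally wants body atoms that share variables but cannot all be covered by one guard, so the construction will need a carefully packaged guard atom of arity $\arity{\dep_{n,m}} = 2(m+2)$ that simultaneously carries the current macro-counter value, the current micro-counter value, and pointers into the subset-indexed structure. Second, ensuring that $\dep_{n,m} \in \class{CT}_{D_\ell}$: the chase must terminate on the specific starting database $D_\ell$, even though the construction is deliberately engineered to produce as many atoms as possible. The standard device for this is to make every rule strictly increment the nested counter (or strictly extend the type-index), together with a proof that no rule has an applicable trigger once both counters saturate; this can be verified via Theorem~\ref{thm:characterization-guarded} by exhibiting $D_\ell$-weak-acyclicity of $\gsimple{\dep_{n,m}}$, though in practice it is easier to exhibit a finite valid chase derivation directly.

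Finally, counting is routine once the construction is in place: each of the $\ell$ atoms of $D_\ell$ generates, via the guarded chase forest of Section~\ref{sec:generic-bound}, an isomorphic subtree whose leaves correspond one-to-one with pairs (macro-counter, micro-counter, type-index) $\in [2^{2^m}-1] \times 2^{[2^n]}$, and each such leaf contributes at least one distinct atom, giving the lower bound $\ell \cdot 2^{2^n \cdot (2^{2^m}-1)}$.
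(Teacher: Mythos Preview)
Your proposal has a genuine gap in the counting mechanism, and the final paragraph makes it visible. You write that the leaves correspond to pairs in $[2^{2^m}-1] \times 2^{[2^n]}$, which yields at most $(2^{2^m}-1)\cdot 2^{2^n}$ atoms per database fact, not the required $2^{\,2^n\cdot(2^{2^m}-1)}$. A counter walking through $2^{2^m}-1$ states, even one decorated with exponentially many colors, produces a \emph{chain} (or a product of chains), and chains give multiplicative, not exponential, size. What is missing is a \emph{branching} device: each node must generate at least two fresh children, so that after $d$ levels one has $2^d$ leaves. The paper's construction does exactly this: a guarded TGD of the shape
\[
\text{Node}(x,y,z,o),\ \text{NonMaxDepth}(y)\ \ra\ \exists w\,\exists w'\ \text{Node}(y,w,z,o),\ \text{Node}(y,w',z,o),\ldots
\]
grows a full binary tree of depth $2^{2^m}-1$, and then the $2^n$ ``strata'' iterate this tree-building so that the leaves of each stratum become the roots of the next, yielding $\bigl(2^{\,2^{2^m}-1}\bigr)^{2^n}=2^{\,2^n\cdot(2^{2^m}-1)}$ leaves. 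Your first paragraph gestures toward something that could be made to work (``at each state multiply by $2^{2^n}$ colors''), but nothing in the construction you sketch actually forces the chase to branch; without that, the bound collapses.

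There is also a schema-size mismatch. You propose ``one predicate per subset of a base set of $n$ primitives,'' which introduces $2^n$ predicates into $\sch{\dep_{n,m}}$; but the theorem fixes $n=\frac{|\sch{\dep_{n,m}}|}{4}-3$, i.e.\ the schema must be linear in $n$. The paper instead keeps $n$ binary predicates $S_1,\ldots,S_n$ and encodes the stratum id as an $n$-bit number spread across them (so $2^n$ strata from $n$ predicates), and analogously encodes the depth id as a $2^m$-bit number by using $m$ argument positions to address which bit is being stored (so a $2^{2^m}$-valued counter from arity $\approx m$). Both counters are then incremented by standard pivot/change/copy rules. Your ``nested macro/micro counter'' could be made to realize the depth part, but the subset-indexed predicates for the stratum part need to be replaced by this bitwise encoding to meet the schema bound.
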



\noindent
\textbf{Complexity.} We proceed with the complexity of $\mathsf{ChTrm}(\class{G})$. We first note that the naive procedure obtained from item $(2)$ of Theorem~\ref{thm:characterization-guarded} shows that $\mathsf{ChTrm}(\class{G})$ is in \textsc{3ExpTime}, in \textsc{2ExpTime} if we bound the arity, and in \textsc{PTime} in data complexity.
This is provably the best that we can get from the naive approach according to Theorem~\ref{the:lower-bound-guarded}.
However, apart from the case of data complexity, the exact complexity of $\mathsf{ChTrm}(\class{G})$ is significantly lower.

\begin{theorem}\label{the:complexity-guarded}
	$\mathsf{ChTrm}(\class{G})$ is \textsc{2ExpTime}-complete, \textsc{ExpTime}-complete for bounded arity, and \textsc{PTime}-complete in data complexity.
\end{theorem}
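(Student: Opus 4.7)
The plan is to prove Theorem~\ref{the:complexity-guarded} via item~(3) of Theorem~\ref{thm:characterization-guarded}, which reduces $\mathsf{ChTrm}(\class{G})$ on an input $(D,\dep)$ to checking whether $\gsimple{\dep}$ is $\gsimple{D}$-weakly-acyclic, and then applying the algorithm already developed for $\class{SL}$ in Theorem~\ref{the:complexity-sl} to this simplified instance. The bulk of the work is to control the size of $\gsimple{\dep}$ and $\gsimple{D}$ precisely enough that the overhead introduced by linearization followed by simplification does not push us beyond the target classes.

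For the upper bounds, I would first bound the size of $\gsimple{\dep}$ as in the proof of Lemma~\ref{lem:depth-bound-guarded}: from the construction of $\ling{\cdot}$ one has $|\sch{\ling{\dep}}| \leq |\sch{\dep}| \cdot \arity{\dep}^{\arity{\dep}} \cdot 2^{|\sch{\dep}| \cdot \arity{\dep}^{\arity{\dep}}}$ with $\arity{\ling{\dep}} = \arity{\dep}$, and simplification multiplies the predicate count by at most $\arity{\dep}^{\arity{\dep}}$ without enlarging the arity. Consequently $|\gsimple{\dep}|$ is doubly exponential in $\arity{\dep}$ and only singly exponential in $|\sch{\dep}|$. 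The database $\gsimple{D}$ consists of $|D|$ atoms, each tagged with the atomic type of the corresponding atom of $D$ under $\dep$, and these types can be computed by a standard guarded-reasoning procedure in time polynomial in $|D|$ and $|\gsimple{\dep}|$. Once $\gsimple{\dep}$ and $\gsimple{D}$ are materialized, the \textsc{NL} weak-acyclicity check extracted from the proof of Theorem~\ref{the:complexity-sl}, applied to inputs of size $|\gsimple{\dep}| + |\gsimple{D}|$, yields \textsc{2ExpTime} in the general case, \textsc{ExpTime} under bounded arity (where $\arity{\dep}^{\arity{\dep}}$ is a constant and $|\gsimple{\dep}|$ collapses to singly exponential), and \textsc{PTime} in data complexity (where $\gsimple{\dep}$ has constant size and $\gsimple{D}$ is polynomial in $|D|$, the dominating cost being the type-computation phase).

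The matching hardness bounds will be transferred from the uniform chase-termination problem for guarded TGDs studied in~\cite{CaGP15}, which is known to be \textsc{2ExpTime}-complete, \textsc{ExpTime}-complete under bounded arity, and \textsc{PTime}-hard in data complexity. Following the pattern already used for $\mathsf{ChTrm}(\class{SL})$ and $\mathsf{ChTrm}(\class{L})$ in Theorems~\ref{the:complexity-sl} and~\ref{the:complexity-linear}, the plan is to exhibit, for each uniform hardness reduction of~\cite{CaGP15}, a specific witness database on which the associated set of guarded TGDs fails to terminate whenever the uniform instance is a ``no''. The main obstacle is on the upper-bound side: one has to verify that the correspondence between $\chase{D}{\dep}$ and $\chase{\gsimple{D}}{\gsimple{\dep}}$ underlying Propositions~\ref{pro:linearization} and~\ref{pro:simplification} permits constructing $\gsimple{D}$ in polynomial time in data complexity, and that the double blow-up of $\ling{\cdot}$ followed by $\simple{\cdot}$ never compounds to a triple exponential, which would otherwise push the general upper bound to the naive \textsc{3ExpTime} supplied by item~(2) of Theorem~\ref{thm:characterization-guarded}.
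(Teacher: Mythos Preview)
Your upper-bound argument is essentially the paper's: materialize $\gsimple{D}$ and $\gsimple{\dep}$, observe that the blow-up is polynomial in $|D|$, singly exponential in $|\sch{\dep}|$, and doubly exponential in $\arity{\dep}$, then run the $\class{SL}$ procedure from Theorem~\ref{the:complexity-sl}. The paper cites~\cite{GoMP14} for the complexity of computing $\ling{D}$ and $\ling{\dep}$ rather than re-deriving the bounds, but the substance is the same. The combined-complexity lower bounds are also handled as you describe, inherited from~\cite{CaGP15} via a fixed witness database.

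There is, however, a genuine gap in your treatment of the \textsc{PTime}-hardness in data complexity. You write that the uniform chase-termination problem for guarded TGDs is ``\textsc{PTime}-hard in data complexity'', but this is a category error: the uniform problem has no database input, so it has no data complexity. The witness-database pattern you invoke (fix $D_\dep$ and vary $\dep$) transfers combined-complexity lower bounds, not data-complexity ones; for data complexity the set $\dep$ must be fixed while $D$ varies, and nothing in the uniform hardness reductions of~\cite{CaGP15} gives you that directly.

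The paper closes this gap by a different route. It invokes the \emph{looping operator} of~\cite{CaGP15} (their Proposition~15), which shows that if propositional atom entailment $\mathsf{PAE}(\class{CT} \cap \class{G})$ is $C$-hard in data complexity for a class $C$ closed under logspace reductions, then $\mathsf{ChTrm}(\class{G})$ is co$C$-hard in data complexity. The required \textsc{PTime}-hardness of $\mathsf{PAE}(\class{CT} \cap \class{G})$ in data complexity is then taken from~\cite{CaGL12}. Your proposal should replace the erroneous appeal to a non-existent data-complexity bound for uniform termination with this argument.
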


The above result is shown by explicitly constructing $\gsimple{D}$ and $\gsimple{\dep}$, on input $D$ and $\dep$, and then relying on the procedure for $\mathsf{ChTrm}(\class{SL})$. 
%
The \textsc{2ExpTime}/\textsc{ExpTime} lower bounds are inherited from~\cite{CaGP15}, where uniform chase termination is studied. The \textsc{PTime}-hardness in data complexity is shown by exploiting a technique from~\cite{CaGP15}, known as {\em looping operator}, which allows us to transfer hardness results for ontological query answering to our problem.
\section{Conclusions}\label{sec:conclusions}

%
%
The results of this work provide a rather complete picture concerning the size and complexity of the non-uniformly terminating semi-oblivious chase for guarded TGDs, and subclasses thereof.
The fact that for linear TGDs the problem  can be solved by evaluating a UCQ, which depends only on the TGDs, over the given database is particularly interesting as it allows us to exploit standard database systems. We are currently working on an experimental evaluation of the obtained procedures.
An interesting research direction is to perform a similar analysis for the restricted version of the chase, for which we have indications that it will be even more challenging.


\bibliographystyle{ACM-Reference-Format}

\bibliography{references}


\newpage
\appendix

\section{Undecidability of $\mathsf{ChTrm}(\dep^\star)$}

We provide a reduction from the halting problem. In fact, we adapt the reduction given in~\cite{DeNR08} so that only the database depends on the Turing machine, while the set of TGDs is fixed.
Consider a deterministic Turing machine $M = (S,\Lambda,f,s_0)$, where $S$ is the state set, $\Lambda$ is the tape alphabet, $f: S \times \Lambda \ra S \times \Lambda \times \{\ra,-,\leftarrow\}$ is the transition function, and $s_0 \in S$ is the initial state. We assume that $\Lambda$ contains the special symbols $\triangleright$ (marking the beginning of the tape), $\triangleleft$ (marking the end of the tape), and $\sqcup$ (blank symbol).
We are going to define a database $D_M$ such that the following are equivalent:
\begin{enumerate}
	\item $M$ halts on the empty input.
	\item There exists a finite chase derivation of $D_M$ w.r.t.~$\dep^\star$ (i.e., $\chase{D_M}{\dep^\star}$ is finite), where $\dep^\star$ is a fixed set of TGDs.
\end{enumerate}

\subsection*{The Database $D_M$}

The database essentially stores the transition function of $M$, as well as the initial configuration of $M$ on the empty input. It also stores some auxiliary atoms that will allow us to access special constants, and thus, keep $\dep^\star$ constant-free.

We use the $5$-ary predicate ${\rm Trans}$ to store the transition rules of $M$; ${\rm Trans}(x_1,x_2,x_3,x_4,x_5)$ encodes the fact that $f(x_1,x_2) = (x_3,x_4,x_5)$.
We also use the ternary predicates ${\rm Tape}$ and ${\rm Head}$ to store the content of the tape, the position of the head, and the current state; ${\rm Tape}(x,z,y)$ means that the cell $(x,y)$, which should be seen as a ``horizontal'' edge from $x$ to $y$ in a grid, contains the symbol $z$, while  ${\rm Head}(x,z,y)$ means that the head points at cell $(x,y)$ and the state is $z$.
Formally, the database $D_M$ is defined as
\begin{eqnarray*}
&& \{{\rm Trans}(s,a,s',a',d) \mid (s,a) \in S \times \Lambda \text{ and } f(s,a) = (s',a',d)\}\\
&\cup& \{{\rm Tape}(c_0,\triangleright,c_1), {\rm Tape}(c_1,\sqcup,c_2), {\rm Head}(c_1,s_0,c_2),{\rm Tape}(c_2,\triangleleft,c_3)\}\\
&\cup& \{{\rm LDir}(\leftarrow),{\rm SDir}(-),{\rm RDir}(\ra)\}\\
&\cup& \{{\rm Blank}(\sqcup),{\rm End}(\triangleleft)\} \\
&\cup& \{{\rm NormSymb}(a) \mid a \in \Lambda \setminus \{\triangleright,\triangleleft\}\}.
\end{eqnarray*}
This complete the construction of $D_M$.

\subsection*{The Fixed Set $\dep^\star$ of TGDs}

The goal of $\dep^\star$ is to simulate the computation of $M$ on the empty input. This is done by encoding the computation of $M$ as a grid with ``horizontal'' and ``vertical'' edges. Each row of the  grid (i.e., a sequence of nodes connected via ``horizontal'' edges) encodes a configuration of $M$. Moreover, successive rows indicate successive configurations, which are connected via two kinds of ``vertical'' edges, those to the left and those to the right of the head. We proceed with the definition of $\dep^\star$.

We have TGDs that execute the transition rules of $M$ that move the head to the right. We need to distinguish the two cases where the head reaches the end of the tape or not:
\begin{multline*}
{\rm Trans}(x_1,x_2,x_3,x_4,x_5), {\rm RDir}(x_5), {\rm NormSymb}(w),\\
{\rm Head}(x,x_1,y), {\rm Tape}(x,x_2,y), {\rm Tape}(y,w,z)\ \ra\\
\exists x' \exists y' \exists z' \, L(x,x'),R(y,y'),R(z,z'),\\
{\rm Tape}(x',x_4,y'),{\rm Head}(y',x_3,z'),{\rm Tape}(y',w,z').
\end{multline*}
and
\begin{multline*}
{\rm Trans}(x_1,x_2,x_3,x_4,x_5), {\rm RDir}(x_5), {\rm Blank}(u),{\rm End}(w),\\
{\rm Head}(x,x_1,y),
{\rm Tape}(x,x_2,y), {\rm Tape}(y,w,z)\ \ra\\
\exists x' \exists y' \exists z' \exists w' \, L(x,x'),R(y,y'),R(z,z'),\\
{\rm Tape}(x',x_4,y'),{\rm Head}(y',x_3,z'),\\
{\rm Tape}(y',u,z'),{\rm Tape}(z',w,w').
\end{multline*}

We also have a TGD that executes the transition rules that move the head to the left. We assume, w.l.o.g., that $M$ is well-behaved, and never reads beyond the first cell:
\begin{multline*}
{\rm Trans}(x_1,x_2,x_3,x_4,x_5), {\rm LDir}(x_5),\\
{\rm Tape}(x,w,y),{\rm Head}(y,x_1,z), {\rm Tape}(y,x_2,z)\ \ra\\
\exists x' \exists y' \exists z' \, R(x,x'),R(y,y'),L(z,z'),\\
{\rm Head}(x',x_3,y'),{\rm Tape}(x',w,y'),{\rm Tape}(y',x_4,z').
\end{multline*}

We also have a TGD that executes the transition rules that do not move the head:
\begin{multline*}
{\rm Trans}(x_1,x_2,x_3,x_4,x_5), {\rm SDir}(x_5),\\
{\rm Head}(x,x_1,y), {\rm Tape}(x,x_2,y)\ \ra\\
\exists x' \exists y' \, L(x,x'),R(y,y'),\\
{\rm Head}(x',x_3,y'),{\rm Tape}(x',x_4,y').
\end{multline*} 

We finally need to copy the content of the cells that remained untouched after a transition. This is done by copying the content of the cells to the left and to the right of the head:
\begin{eqnarray*}
	{\rm Tape}(x,z,y), L(y,y') &\ra& \exists x' \, L(x,x'),{\rm Tape}(x',z,y')\\
	{\rm Tape}(x,z,y), R(x,x') &\ra& \exists y' \, {\rm Tape}(x',z,y'),R(y,y').
\end{eqnarray*}
This completes the definition of $\dep^\star$. It is clear that $\dep^\star$ does not depend on the Turing machine $M$.
It is not difficult to verify that $M$ halts on the empty input iff there exists a finite chase derivation of $D_M$ w.r.t.~$\dep$ (i.e., $\chase{D_M}{\dep}$ is finite), and the claim follows.


\section{Proof of Lemma~\ref{lem:depth-bound}}

We start by making a crucial observation due to guardedness.
Consider an arbitrary atom $\beta \in \gtree{\delta,\alpha}$, and let $T$ be the set of all terms occurring in $\beta$ that appear also in some ancestor of $\beta$ in $\gtree{\delta,\alpha}$. Then, there exists an ancestor $\gamma$ of $\beta$ in $\gtree{\delta,\alpha}$ that mentions all the terms of $T$. In other words, all the terms of $\beta$ that are not invented in $\beta$ (according to the derivation $\delta$), occur all together in a single ancestor of $\beta$ in $\gtree{\delta,\alpha}$. We are now ready to prove our claim by induction on the depth $i \geq 0$.

\medskip

\noindent
\textbf{Base Case:} Consider an atom $\beta \in \mathsf{gtree}^{0}(\delta,\alpha)$. From the above observation, we get that all the terms in $\beta$ occur also in $\alpha$. Thus, the maximum number of facts in $\gtree{\delta,\alpha}$ of depth 0 is the number of all possible atoms that can be formed using predicates of $\sch{\dep}$ and constants in $\alpha$.
Since $\alpha$ contains at most $\arity{\dep}$ distinct constants,
\begin{eqnarray*}
|\mathsf{gtree}^{0}(\delta,\alpha)| &\leq& |\sch{\dep}| \cdot \arity{\dep}^{\arity{\dep}}\\
&\leq& ||\dep||^{2 \cdot \arity{\dep}},
\end{eqnarray*}
and the claim follows.

\medskip

\noindent
\textbf{Induction Step:} Let $i>0$. We write $\mathsf{gtree}^{i}_{\exists}(\delta,\alpha)$ for the set of atoms of $\mathsf{gtree}^{i}(\delta,\alpha)$ in which a new null of depth $i$ is invented. From the observation made above, we conclude that $\beta \in \mathsf{gtree}^{i}(\delta,\alpha) \setminus \mathsf{gtree}^{i}_{\exists}(\delta,\alpha)$ implies $\beta$ has an ancestor $\gamma$ in $\mathsf{gtree}^{i}(\delta,\alpha)$ such that $\gamma \in \mathsf{gtree}^{i}_{\exists}(\delta,\alpha)$ and it contains all the terms occurring in $\beta$. Therefore, an atom $\beta$ occurs in $\mathsf{gtree}^{i}(\delta,\alpha)$ only if all its terms occur together in an atom of $\mathsf{gtree}^{i}_{\exists}(\delta,\alpha)$.
Since each atom of $\mathsf{gtree}^{i}_{\exists}(\delta,\alpha)$ mentions at most $\arity{\dep}$ distinct terms, we get that
\[
|\mathsf{gtree}^{i}(\delta,\alpha)|\ \leq\ |\mathsf{gtree}^{i}_{\exists}(\delta,\alpha)| \cdot |\sch{\dep}| \cdot \arity{\dep}^{\arity{\dep}}.
\]
We now provide an upper bound on $|\mathsf{gtree}^{i}_{\exists}(\delta,\alpha)|$. Consider an atom $\beta \in \mathsf{gtree}^{i}_{\exists}(\delta,\alpha)$, and let $(\sigma,h)$ be the trigger in $\delta$ that generates $\beta$. From our observation above, we get that all the terms $h(\fr{\sigma})$ occur in an ancestor $\gamma$ of $\beta$ in $\mathsf{gtree}(\delta,\alpha)$. Moreover, since $\depth{\beta} = i$, the maximum depth over all terms in $h(\fr{\sigma})$ is $i-1$, and thus, $\gamma \in \mathsf{gtree}^{i-1}(\delta,\alpha)$. Since $|\fr{\sigma}| \leq \arity{\dep}$, we get that
\[
|\mathsf{gtree}^{i}_{\exists}(\delta,\alpha)|\ \leq\ |\atoms{\dep}| \cdot |\mathsf{gtree}^{i-1}(\delta,\alpha)| \cdot \arity{\dep}^{\arity{\dep}},
\]
Hence, we have that
\begin{multline*}
|\mathsf{gtree}^{i}(\delta,\alpha)|\ \leq\ |\atoms{\dep}| \cdot |\mathsf{gtree}^{i-1}(\delta,\alpha)| \cdot\\ \arity{\dep}^{\arity{\dep}} \cdot |\sch{\dep}| \cdot \arity{\dep}^{\arity{\dep}}.
\end{multline*}
By induction hypothesis, we get that
\begin{multline*}
|\mathsf{gtree}^{i}(\delta,\alpha)|\ \leq\ |\atoms{\dep}| \cdot ||\dep||^{2 \cdot \arity{\dep} \cdot i} \cdot\\ \arity{\dep}^{\arity{\dep}} \cdot |\sch{\dep}| \cdot \arity{\dep}^{\arity{\dep}}.
\end{multline*}
Since $|\atoms{\dep}| \cdot |\sch{\dep}| \cdot \arity{\dep}^{2 \cdot \arity{\dep}} \leq ||\dep||^{2 \cdot \arity{\dep}}$,
\begin{eqnarray*}
|\mathsf{gtree}^{i}(\delta,\alpha)| &\leq& ||\dep||^{2 \cdot \arity{\dep} \cdot i} \cdot ||\dep||^{2 \cdot \arity{\dep}}\\
&=& ||\dep||^{2 \cdot \arity{\dep} \cdot (i+1)},
\end{eqnarray*}
and the claim follows.


\section{Proof of Lemma~\ref{lem:depth-bound-sl}}

We show that, for every $t \in \adom{\chase{D}{\dep}}$, $\depth{t} \leq \mathsf{d}_{\class{SL}}(\dep)$, which implies that $\mathsf{maxdepth}(D,\dep) \leq \mathsf{d}_{\class{SL}}(\dep)$.
For a node $(R,i)$ in $\depg{\dep}$, an incoming path to $(R,i)$ is a (finite or infinite) path ending at $(R,i)$. We define the rank of $(R,i)$ as the maximum number of special edges over all incoming paths to $(R,i)$. It is clear that the nodes of $\depg{\dep}$ can be partitioned into $\{\Pi_\infty,\Pi_F\}$, where $\Pi_\infty$ (resp., $\Pi_F$) collects all the nodes with infinite (resp., finite) rank.

We first argue that no term in $\chase{D}{\dep}$ occurs at a position of $\Pi_\infty$. Consider a position $(P,i) \in \Pi_\infty$. It is clear that $(P,i)$ occurs in a cycle $C$ in $\depg{\dep}$ with a special edge. Since $\dep$ is $D$-weakly-acyclic, we conclude that $C$ is {\em not} $D$-supported. The latter means that there is not atom $R(\bar t) \in D$ such that $R \reach{\dep} P$. Therefore, there is no term in $\chase{D}{\dep}$ occurring at $(P,i)$.

We now proceed to show that the terms in $\chase{D}{\dep}$ occurring at positions of $\Pi_F$ are of depth at most $\mathsf{d}_{\class{SL}}(\dep)$. To this end, we partition the positions of $\Pi_F$ into $\{\Pi_0,\ldots,\Pi_r\}$, where, for each $i \in \{0,\ldots,r\}$, $\Pi_r$ collects all the nodes in $\depg{\dep}$ with rank $i$. We proceed to show the following claim:

\begin{claim}\label{cla:bound-depth}
	For each $i \in \{0,\ldots,r\}$, and term $t$ in $\chase{D}{\dep}$ occurring at position of $\Pi_i$, it holds that $\depth{t} \leq i$.
\end{claim}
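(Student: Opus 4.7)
The plan is to prove Claim~\ref{cla:bound-depth} by induction on $i$, which amounts to a structural induction on the term $t$. The invariant I will track is the stronger inequality $\depth{t} \leq \rank{\pi}$ whenever $t$ occurs at position $\pi$ in $\chase{D}{\dep}$.

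Before the induction, I would record two auxiliary facts. First, a monotonicity observation: if $\pi'$ is reachable from $\pi$ in $\depg{\dep}$ via a path of normal edges, then $\rank{\pi'} \geq \rank{\pi}$, since any incoming path to $\pi$ can be extended by this normal path without contributing additional special edges. Second, a propagation fact: every occurrence of a null $t = \bot_{\sigma,h_{|\fr{\sigma}}}^{z}$ at some position $(P,j)$ in the chase can be traced back to an initial position $(P_0,j_0)$ at which $z$ appears in some head atom of $\sigma$, with $(P,j)$ reachable from $(P_0,j_0)$ through normal edges only. This is proved by induction on the chase step at which $t$ first reaches $(P,j)$: whenever a subsequent trigger $(\tau,h')$ propagates $t$ to a new head position, it must be through a frontier variable $y$ of $\tau$ with $h'(y) = t$, which induces the required normal edges from the body position of $y$ to the head positions of $y$.

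For the inductive step, consider $t = \bot_{\sigma,h_{|\fr{\sigma}}}^{z}$ occurring at $(P,j) \in \Pi_i$, with an initial position $(P_0,j_0)$ as above. For each frontier variable $x \in \fr{\sigma}$ at a body position $\pi_x$, the atom $h(\body{\sigma})$ belongs to the chase, so $h(x)$ occurs at $\pi_x$; since $\depth{h(x)} < \depth{t}$, the inductive hypothesis gives $\depth{h(x)} \leq \rank{\pi_x}$. The construction of $\depg{\dep}$ contributes a special edge $(\pi_x,(P_0,j_0))$, whence $\rank{(P_0,j_0)} \geq \rank{\pi_x} + 1 \geq \depth{h(x)} + 1$. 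Taking the maximum over $x \in \fr{\sigma}$ yields $\rank{(P_0,j_0)} \geq \depth{t}$, and monotonicity finally gives $i = \rank{(P,j)} \geq \rank{(P_0,j_0)} \geq \depth{t}$, as required. The base case $i = 0$ is then immediate: if $t$ were a null, the above would force $\rank{(P,j)} \geq 1$, so $t$ must be a constant and $\depth{t} = 0$.

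The main obstacle I anticipate is the corner case where $\fr{\sigma} = \emptyset$: then $\depth{t} = 1$ by Definition~\ref{def:term-depth}, yet $\sigma$ contributes no edges at all to $\depg{\dep}$, so the special-edge step does not directly certify $\rank{(P_0,j_0)} \geq 1$. I would handle this either by extending $\depg{\dep}$ with a virtual source node connected by a special edge to every existential position of a fully-existential TGD, or by observing that such a $\sigma$ creates a single fixed null (determined entirely by $\sigma$ and $z$), which may be treated as an additional constant added to the database without affecting $D$-weak-acyclicity; the chain of inductive bounds then applies with this enlarged base.
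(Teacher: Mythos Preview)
Your argument is essentially the same as the paper's: both trace a null back along normal edges to a position where it was created, then use the incoming special edge from a frontier-variable body position to bump the rank. The paper organizes this as an induction on the rank $i$ (arguing that the source of the special edge must lie in $\bigcup_{k<i}\Pi_k$), whereas you organize it as an induction on $\depth{t}$ and explicitly isolate the monotonicity and propagation facts; the two inductions are interchangeable here.

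The corner case you flag is real, and the paper glosses over it as well. Its base case asserts that a term at a rank-$0$ position must lie in $\adom{D}$, but this fails exactly when some $\sigma\in\dep$ has $\fr{\sigma}=\emptyset$: the null $\bot_{\sigma,\emptyset}^{z}$ has depth $1$ yet $\sigma$ contributes no edges to $\depg{\dep}$, so its creation position can have rank $0$. Your second fix---absorbing the finitely many frontier-free nulls into the database as additional constants of depth $0$---is the cleaner of the two, since it leaves $\depg{\dep}$ and $D$-weak-acyclicity unchanged and restores both the base case and the special-edge step. The overall bound in Lemma~\ref{lem:depth-bound-sl} has enough slack that this patch does not disturb it.
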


\begin{proof}
	We proceed by induction on $i \in \{0,\ldots,r\}$.

\medskip

\noindent
\textbf{Base Case:} For $i=0$, we immediately get that $t \in \adom{D}$, which in turn implies that $\depth{t} = 0$, as needed.

\medskip

\noindent
\textbf{Induction Step:} Assume that $t$ appears in $\chase{D}{\dep}$ at a position $\pi \in \Pi_i$. If $t \in \adom{D}$, then $\depth{t} = 0 \leq i$. Assume now that $t$ is a null value. Note that $t$ may occur in $\chase{D}{\dep}$ at position $\pi$ for the following two reasons:
\begin{enumerate}
	\item By being generated during a chase step as a new null.
	\item By being propagated via a sequence of chase steps from the position $\pi' \in \Pi_j$, for $j \in \{0,\ldots,r\}$, at which it has been generate during a chase step.
\end{enumerate}

In case (1), note hat $t$ is generated at position $\pi$ during a chase step due to a special edge. Clearly, such an incoming special edge to $\pi$ must start at a node of $\bigcup_{k=0}^{i-1} \Pi_k$; otherwise, the rank of $\pi$ is greater that $i$, which is a contradiction. Thus, by induction hypothesis, we get that $\depth{t} \leq i$.

In case (2), we have a path of (non-special) edges from $\pi' \in \Pi_j$ to $\pi$ in $\depg{\dep}$. Observe that such a path can originate only from positions of $\bigcup_{k=0}^{i} 
\Pi_k$, that is, $j \leq i$; otherwise, the rank of $\pi$ is greater that $i$, which is a contradiction. 
In case $j < i$, by induction hypothesis, $\depth{t} \leq j < i$.
Now, if $j=i$, then, by applying a similar argument as in case (1) for position $\pi'$, we can show that $\depth{t} \leq i$. This completes the proof of Claim~\ref{cla:bound-depth}.
\end{proof}

To conclude the proof of Lemma~\ref{lem:depth-bound-sl}, observe that 
\[
r\ \leq\ |\sch{\dep}| \cdot \arity{\dep}\ =\ \mathsf{d}_{\class{SL}}(\dep).
\] 
Hence, by Claim~\ref{cla:bound-depth}, for each term $t$ in $\chase{D}{\dep}$ occurring at a position of $\Pi_F$, it holds that $\depth{t} \leq \mathsf{d}_{\class{SL}}(\dep)$.


\section{Proof of Lemma~\ref{lem:bounded-depth-implies-wa}}

Assume that $\dep$ is not $D$-weakly-acyclic. We proceed to show that there is no $k \geq 0$ such that $\mathsf{maxdepth}(D,\dep) \leq k$.
By hypothesis, there exists a $D$-supported cycle $C$ in $\depg{\dep}$ that contains a special edge. We first make a useful observation.
Since, by definition, no variable occurs twice in the body of a simple linear TGD, the existence of an atom in $D$ with predicate $R$, and the fact that $R \reach{\dep} P$, imply that there exists an atom in $\chase{D}{\dep}$ with predicate $P$. Moreover, for every predicate $S$ that occurs in $C$, there exists an atom with predicate $S$ in $\chase{D}{\dep}$.
Assume now that $((P,i),(T,j))$ is one of the special edges in $C$, and let $P(t_1,\ldots, t_n)$ be an atom in $\chase{D}{\dep}$, which exists due to the above observation. By definition of the dependency graph, the existence of $((P,i),(T,j))$ implies that there exists a TGD $\sigma \in \dep$ such that the variable that occurs at position $(P,i)$ in $\body{\sigma}$ is a frontier variable, and there exists an atom $\alpha$ in $\head{\sigma}$ with predicate $T$ having at position $(T,j)$ an existentially quantified  variable. Hence, since $\sigma$ is simple linear (i.e., no variable occurs more than once in $\body{\dep}$), $\chase{D}{\dep}$ must contain an atom of the form $T(u_1,\ldots,u_m)$ such that $\depth{u_j} > \depth{t_i}$.
Moreover, due to the cycle $C$, there must be another atom of the form $P(t'_1,\ldots,t'_m)$, where $\depth{t'_i} \ge \depth{u_j} > \depth{t_i}$; the latter again relies on the fact that $\dep$ consists of simple linear TGDs.
By iteratively applying the above argument, we can conclude that, for every integer $k  \ge 0$, there exists an atom of the form $P(t_1,\ldots,t_n)$ in $\chase{D}{\dep}$, where $\depth{t_i} > k$, and the claim follows.


\section{Proof of Theorem~\ref{the:lower-bound-sl}}
%

For each $\ell > 0$, we define the database
\[
D_\ell\ =\ \{R_0(c_1),\ldots,R_0(c_\ell)\}.
\]
Now, for each $n,m>0$, we define the set of TGDs
\[
\dep_{n,m}\ =\ \dep_{\text{start}}\ \cup\ \bigcup_{i=1}^{n} \dep_{i}^{\forall}\ \cup\ \bigcup_{i=1}^{n-1} \dep_{i}^{\exists},
\]
where $\dep_{\text{start}}$ consists of the TGD
\[
R_0(x) \ra \exists y_1 \cdots \exists y_m \, R_0(x),R_1(y_1,\ldots,y_m), 
\]
$\dep_{i}^{\forall}$ contains, for each $j \in [m]$, the TGDs
\begin{multline*}
	R_i(x_1,x_2,\ldots,x_{j-1},x_j,x_{j+1},\ldots,x_m)\ \ra\\ R_i(x_j,x_2,\ldots,x_{j-1},x_1,x_{j+1},\ldots,x_m)
\end{multline*}
and
\[
	R_i(x_1,x_2,\ldots,x_j,\ldots,x_m)\ \ra\ R_i(x_j,x_2\ldots,x_j,\ldots,x_m),
\]
and $\dep_{i}^{\exists}$ consists of the TGD
\[
R_i(x_1,\ldots,x_m)\ \ra\
\exists z_1 \cdots \exists z_m \, R_i(x_1,\ldots,x_m),
R_{i+1}(z_1,\ldots,z_m).
\]
This completes the construction of $\dep_{n,m}$. It can be verified that, for each $\ell,n,m>0$, $\dep_{n,m} \in \class{SL} \cap \class{CT}_{D_{\ell}}$. It remains to show that $|\chase{D_{\ell}}{\dep_{n,m}}| \geq \ell \cdot m^{n \cdot m}$. To this end, we first establish the following auxiliary claim:

\begin{claim}\label{cla:lower-bound-sl}
	Let $D = \{R_0(c)\}$, where $c \in \ins{C}$. For each $i \in [n]$,
	\[
	|\{\bar t \mid R_i(\bar t) \in \chase{D}{\dep_{n,m}}\}|\ =\ m^{i \cdot m}.
	\]
\end{claim}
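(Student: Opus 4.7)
The plan is to prove the claim by induction on $i\in[n]$. The core technical ingredient will be a sub-lemma stating that whenever $R_i(\bot_1,\ldots,\bot_m)$ lies in the chase with $\bot_1,\ldots,\bot_m$ being distinct nulls appearing in no other $R_i$-atom, the rules of $\dep_i^{\forall}$ generate exactly the $m^m$ atoms $R_i(\bar a)$ for $\bar a\in\{\bot_1,\ldots,\bot_m\}^m$. Granted this sub-lemma, the base case $i=1$ is immediate: the rule in $\dep_{\text{start}}$ fires once on $R_0(c)$, producing $R_1(\bot_1,\ldots,\bot_m)$ with pairwise distinct fresh nulls, so the sub-lemma delivers $m^m=m^{1\cdot m}$ $R_1$-tuples. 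For the step $i\mapsto i{+}1$, each of the $m^{i\cdot m}$ available $R_i$-tuples triggers the unique rule of $\dep_i^{\exists}$; because the semi-oblivious chase names fresh nulls by the restriction of the homomorphism to the frontier, distinct $R_i$-tuples produce $R_{i+1}$-atoms whose fresh nulls are pairwise disjoint. Since the $\dep_{i+1}^{\forall}$ rules are linear and therefore act on one $R_{i+1}$-atom at a time, applying the sub-lemma to each of them in isolation yields $m^{i\cdot m}\cdot m^m=m^{(i+1)\cdot m}$ distinct $R_{i+1}$-tuples.

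The upper bound inside the sub-lemma is clear, since neither the swap nor the copy rules can introduce values outside $\{\bot_1,\ldots,\bot_m\}$. For the matching lower bound I need to reach every target $\bar a_f=(\bot_{f(1)},\ldots,\bot_{f(m)})$ with $f:[m]\to[m]$ from the seed tuple $(\bot_1,\ldots,\bot_m)$ using the two available operations: (a) swap positions $1$ and $j$, and (b) overwrite position $1$ with the value currently at position $j$. I would proceed by induction on $m-|\mathrm{Im}(f)|$. When $f$ is a bijection, $\bar a_f$ is a permutation of the seed and is reachable via swaps alone, since the transpositions $\{(1,j):j\in\{2,\ldots,m\}\}$ generate the symmetric group $S_m$. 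When $f$ is not surjective, pick $j\neq j'$ with $f(j)=f(j')$ and a value $v\notin\mathrm{Im}(f)$, and let $g$ agree with $f$ except $g(j')=v$. Then $|\mathrm{Im}(g)|=|\mathrm{Im}(f)|+1$, so $\bar a_g$ is reachable by the inductive hypothesis, and from $\bar a_g$ I reach $\bar a_f$ via the three-step local edit ``swap $(1,j')$; copy from $j$; swap $(1,j')$'' when $j\geq 2$, which replaces the $v$ at position $j'$ by the value at position $j$ and leaves every other coordinate unchanged, while the two-step variant ``swap $(1,j')$; copy from $j'$'' handles $j=1$.

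The main obstacle I anticipate is precisely this reachability sub-lemma, because the copy operation is destructive: it overwrites whatever is at position $1$, potentially erasing a value that is needed later. A naive left-to-right or right-to-left strategy that tries to fix positions one at a time is fragile for this reason. Organising the induction on $m-|\mathrm{Im}(f)|$ from the surjective case downwards sidesteps this pitfall, because each inductive step changes the target at a single coordinate only and demands a short local sequence of operations whose global effect on the tuple is completely controlled. Once the sub-lemma is in place, the semi-oblivious null-naming convention handles the remainder of the argument transparently, and the inductive count $m^{(i+1)\cdot m}=m^{i\cdot m}\cdot m^m$ goes through without further complication.
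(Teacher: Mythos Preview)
Your proposal is correct and follows essentially the same approach as the paper: induction on $i$, with the key sub-lemma that from a seed $R_i(\bot_1,\ldots,\bot_m)$ with pairwise distinct nulls the rules of $\dep_i^{\forall}$ generate exactly the $m^m$ tuples in $\{\bot_1,\ldots,\bot_m\}^m$, combined with the observation that the semi-oblivious null-naming makes the seeds at level $i{+}1$ pairwise disjoint. The paper simply declares the sub-lemma ``an easy exercise to verify'', whereas you supply an explicit reachability argument (induction on $m-|\mathrm{Im}(f)|$, handling bijections by transpositions and the non-surjective case by a local three-step edit); this is a welcome addition rather than a departure from the paper's line.
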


\begin{proof}
	We proceed by induction on $i \in [n]$.

\medskip

\noindent
\textbf{Base Case:} It is clear that atoms with predicate $R_1$ are only generated due to the TGD $\sigma_{\text{start}}$ of $\dep_{\text{start}}$, and the TGDs of $\dep_{1}^{\forall}$. In fact, the database atom $R_0(c)$ triggers $\sigma_{\text{start}}$, and we get an atom of the form $R_1(\bot_{1}^{c},\ldots,\bot_{m}^{c})$, where $\bot_{1}^{c},\ldots,\bot_{m}^{c}$ are distinct nulls. Then, it is an easy exercise to verify that, starting from the atom $R_1(\bot_{1}^{c},\ldots,\bot_{m}^{c})$, the TGDs of $\dep_{1}^{\forall}$ generate all the possible atoms $R_1(\bar t)$ with $\bar t \in \{\bot_{1}^{c},\ldots,\bot_{m}^{c}\}^{m}$. Therefore, $|\{\bar t \mid R_1(\bar t) \in \chase{D}{\dep_{n,m}}\}| = m^{m}$.

\medskip

\noindent
\textbf{Induction Step:} Due to the TGD of $\dep_{i-1}^{\exists}$, for $i > 1$, for each atom $R_{i-1}(\bar t) \in \chase{D}{\dep_{n,m}}$, we have an atom $R_i(\bot_{1}^{\bar t},\ldots,\bot_{m}^{\bar t})$, where $\bot_{1}^{\bar t},\ldots,\bot_{m}^{\bar t}$ are distinct nulls. Furthermore, starting from the atom $R_i(\bot_{1}^{\bar t},\ldots,\bot_{m}^{\bar t})$, the TGDs of $\dep_{i}^{\forall}$ generate all the possible atoms $R_i(\bar u)$ with $\bar u \in \{\bot_{1}^{\bar t},\ldots,\bot_{m}^{\bar t}\}^{m}$. Therefore, for each atom $R_{i-1}(\bar t) \in \chase{D}{\dep_{n,m}}$ we have $m^m$ atoms in $\chase{D}{\dep_{n,m}}$ with predicate $R_i$. By induction hypothesis, in $\chase{D}{\dep_{n,m}}$ we have $m^{(i-1)\cdot m}$ atoms with predicate $R_{i-1}$, and thus,
\[
|\{\bar t \mid R_i(\bar t) \in \chase{D}{\dep_{n,m}}\}|\ =\ m^{(i-1) \cdot m} \cdot m^m\ =\ m^{i \cdot m},
\]
and the claim follows.
\end{proof}

By Claim~\ref{cla:lower-bound-sl}, it is straightforward to see that
\[
|\{\bar t \mid R_n(\bar t) \in \chase{D_\ell}{\dep_{n,m}}\}|\ =\ \ell \cdot m^{n \cdot m},
\]
which in turn implies that
\[
|\chase{D_{\ell}}{\dep_{n,m}}|\ \geq\ \ell \cdot m^{n \cdot m}.
\]


\section*{Proof of Theorem~\ref{the:complexity-sl}}

Consider a database $D$, and a set $\dep \in \class{SL}$ of TGDs. By Theorem~\ref{thm:characterization-simple-linear}, we know that $\dep \in \class{CT}_D$ iff $\dep$ is $D$-weakly-acyclic. 
Since co\textsc{NL} = \textsc{NL}, for the \textsc{NL} upper bound it suffices to show the following lemma:

\begin{algorithm}[t]
	\KwIn{A database $D$, and a set $\dep \in \class{SL}$ of TGDs}
	\KwOut{$\mathsf{accept}$ if $\dep$ is not $D$-weakly-acyclic; otherwise, $\mathsf{reject}$}
	\vspace{2mm}
	
	$\mathit{flag} := 0$\\
	\If{\text{\rm there is no edge} $(\pi,\pi')$ \text{\rm in} $\depg{\dep}$}{$\mathsf{reject}$}
	\textbf{guess} \text{an edge} $(\pi,\pi')$ \text{in} $\depg{\dep}$ \text{with} $\pi = (P,i)$\\
	\If{$(\pi,\pi')$ \text{\rm is special}}{$\mathit{flag} := 1$}
	\While{$\pi \neq \pi'$}{
		\If{\text{\rm there is no edge} $(\pi',\pi'')$ \text{\rm in} $\depg{\dep}$}{$\mathsf{reject}$}
		\textbf{guess} \text{an edge} $(\pi',\pi'')$ \text{in} $\depg{\dep}$\\
		\If{$(\pi',\pi'')$ \text{\rm is special}}{$\mathit{flag} := 1$}
		$\pi' := \pi''$
	}
	\If{$\mathit{flag} = 0$}{$\mathsf{reject}$}
	\textbf{guess} \text{a predicate} $R \in \sch{\dep}$ occurring in $D$\\
	\While{$R \neq P$}{
		\If{\text{\rm there is no edge} $(R,S)$ \text{\rm in} $\mathsf{pg}(\dep)$}{$\mathsf{reject}$}
		\textbf{guess} \text{an edge} $(R,S)$ \text{in} $\mathsf{pg}(\dep)$\\
		$R := S$
	}
	\Return{$\mathsf{Accept}$}
	\caption{$\mathsf{CheckWA}$}\label{alg:sl}
\end{algorithm}

\begin{lemma}\label{lem:nl-upper-bound}
	The problem of deciding whether $\dep$ is not $D$-weakly-acyclic is in \textsc{NL}.
\end{lemma}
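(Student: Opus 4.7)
My plan is to justify that Algorithm~\ref{alg:sl} (or a slight variant of it) correctly decides, in nondeterministic logarithmic space, whether $\dep$ is not $D$-weakly-acyclic, i.e., whether there exists a $D$-supported cycle in $\depg{\dep}$ containing at least one special edge. The algorithm is a standard ``reachability with a side condition'' procedure: guess a starting edge $(\pi,\pi')$ with $\pi = (P,i)$, traverse the dependency graph one edge at a time by repeatedly guessing the next edge, keep a single bit $\mathit{flag}$ that records whether a special edge has been seen so far on this traversal, and stop as soon as the current vertex equals $\pi$. If we return to $\pi$ with $\mathit{flag} = 1$, then a cycle through $\pi$ containing a special edge has been witnessed; it remains to witness the $D$-support condition by guessing a predicate $R$ that actually occurs in $D$ and then performing a second reachability check, in the predicate graph $\mathsf{pg}(\dep)$ whose edges implement the relation $\ra_\dep$, to verify $R \reach{\dep} P$.

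For correctness, I would argue both directions. Soundness is immediate: any accepting run produces a sequence of edges in $\depg{\dep}$ that forms a cycle through $(P,i)$ containing a special edge, plus a predicate $R$ occurring in $D$ together with a path in $\mathsf{pg}(\dep)$ from $R$ to $P$, which witnesses $R \reach{\dep} P$ and hence that the cycle is $D$-supported. Completeness is also straightforward: if some $D$-supported cycle $C$ with a special edge exists, let $(P,i)$ be any node on $C$; the algorithm can guess the edges of $C$ in order starting from some outgoing edge of $(P,i)$, flip $\mathit{flag}$ when it traverses the special edge, and return to $(P,i)$, after which it guesses the witnessing predicate $R$ and a path to $P$ in $\mathsf{pg}(\dep)$.

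For the space bound, I would emphasize that at any time the algorithm only stores a constant number of positions (each a pair $(R,i)$, using $O(\log|\sch{\dep}| + \log \arity{\dep}) = O(\log \|\dep\|)$ bits), a constant number of predicates, and the one-bit flag. The only subtle point is checking, in logspace, the primitive membership tests used in the loops, namely ``is $(\pi,\pi')$ an edge of $\depg{\dep}$ (and is it special)?'' and ``is $(R,S)$ an edge of $\mathsf{pg}(\dep)$?''. For the former, one cycles through the TGDs $\sigma \in \dep$, through the atoms of $\head{\sigma}$ and through the positions of $\body{\sigma}$ and $\head{\sigma}$, checking for a common frontier variable (normal edge) or the same existential variable (special edge) as prescribed by the definition of $\depg{\dep}$; each of these counters fits in $O(\log\|\dep\|)$ bits. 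The latter is analogous but simpler. Checking that a guessed predicate $R$ actually occurs in the input database $D$ is a one-pass scan and also uses logspace.

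The main obstacle, if any, is purely bookkeeping: one has to make sure that the $D$-support check is not entangled with the cycle-traversal in a way that would force remembering the entire cycle. This is avoided by the two-phase structure (first detect the cycle through $(P,i)$ with a special edge, then separately perform a reachability check in $\mathsf{pg}(\dep)$ from some database predicate to $P$), since $D$-support requires only the existence of some node on the cycle reachable from a database predicate, and we may take that node to be $(P,i)$ itself by choosing $(P,i)$ to be the first node on the cycle whose predicate is reachable from $D$. Combining these observations yields the claimed $\NL$ upper bound.
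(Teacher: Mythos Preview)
Your proposal is correct and follows essentially the same approach as the paper: a two-phase nondeterministic procedure that first performs reachability in $\depg{\dep}$ to witness a cycle through some position $(P,i)$ containing a special edge (tracked by a one-bit flag), and then performs reachability in the predicate graph $\mathsf{pg}(\dep)$ from a guessed database predicate to $P$ to witness $D$-support. Your write-up is in fact slightly more explicit than the paper's on two points---the soundness/completeness argument (in particular, why one may assume the reachable-from-$D$ node is the chosen start $(P,i)$) and the logspace implementation of the edge-membership tests---but these are elaborations of the same idea rather than a different route.
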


\begin{proof}
	Checking whether $\dep$ is not $D$-weakly-acyclic can be done via the nondeterministic procedure $\mathsf{CheckWA}$, depicted in Algorithm~\ref{alg:sl}, which performs the following:
	%
	\begin{enumerate}
		\item a reachability check on the dependency graph $\depg{\dep}$ of $\dep$ in order to find a cycle $C$ with a special edge, and 
		\item a reachability check on the predicate graph of $\dep$, denoted $\mathsf{pg}(\dep)$, in order to check whether $C$ is $D$-supported -- recall that the nodes of $\mathsf{pg}(\dep)$ are the predicates of $\sch{\dep}$, and there is an edge $(R,P)$ iff there is a TGD $\sigma \in \dep$ such that $R$ occurs in $\body{\sigma}$ and $P$ occurs in $\head{\sigma}$. 
	\end{enumerate}
	It is easy to see that indeed $\mathsf{CheckWA}(D,\dep)$ accepts iff $\dep$ is not $D$-weakly-acyclic. 
	Furthermore, by performing the standard space complexity analysis for reachability-based procedures, we can show that each step of the computation of $\mathsf{CheckWA}(D,\dep)$ uses space
	\[
	O(\log |D| + \log ||\dep||),
	\]
	and the claim follows.
\end{proof}


Concerning the data complexity of $\mathsf{CT}(\class{SL})$, observe that even if we fix the set of TGDs, by the proof of Lemma~\ref{lem:nl-upper-bound}, $\mathsf{CheckWA}(D,\dep)$ uses space $O(\log |D|)$, and thus, it only shows that $\mathsf{CT}(\class{SL})$ is in \textsc{NL}. Therefore, to establish the desired \textsc{AC}$_0$ upper bound we need a finer approach.
We construct a UCQ $Q_\dep$ such that, for every database $D$, $\dep$ is not $D$-weakly-acyclic iff $D$ satisfies $Q_\dep$. Let $P_\dep$ be all the predicates $R$ of $\sch{\dep}$ such that there exists a position $(P,i)$ in $\depg{\dep}$ that belongs to a cycle with a special edge, and $R \reach{\dep} P$ (note that if $R \neq P$, $R \reach{\dep} P$ iff $P$ is reachable from $R$ in $\mathsf{pg}(\dep)$).
In simple words, $P_\dep$ collects all the potential reasons for which $\dep$ violates the condition of non-uniform weak-acyclicity. Thus, the UCQ $Q_\dep$ simply needs to check whether a potential reason (i.e., a predicate) occurs in the input database. Formally, $Q_\dep$ is defined as
\[
\bigvee_{R/n \in P_\dep} \exists x_{1}^{R} \cdots \exists x_{n}^{R} \, R(x_{1}^{R},\ldots,x_{n}^{R}),
\]
where $x_{1}^{R},\ldots,x_{n}^{R}$ are distinct variables.
It is clear that $Q_\dep$ is finite since $P_\dep$ is finite.
Consequently, for a fixed set $\dep \in \class{SL}$ of TGDs, we construct $Q_\dep$ in constant time, and then evaluate it over the input database $D$, which can be done in \textsc{AC}$_0$, and return $\mathsf{accept}$ if $D$ satisfies $Q_\dep$; otherwise, return $\mathsf{reject}$.


\section*{Proof of Proposition~\ref{pro:simplification}}

It is easy to see that $(2)$ implies $(1)$. The rest of the proof is devoted to showing statement $(2)$.
Let $D$ be a database, and $\dep \in \class{L}$. We first need a way to relate the terms in $\chase{D}{\dep}$ with the terms in $\chase{\simple{D}}{\simple{\dep}}$.
For brevity, given a linear TGD $\sigma = R(\bar x) \ra \exists \bar z\, \psi(\bar y,\bar z)$, where $\bar y \subseteq \bar x$, we write $\sigma_f$ for the simplification of $\sigma$ induced by the specialization $f$ of $\bar x$.

\begin{definition}\label{def:equiv-simpl}
	Consider the terms $t \in \adom{\chase{D}{\dep}}$ and $u \in \adom{\chase{\simple{D}}{\simple{\dep}}}$. We say that {\em $t$ is equivalent to $u$ up to simplification (w.r.t.~$D$ and $\dep$)}, denoted $t \equiv_\mathsf{sim} u$, if:
	\begin{enumerate}
		\item $t,u \in \ins{C}$ implies $t = u$.
		\item $t = \bot_{\sigma,h}^{z} \in \ins{N}$ and $u = \bot_{\sigma',h'}^{z'} \in \ins{N}$ implies
		\begin{enumerate}
			\item $\sigma' = \sigma_f$ and $z' = z$,
			\item $g = \{f(x) \mapsto h(x) \mid x \in \fr{\sigma}\}$ is a function of the form $\fr{\sigma'} \ra \adom{\chase{D}{\dep}}$, and
			\item for each $y \in \fr{\sigma'}$, $g(y) \equiv_{\mathsf{sim}} h'(y)$.
		\end{enumerate}
	\end{enumerate}
	We write $\mathsf{ES}(t)$ for the set of terms
	\begin{flalign*}
	&&\{u \in \adom{\chase{\simple{D}}{\simple{\dep}}} \mid t \equiv_\mathsf{sim} u\}.&&  \hfill\markfull
	\end{flalign*}
\end{definition}

We next establish a useful technical lemma concerning the notion of equivalence up to simplification.

\begin{lemma}\label{lem:simplification-aux-1}
	The following hold:
	\begin{enumerate}
		\item For each $t,t' \in \adom{\chase{D}{\dep}}$, $t \neq t'$ implies $\mathsf{ES}(t) \cap \mathsf{ES}(t') = \emptyset$.
		\item For each $t \in \adom{\chase{D}{\dep}}$ and $u \in \mathsf{ES}(t)$, $\depth{t} = \depth{u}$.
	\end{enumerate}
\end{lemma}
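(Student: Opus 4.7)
The plan is to prove the two statements by a joint induction on depth. This is natural because the relation $\equiv_{\mathsf{sim}}$ is defined structurally: when $t = \bot_{\sigma,h}^{z}$ and $u = \bot_{\sigma_f,h'}^{z}$, its clauses relate $h$ and $h'$ via $\equiv_{\mathsf{sim}}$ on frontier images, and these images have strictly smaller depth than $t$ and $u$. Statement (2) will be proved first and then used when proving (1), so that in (1) we may pass from the bound ``$\depth{t}$'' to the bound ``$\depth{u}$'' of the common witness.

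For (2) I would induct on $\depth{t}$. If $t$ is a constant, clause (1) of Definition~\ref{def:equiv-simpl} forces $u=t$ and so $\depth{u}=0=\depth{t}$. If $t = \bot_{\sigma,h}^{z}$, then by clause (2)(a) we have $u = \bot_{\sigma_f,h'}^{z}$ for some specialization $f$ of the body variables of $\sigma$. By inspecting the construction of $\sigma_f$ one checks that $\fr{\sigma_f} = f(\fr{\sigma})$, and by clause (2)(b) the map $g$ is well defined, so whenever $f(x_1)=f(x_2)$ one has $h(x_1)=h(x_2)$. Applying the inductive hypothesis to each pair $(g(y),h'(y))$ with $y\in\fr{\sigma_f}$ yields $\depth{h'(y)} = \depth{g(y)} = \depth{h(x)}$ for any lift $x$ of $y$, and consequently the two maxima defining $\depth{t}$ and $\depth{u}$ coincide.

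For (1) I would equivalently show that $u \in \mathsf{ES}(t) \cap \mathsf{ES}(t')$ forces $t=t'$, by induction on $\depth{u}$ (using (2) to note $\depth{t}=\depth{t'}=\depth{u}$). The base case $\depth{u}=0$ is immediate from clause (1). For the inductive step write $u = \bot_{\sigma',h'}^{z'}$, $t = \bot_{\sigma_1,h_1}^{z_1}$, $t' = \bot_{\sigma_2,h_2}^{z_2}$ with simplifications $\sigma' = (\sigma_1)_{f_1} = (\sigma_2)_{f_2}$ and $z_1=z_2=z'$. The central step is to conclude that $\sigma_1 = \sigma_2$ and $f_1=f_2$: the equality of TGDs follows from the standing convention that distinct TGDs of $\dep$ share no variable (so their simplifications cannot coincide), while the equality of specializations follows because the identifier tuple of the simplified body atom $\simple{R(f(\bar x))}$ uniquely determines $f$. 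Writing $\sigma_1=\sigma_2=\sigma$ and $f_1=f_2=f$, for every $x\in\fr{\sigma}$ we set $y=f(x)\in\fr{\sigma'}$ and observe via clauses (2)(b) and (2)(c) that $h'(y)\in\mathsf{ES}(h_1(x))\cap\mathsf{ES}(h_2(x))$. Since $\depth{h'(y)}<\depth{u}$, the inductive hypothesis gives $h_1(x)=h_2(x)$, and so $h_1|_{\fr{\sigma}} = h_2|_{\fr{\sigma}}$, yielding $t=t'$.

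The main obstacle I expect is the bookkeeping around simplification: rigorously extracting $\sigma_1=\sigma_2$ and $f_1=f_2$ from the equality of their simplified TGDs, and then using the well-definedness of $g$ uniformly, so that the inductive hypothesis applies to \emph{every} $x\in\fr{\sigma}$ rather than only to some distinguished preimage of each $y\in\fr{\sigma'}$. Everything else reduces to careful case analysis on whether the involved terms are constants or nulls.
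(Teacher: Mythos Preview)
Your proposal is correct and follows essentially the same route as the paper: item (1) is proved by taking a common $u \in \mathsf{ES}(t)\cap\mathsf{ES}(t')$ and inducting on $\depth{u}$, recovering $\sigma$ and $f$ uniquely from $\sigma'=\sigma_f$, and then applying the inductive hypothesis to the frontier images via the functions $g_1,g_2$ to force $h_1=h_2$ on $\fr{\sigma}$. The paper is terser on two points you spell out: it simply asserts that $t,t'$ must have the form $\bot_{\sigma,h_i}^{z}$ for the \emph{same} $\sigma$ and $f$ (you correctly justify this via the variable-disjointness convention and the fact that the identifier tuple in the simplified body determines the specialization), and it dispatches item (2) with ``it easily follows by definition'' rather than the explicit induction you outline.
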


\begin{proof}
	\OMIT{
		(\textbf{Item 1}) We proceed by induction on the depth of $t$. 
		
		\medskip
		
		\noindent
		\textbf{Base Case:} If $\depth{t} = 0$, then $t$ is a constant and the only term equivalent to $t$ is $t$ itself, i.e., $\mathsf{ES}(t) = \{t\}$.
		
		\medskip
		
		\noindent
		\textbf{Induction Step:} Assume that $\depth{t} > 0$. Thus, $t$ is a null of the form $\skolem{\sigma}{h}{z}$, for some $\sigma \in \dep$, homomorphism $h$, and existentially quantified variable $z$ in $\sigma$. 
		By definition, for a null $u = \bot_{\sigma',h'}^{z}$ that occurs in $\chase{\simple{D}}{\simple{\dep}}$, $t \equiv_{\mathsf{sim}} u$ iff $\sigma' = \sigma_f$, $g = \{f(x) \mapsto h(x) \mid x \in \fr{\sigma}\}$ is a function, and for each $y \in \fr{\sigma'}$, $g(y) \equiv_{\mathsf{sim}} h'(y)$. 
		Note that there exist finitely many simplifications of $\sigma$ since $\arity{\dep}$ is finite.
		Moreover, for each $y \in \fr{\sigma'}$, by the definition of depth, $\depth{g(y)} < \depth{t}$. Hence, by induction hypothesis, $\mathsf{ES}(g(y))$ is finite. The latter, together with the fact that $\sigma$ has finitely many simplifications, imply that $\mathsf{ES}(t)$ is finite.
		
		\medskip
}
		(\textbf{Item 1}) By contradiction, assume that $t \neq t'$ and $\mathsf{ES}(t) \cap \mathsf{ES}(t') \neq \emptyset$. We show that $t = t'$, which contradicts our hypothesis.
		Let $u \in \mathsf{ES}(t) \cap \mathsf{ES}(t')$. We proceed by induction on the depth of $u$.
		
		\medskip
		
		\noindent
		\textbf{Base Case:} If $\depth{u} = 0$, then $u$ is a constant. Therefore, $t=u$ and $t'=u$, i.e., $t = t'$, and the claim follows.

		\medskip
		
		\noindent
		\textbf{Induction Step:} Assume that $\depth{u} > 0$. Thus, $u$ is a null of the form $\skolem{\sigma'}{h'}{z}$ for some $\sigma' \in \simple{\dep}$, homomorphism $h'$, and existentially quantified variable $z$ in $\sigma'$.
		Hence, $t,t'$ are nulls of the form $\skolem{\sigma}{h_1}{z}$ and $\skolem{\sigma}{h_2}{z}$, respectively, such that $\sigma' = \sigma_f$.
		Since $t \equiv_{\mathsf{sim}} u$ and $t' \equiv_{\mathsf{sim}} u$, by definition, $g_1 = \{f(x) \mapsto h_1(x) \mid x \in \fr{\sigma}\}$ and $g_2 = \{f(x) \mapsto h_2(x) \mid x \in \fr{\sigma}\}$ are functions, and for each $y \in \fr{\sigma'}$, $g_1(y) \equiv_{\mathsf{sim}} h'(y)$ and $g_2(y) \equiv_{\mathsf{sim}} h'(y)$.
		Thus, by induction hypothesis, $g_1(y) = g_2(y)$. Since $g_1$ and $g_2$ are functions, we conclude that $h_1 = h_2$, which in turn implies that $t = t'$.

		\medskip
		
		(\textbf{Item 2}) It easily follows by definition.
\end{proof}

We proceed to transfer the notion of equivalence up to simplification to tuples of terms and atoms.
Given two tuples $ \bar t = (t_1,\ldots,t_n)$ and $\bar u = (u_1,\ldots,u_m)$, for $n,m>0$, we write $\bar t \simeq \bar u$ if $n=m$, and $t_i = t_j$ iff $u_i = u_j$ for each $i,j \in [n]$.
Furthermore, assuming $n=m$, we write $\bar t \equiv_{\mathsf{sim}} \bar u$ for the fact that $t_i \equiv_{\mathsf{sim}} u_i$ for each $i \in [n]$.

\begin{definition}\label{def:atom-equivalence}
	Consider the atoms $\alpha = R(\bar t) \in \chase{D}{\dep}$ and $\beta = R_{(i_1,\ldots,i_{|\bar t|})}(\bar u) \in \chase{\simple{D}}{\simple{\dep}}$. We say that {\em $\alpha$ is equivalent to $\beta$ up to simplification (w.r.t.$D$ and $\dep$)}, denoted $\alpha \equiv_{\mathsf{sim}} \beta$, if $(i_1,\ldots,i_{|\bar t|}) = \id{}{\bar t}$, $\unique{\bar t} \simeq \bar u$, and $\unique{\bar t} \equiv_{\mathsf{sim}} \bar u$. 
	We write $\mathsf{ES}(\alpha)$ for $\{\beta \in \chase{\simple{D}}{\simple{\dep}} \mid \alpha \equiv_{\mathsf{sim}} \beta\}$. \hfill\markfull
\end{definition}

We are now ready to establish a crucial technical lemma, which is analogous to Lemma~\ref{lem:simplification-aux-1}, but for atoms instead of terms, that will easily lead to Proposition~\ref{pro:simplification}:

\begin{lemma}\label{lem:simplification-aux-2}
	The following hold:
	\begin{enumerate}
		\item $\{\mathsf{ES}(\alpha) \mid \alpha \in \chase{D}{\dep}\}$ forms a partition of $\chase{\simple{D}}{\simple{\dep}}$.
		\item For each $\alpha \in \chase{D}{\dep}$ and $\beta \in \mathsf{ES}(\alpha)$, $\depth{\alpha} = \depth{\beta}$.
	\end{enumerate}
\end{lemma}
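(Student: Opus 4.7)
The plan is to derive both items from a tight lockstep correspondence between the chase of $D$ w.r.t.~$\dep$ and the chase of $\simple{D}$ w.r.t.~$\simple{\dep}$, under which each atom on either side has a unique $\equiv_{\mathsf{sim}}$-partner on the other side.

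I would first dispatch Item~(2), which is an easy consequence of Lemma~\ref{lem:simplification-aux-1}(2). Writing $\alpha = R(\bar t)$ and $\beta = R_{\id{}{\bar t}}(\bar u) \in \mathsf{ES}(\alpha)$, Definition~\ref{def:atom-equivalence} yields $\unique{\bar t} \equiv_{\mathsf{sim}} \bar u$ componentwise, and Lemma~\ref{lem:simplification-aux-1}(2) then forces equal depths in each component. Since removing duplicate terms from $\bar t$ does not change $\max_i \{\depth{t_i}\}$, the maxima over $\bar t$ and over $\bar u$ coincide, giving $\depth{\alpha} = \depth{\beta}$.

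Next I would handle the disjointness part of Item~(1). If $\beta \in \mathsf{ES}(\alpha) \cap \mathsf{ES}(\alpha')$ with $\alpha = R(\bar t)$ and $\alpha' = R'(\bar t')$, then the predicate of $\beta$ is simultaneously $R_{\id{}{\bar t}}$ and $R'_{\id{}{\bar t'}}$, which forces $R = R'$ and $\id{}{\bar t} = \id{}{\bar t'}$; in particular $\bar t \simeq \bar t'$, and the tuples $\unique{\bar t}$, $\unique{\bar t'}$ have the same length. Letting $\bar u$ be the tuple of $\beta$, both $\unique{\bar t} \equiv_{\mathsf{sim}} \bar u$ and $\unique{\bar t'} \equiv_{\mathsf{sim}} \bar u$ hold, and Lemma~\ref{lem:simplification-aux-1}(1) yields $\unique{\bar t} = \unique{\bar t'}$ componentwise; combined with $\bar t \simeq \bar t'$, this gives $\bar t = \bar t'$ and hence $\alpha = \alpha'$.

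The remaining content of Item~(1)---that every atom of $\chase{\simple{D}}{\simple{\dep}}$ lies in some $\mathsf{ES}(\alpha)$ and that every $\mathsf{ES}(\alpha)$ is non-empty---is the main technical obstacle, and I would prove it by a simultaneous induction on the stages of fair chase derivations on the two sides. The base case is immediate: each $R(\bar t) \in D$ has the partner $\simple{R(\bar t)} = R_{\id{}{\bar t}}(\unique{\bar t}) \in \simple{D}$. For the inductive step, I would show that triggers on the two sides are in bijection: a trigger $(\sigma, h)$, with $\sigma : R(\bar x) \to \exists \bar z\, \psi(\bar y, \bar z)$, active on $\alpha_0 \in \chase{D}{\dep}$ corresponds to a trigger $(\sigma_f, h')$ active on any $\beta_0 \in \mathsf{ES}(\alpha_0)$, where $f$ is the specialization of $\bar x$ encoding the equality pattern of $h(\bar x)$ and $h'$ sends $\unique{f(\bar x)}$ pointwise to the $\equiv_{\mathsf{sim}}$-partners of $\unique{h(\bar x)}$; the new Skolem nulls produced by the two triggers are $\equiv_{\mathsf{sim}}$-partners by the very clause in Definition~\ref{def:equiv-simpl}, and the converse direction is symmetric. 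The main obstacle here will be the bookkeeping: carefully verifying that $f$ really captures all equalities imposed by $h$ (so that $g = \{f(x) \mapsto h(x) \mid x \in \fr{\sigma}\}$ is well-defined as required by Definition~\ref{def:equiv-simpl}), that the correspondence is bijective even when $\bar y$ has repeated variables, and that fairness propagates across the correspondence so that the induction exhausts both sides without missing or double-counting atoms.
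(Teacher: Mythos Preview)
Your proposal is correct and matches the paper's approach: Item~(2) and the disjointness in Item~(1) are both derived from Lemma~\ref{lem:simplification-aux-1} exactly as you describe, and non-emptiness and coverage are shown by induction on chase steps, relating a trigger $(\sigma,h)$ to $(\sigma_f,h')$ via the specialization $f$ that captures the equality pattern of $h$. The paper organizes the last part as two separate inductions (one along a derivation of $D$ w.r.t.~$\dep$ for non-emptiness, one along a derivation of $\simple{D}$ w.r.t.~$\simple{\dep}$ for coverage) rather than a single simultaneous one; one small caution is that the trigger correspondence is not literally a bijection---a single trigger on the $\dep$-side corresponds to one trigger per choice of $\beta_0 \in \mathsf{ES}(\alpha_0)$ on the $\simple{\dep}$-side---but this does not affect your argument.
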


\begin{proof}
	\OMIT{
	(\textbf{Item 1}) Consider an atom $R(\bar t) \in \chase{D}{\dep}$. By definition, each $\beta \in \mathsf{ES}(R(\bar t))$ is of the form $R_{\id{}{\bar t}}(\bar u)$ such that $\unique{\bar t} \equiv_{\mathsf{sim}} \bar u$. Since, by item (1) of Lemma~\ref{lem:simplification-aux-1}, for each term $t$ in $\unique{\bar{t}}$, $\mathsf{ES}(t)$ is finite, we can conclude that there are finitely many tuples $\bar u$ such that $\unique{\bar t} \equiv_{\mathsf{sim}} \bar u$. Therefore, $\mathsf{ES}(R(\bar t))$ is finite, as needed.

	\medskip
}
	(\textbf{Item 1}) To establish the claim we need to show:
	\begin{enumerate}
		\item for $\alpha,\alpha' \in\chase{D}{\dep}$, $\alpha \neq \alpha'$ implies $\mathsf{ES}(\alpha) \cap \mathsf{ES}(\alpha') = \emptyset$,
		\item for each $\alpha \in \chase{D}{\dep}$, $\mathsf{ES}(\alpha) \neq \emptyset$, and
		\item $\bigcup_{\alpha \in \chase{D}{\dep}} \mathsf{ES}(\alpha) = \chase{\simple{D}}{\simple{\dep}}$.
	\end{enumerate}
	We proceed to establish the above three statements:
	\begin{description}
		\item[Statement (a).] By contradiction, assume that $\alpha \neq \alpha'$ and $\mathsf{ES}(\alpha) \cap \mathsf{ES}(\alpha') \neq \emptyset$. We show that $\alpha = \alpha'$, which contradicts the fact that $\alpha \neq \alpha'$.
		Let $\beta \in \mathsf{ES}(\alpha) \cap \mathsf{ES}(\alpha')$, i.e., $\alpha \equiv_{\mathsf{sim}} \beta$ and $\alpha' \equiv_{\mathsf{sim}} \beta$. Assuming that $\beta$ is of the form $R_{(i_1,\ldots,i_n)}(\bar u)$, we get that $\alpha,\alpha'$ are of the form $R(\bar t)$ and $R(\bar t')$, respectively, and $\id{}{\bar t} = \id{}{\bar t'}$. Furthermore, $\unique{\bar t} \equiv_{\mathsf{sim}} \bar u$ and $\unique{\bar t'} \equiv_{\mathsf{sim}} \bar u$. By item (1) of Lemma~\ref{lem:simplification-aux-1}, $\unique{\bar t} = \unique{\bar t'}$. Since $\id{}{\bar t} = \id{}{\bar t'}$ and $\unique{\bar t} = \unique{\bar t'}$, we conclude that $\bar t = \bar t'$, which in turn implies that $\alpha = \alpha'$.
		
		\item[Statement (b).] Consider a (possibly infinite) valid chase derivation $I_0,I_1,\ldots$ of $D$ w.r.t.\ $\dep$, with $I_i \app{\sigma_i}{h_i} I_{i+1}$ for $i \ge 0$. We show that, for each $i \ge 0$, and $\alpha \in I_i$, $\mathsf{ES}(\alpha) \neq \emptyset$. We proceed by induction on the number of chase steps in the derivation. 
		
		\medskip
		
		\noindent
		\textbf{Base Case:}
		For each $\alpha \in I_0 = D$, it is clear that the atom $\simple{\alpha}$ belongs to $\mathsf{ES}(\alpha)$, since $\simple{D} \subseteq \chase{\simple{D}}{\simple{\dep}}$.
		
		\medskip
		
		\noindent
		\textbf{Induction Step:}
		Let $i  \ge 0$, and consider an atom $\alpha \in I_{i+1}$. 
		In case $\alpha \in I_{i}$, the claim follows by induction hypothesis.
		The interesting case is when $\alpha \in I_{i+1} \setminus I_{i}$. Therefore, $\alpha \in \result{\sigma_i}{h_i}$, i.e., $\alpha$ belongs to the result of the application of $(\sigma_i,h_i)$ to $I_i$. Let $\beta = R(\bar t) = h_i(\body{\sigma_i}) \in I_i$; recall that $\sigma_i$ is linear, and thus, $\body{\sigma_i}$ consists of a single atom.
		
		By induction hypothesis, there is an atom $\beta' \in \mathsf{ES}(\beta)$, which is of the form $R_{\id{}{\bar t}}(\bar u)$. Let $\body{\sigma_i} = R(\bar x)$, and let $\sigma_i'$ be the simplification of $\sigma_i$ induced by {\em the} specialization $f$ of $\bar x$ such that $h_i = \mu \circ f$ with $\mu$ being a bijection. The body of $\sigma_i'$ is of the form $R_{\id{}{\bar t}}(f(\bar x))$.
		%
		Since $\beta \equiv_{\mathsf{sim}} \beta'$, by definition, $\unique{\bar t} \simeq \bar u$. Hence, there exists a bijection $g : \unique{\bar t} \ra \bar u$ such that $g(\unique{\bar t}) = \bar u$.
		We can compose $\mu$ and $g$, and obtain a homomorphism $h_i'$ from $R_{\id{}{\bar t}}(f(\bar x))$ to $\beta'$, i.e., with $h_i' = g \circ \mu$, $R_{\id{}{\bar t}}(h_i'(f(\bar x))) = \beta'$.
		Therefore, there must be an atom $\alpha' \in \chase{\simple{D}}{\simple{\dep}}$ that belongs to $\result{\sigma_i'}{h_i'}$, which, by construction of $h_i'$ and $\sigma_i'$, is such that $\alpha \equiv_{\mathsf{sim}} \alpha'$. Hence, $\alpha' \in \mathsf{ES}(\alpha)$.
		
		\item[Statement (c).] We actually need to show that 
		\[
		\bigcup_{\alpha \in \chase{D}{\dep}} \mathsf{ES}(\alpha) \supseteq \chase{\simple{D}}{\simple{\dep}}. 
		\]
		This can be done via an inductive argument analogous to the one given above for statement (b), with the difference that the induction is on the number of chase steps of a valid chase derivation of $\simple{D}$ w.r.t.\ $\simple{\dep}$.
	\end{description}

	\medskip
	
	(\textbf{Item 2}) This follows from item (2) of Lemma~\ref{lem:simplification-aux-1}.
\end{proof}

It is not difficult to verify that statement $(2)$ of Proposition~\ref{pro:simplification} follows from Lemma~\ref{lem:simplification-aux-2}.



\section*{Proof of Theorem~\ref{the:lower-bound-linear}}

For each $\ell > 0$, we define the database
\[
D_\ell\ =\ \{R_0(c_1),\ldots,R_0(c_\ell)\}.
\]
We proceed to define $\dep_{n,m}$, for each $n,m>0$. 
For brevity, given a variable $x$, we simply write $x^k$ for $\underbrace{x,\ldots,x}_{k}$.
Let
\[
\dep_{n,m}\ =\ \dep_{\text{start}}\ \cup\ \bigcup_{i=1}^{n} \dep_{i}^{\forall}\ \cup\ \bigcup_{i=1}^{n-1} \dep_{i}^{\exists},
\]
where $\dep_{\text{start}}$ consists of the TGD
\[
R_0(x) \ra \exists y \exists z \, R_0(x),R_1(y^m,y,z,y), 
\]
$\dep_{i}^{\forall}$ contains, for each $j \in \{0,\ldots,m-1\}$, the TGD
\begin{multline*}
R_i(x_1,\ldots,x_{m-j-1},y,z^j,y,z,u)\ \ra\\ \exists v \exists w \, R_i(x_1,\ldots,x_{m-j-1},y,z^j,y,z,u),\\
R_i(x_1,\ldots,x_{m-j-1},z,y^j,y,z,v),\\
R_i(x_1,\ldots,x_{m-j-1},z,y^j,y,z,w)
\end{multline*}
and $\dep_{i}^{\exists}$ consists of the TGD
\[
R_i(x^m,y,x,z)\ \ra\
\exists v \exists w \, R_i(x^m,y,x,z),R_{i+1}(v^m,v,w,v).
\]
This completes the construction of $\dep_{n,m}$. It can be verified that, for each $\ell,n,m>0$, $\dep_{n,m} \in \class{SL} \cap \class{CT}_{D_{\ell}}$. It remains to show that $|\chase{D_{\ell}}{\dep_{n,m}}| \geq \ell \cdot 2^{n \cdot (2^m-1)}$. 
Roughly speaking, starting from an atom $R_0(c)$, for $c \in \ins{C}$, the TGD of $\dep_{\text{start}}$ generates an atom
\[
R_1(\underbrace{0,\ldots,0}_m,0,1,0).
\]
Note that the values $0$ and $1$ in the actual chase are represented as nulls, but for the sake of the discussion we see them as bits.
Now, starting from an atom of the form
\[
R_i(\underbrace{0,\ldots,0}_m,0,1,0),
\]
for $i \in [n]$, $\dep_{i}^{\forall}$ constructs a perfect binary tree of height $2^m - 1$ such that its $j$-th level consists of $2^j$ atoms of the form 
\[
R_i(b_1,\ldots,b_m,0,1,\bot),
\]
where $b_1,\ldots,b_m$ is the binary encoding of the number $j$, and $\bot$ is a null that occurs only in this atom. Therefore, we have $2^{(2^m-1)}$ leaves that are atoms of the form
\[
R_i(1,\ldots,1,0,1,\bot).
\]
Now, for each such $R_i$-leaf-atom, for $i \in [n-1]$, the TGD of $\dep_{i}^{\exists}$ generates an atom of the form 
\[
R_{i+1}(\underbrace{0,\ldots,0}_m,0,1,0), 
\]
which in turn will trigger $\dep_{i+1}^{\forall}$, and build another prefect binary tree with $R_{i+1}$-atoms as its nodes.

From the above discussion, it is not difficult to see that the following holds, which can be shown by induction on $i \in [n]$:

\begin{claim}\label{cla:lower-bound-linear}
	Let $D = \{R_0(c)\}$, where $c \in \ins{C}$. For each $i \in [n]$,
	\[
	|\{\bar t \mid R_i(\bar t) \in \chase{D}{\dep_{n,m}}\}|\ \geq\ 2^{i \cdot (2^m-1)}.
	\]
\end{claim}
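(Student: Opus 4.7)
I will prove the claim by induction on $i \in [n]$, guided by the intuition that the TGDs in $\dep_i^\forall$ implement a branching binary counter on the first $m$ positions of every $R_i$-atom. Read the value at position $m+1$ as bit $0$ and the value at position $m+2$ as bit $1$; then an atom of the form $R_i(w_1, \ldots, w_m, a, b, \bot)$ with $w_k \in \{a, b\}$ encodes the integer whose binary expansion $(b_1 \cdots b_m)$ satisfies $b_k = 1$ iff $w_k = b$. A ``starting'' atom at level $i$ has shape $R_i(a^m, a, b, a)$ (all bits $0$), and a ``max-counter'' atom has shape $R_i(b^m, a, b, \bot)$ (all bits $1$). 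For any counter value $c < 2^m - 1$, the rule in $\dep_i^\forall$ whose index $j$ equals the number of trailing ones in the binary expansion of $c$ is precisely the one whose body matches atoms at counter value $c$; applying it performs one binary increment and branches into two head atoms that differ only by a fresh null in the last position.

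The technical core is the following sub-claim, proved by induction on $c \in \{0, \ldots, 2^m-1\}$: from any starting atom $R_i(a^m, a, b, a)$ with $a \neq b$, the chase produces at least $2^c$ atoms encoding counter value $c$, all having pairwise distinct nulls in the last position. The inductive step fires the unique matching rule of $\dep_i^\forall$ on each of the $\geq 2^{c-1}$ predecessors at counter $c-1$; since the last-position variable $u$ belongs to the frontier of that rule, and the predecessors already have pairwise distinct last-position nulls by induction, the semi-oblivious null-generation policy forces the two fresh nulls created at each branch to be pairwise distinct across the whole level, giving $\geq 2 \cdot 2^{c-1} = 2^c$ distinct successors.

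The main induction then assembles as follows. For $i = 1$, the TGD in $\dep_{\text{start}}$ applied to $R_0(c) \in D$ produces one starting atom of shape $R_1(\bot_0^m, \bot_0, \bot_1, \bot_0)$, and the sub-claim with $c = 2^m - 1$ yields $\geq 2^{2^m - 1} = 2^{1 \cdot (2^m - 1)}$ max-counter $R_1$-atoms. For the inductive step, each max-counter $R_{i-1}$-atom $R_{i-1}(b^m, a, b, \bot)$ matches the body of the rule in $\dep_{i-1}^\exists$ and produces a starting $R_i$-atom of shape $R_i(v^m, v, w, v)$; since the frontier variable $z$ is mapped to the last-position null $\bot$, which is distinct across the $\geq 2^{(i-1)(2^m - 1)}$ max-counter $R_{i-1}$-atoms by the inductive hypothesis, the fresh nulls $v, w$ are also distinct, and so the resulting starting $R_i$-atoms are pairwise distinct. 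Applying the sub-claim independently from each starting $R_i$-atom and summing yields at least $2^{(i-1)(2^m - 1)} \cdot 2^{2^m - 1} = 2^{i (2^m - 1)}$ max-counter $R_i$-atoms, which suffices.

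The main obstacle will be the careful book-keeping of semi-oblivious null generation needed to guarantee that fresh nulls created at sibling increments, or at extensions of distinct starting atoms, are pairwise distinct rather than collapsing. This ultimately reduces to the observation that the ``discriminating'' variables ($u$ in every rule of $\dep_i^\forall$ and $z$ in $\dep_{i-1}^\exists$) are part of the frontier of their respective rules, which is immediate from their syntactic placement as body variables that also appear in the head.
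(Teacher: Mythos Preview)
Your proposal is correct and follows essentially the same route as the paper: both read the first $m$ positions as a binary counter (with positions $m{+}1$ and $m{+}2$ serving as the $0$/$1$ references), observe that the rule with index $j$ in $\dep_i^\forall$ performs a binary increment with two-way branching via fresh last-position nulls, and then run an induction on $i$ using $\dep_{i-1}^\exists$ to spawn a fresh starting $R_i$-atom from every max-counter $R_{i-1}$-leaf. Your write-up is in fact more explicit than the paper's on the key bookkeeping point (that $u$ and $z$ are frontier variables because the body atom is copied into the head, so distinct last-position values force distinct semi-oblivious nulls), which the paper leaves at the level of ``$\bot$ is a null that occurs only in this atom.''
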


By Claim~\ref{cla:lower-bound-linear}, it is straightforward to see that
\[
|\{\bar t \mid R_{n}(\bar t) \in \chase{D_\ell}{\dep_{n,m}}\}|\ \geq\ \ell \cdot 2^{n \cdot (2^m-1)},
\]
which in turn implies that
\[
|\chase{D_{\ell}}{\dep_{n,m}}|\ \geq\ \ell \cdot 2^{n \cdot (2^m-1)}.
\]


\section*{Proof of Theorem~\ref{the:complexity-linear}}

Consider a database $D$, and a set $\dep \in \class{L}$. By Theorem~\ref{thm:characterization-linear}, we know that $\dep \in \class{CT}_D$ iff $\simple{\dep}$ is $\simple{D}$-weakly-acyclic. 
Since co\textsc{PSpace} = \textsc{PSpace} and co\textsc{NL} = \textsc{NL}, for the \textsc{PSpace} and \textsc{NL }upper bounds claimed in Theorem~\ref{the:complexity-linear}, it suffices to show the following:

\begin{lemma}\label{lem:nl-upper-bound-linear}
	The problem of deciding whether $\simple{\dep}$ is not $\simple{D}$-weakly-acyclic is in \textsc{PSpace}, and in \textsc{NL} for schemas of bounded arity.
\end{lemma}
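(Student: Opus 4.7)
The plan is to adapt the nondeterministic reachability-based algorithm $\mathsf{CheckWA}$ from the proof of Lemma~\ref{lem:nl-upper-bound} so that it operates \emph{implicitly} on $\depg{\simple{\dep}}$ and on the predicate graph $\mathsf{pg}(\simple{\dep})$, rather than constructing these objects explicitly. The reason is that $\simple{\dep}$ and $\simple{D}$ may be of size exponential in $\arity{\dep}$: a single linear TGD with body $R(\bar x)$ has up to $\arity{\dep}^{\arity{\dep}}$ specializations of $\bar x$, and the schema $\sch{\simple{\dep}}$ contains one predicate $R_{\bar i}$ per identifier tuple $\bar i$. Consequently, a black-box invocation of Lemma~\ref{lem:nl-upper-bound} on $(\simple{D},\simple{\dep})$ only yields space logarithmic in the exponential-size simplified instance, which translates to \textsc{PSpace} in the original input (via Savitch) but is too coarse for the bounded-arity refinement, where we aim at \textsc{NL}.

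First, I would fix compact encodings of the objects manipulated by $\mathsf{CheckWA}$. A node of $\depg{\simple{\dep}}$ has the form $((R_{\bar i}),j)$, with $R/n \in \sch{\dep}$, $\bar i \in [n]^n$ a valid identifier tuple, and $j \in [n]$, and so admits an encoding of bit-length $O(\arity{\dep} \cdot \log \arity{\dep} + \log |\sch{\dep}|)$. This is polynomial in $||\dep||$ in general and logarithmic when $\arity{\dep}$ is a constant. A specialization $f$ of the body variables of some $\sigma \in \dep$ has the same footprint. Since $\mathsf{CheckWA}$ only stores a bounded number of such positions plus a constant number of flags, the total working space stays within this bound.

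The crucial step is implementing a single transition of $\mathsf{CheckWA}$, namely guessing an outgoing edge from a given position $((R_{\bar i}),j)$ of $\depg{\simple{\dep}}$, using only this much space. Given the current position, the procedure nondeterministically guesses a TGD $\sigma \in \dep$ together with a specialization $f$ of its body variables, assembles the simplification $\sigma_f$ per Definition~\ref{def:simplification}, and then deterministically enumerates the normal and special edges that $\sigma_f$ contributes to $\depg{\simple{\dep}}$, accepting iff the desired outgoing edge appears in this enumeration with the matching normal/special label. An analogous implicit simulation handles $\mathsf{pg}(\simple{\dep})$ for the $\simple{D}$-supportedness check: the initial predicate is obtained by iterating a pointer of $O(\log |D|)$ bits over $D$ and simplifying the chosen atom on the fly, and each subsequent edge is guessed by again picking a pair $(\sigma,f)$ and verifying that it induces the corresponding predicate-graph edge.

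Putting the pieces together, the resulting procedure accepts iff $\simple{\dep}$ is not $\simple{D}$-weakly-acyclic while using $O(\arity{\dep} \cdot \log \arity{\dep} + \log(|D| \cdot ||\dep||))$ working space, which is polynomial in the input size in general (whence \textsc{PSpace} by Savitch) and logarithmic for schemas of bounded arity (whence \textsc{NL}). The main obstacle I anticipate is essentially a verification issue: one must check carefully that the edges produced by the on-the-fly enumeration of $\sigma_f$ coincide exactly, with the correct normal/special labelling, with those obtained by the explicit construction of $\depg{\simple{\dep}}$, so that the implicit simulation neither misses nor invents edges; this amounts to a faithful rereading of Definition~\ref{def:simplification} against the definition of the dependency graph.
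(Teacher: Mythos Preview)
Your proposal is correct and follows essentially the same approach as the paper: adapt $\mathsf{CheckWA}$ to operate implicitly on $\depg{\simple{\dep}}$ and $\mathsf{pg}(\simple{\dep})$ by guessing, at each step, a TGD $\sigma \in \dep$ and a specialization $f$, constructing the relevant simplified body and head atoms on the fly, and handling the $\simple{D}$-supportedness check by scanning $D$ and computing $\id{}{\bar t}$ for each $R(\bar t) \in D$. Your space analysis $O(\arity{\dep} \cdot \log \arity{\dep} + \log(|D| \cdot ||\dep||))$ matches the paper's $O(\log |D| + \arity{\dep} \cdot \log ||\dep||)$ up to inessential differences, and both yield \textsc{PSpace} in general and \textsc{NL} for bounded arity.
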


\begin{proof}[Proof (Sketch)]
	The problem of checking whether $\simple{\dep}$ is not $\simple{D}$-weakly-acyclic can be solved via an adapted version of the nondeterministic procedure $\mathsf{CheckWA}$, depicted in Algorithm~\ref{alg:sl}, which performs the following:
	%
	\begin{enumerate}
		\item a reachability check on the graph $\depg{\simple{\dep}}$ in order to find a cycle $C$ with a special edge, and 
		\item a reachability check on the predicate graph of $\simple{\dep}$ in order to check whether $C$ is $\simple{D}$-supported. 
	\end{enumerate}
	The difference is that the set $\simple{\dep}$ cannot be explicitly constructed since, in general, is of exponential size. Instead, we build it ``on the fly'' by simply guessing a TGD $\sigma$, an atom $R(\bar x) \in \body{\sigma}$, an atom $P(\bar y) \in \head{\sigma}$, and a specialization $f$ of $\bar x$. We can then construct the atoms $\simple{R(f(\bar x))}$ and  $\simple{P(f(\bar u))}$, which in turn will allow us to guess an edge in the dependency graph of $\simple{\dep}$.
	Concerning the check for $\simple{D}$-supportedness, we can simply iterate over the atoms of $D$ and compute, for each $R(\bar t) \in D$, the tuple $\id{}{\bar t}$, and then perform the second reachability check starting from the predicate $R_{\id{}{\bar t}}$.
	It is not difficult to show that each step of this adapted procedure uses space
	\[
	O(\log |D| + \arity{\dep} \cdot \log ||\dep||),
	\]
	and the claim follows.
\end{proof}

Concerning the data complexity of $\mathsf{CT}(\class{L})$, observe that even if we fix the set of TGDs, by the proof Lemma~\ref{lem:nl-upper-bound-linear}, we need space $O(\log |D|)$, and thus, it only shows that $\mathsf{CT}(\class{L})$ is in \textsc{NL}. Therefore, to establish the desired \textsc{AC}$_0$ upper bound in data complexity we need a finer approach.
We construct a UCQ $Q_\dep$ such that, for every database $D$, $\simple{\dep}$ is not $\simple{D}$-weakly-acyclic iff $D$ satisfies $Q_\dep$. 
The construction of $Q_\dep$ is similar in spirit to that of the UCQ in the case of simple linear TGDs.
Let $P_{\simple{\dep}}$ be the predicates $R$ of $\sch{\simple{\dep}}$ such that there is a position $(P,i)$ in $\depg{\simple{\dep}}$ that belongs to a cycle with a special edge, and $R \reach{\simple{\dep}} P$.
We then define $Q_\dep$ as the UCQ
\begin{multline*}
\bigvee_{\stackrel{R_{\bar \ell} \in P_\simple{\dep}}{\text{with } \bar \ell = (\ell_1,\ldots,\ell_n)}} \exists x_{1}^{R_{\bar \ell}} \cdots \exists x_{n}^{R_{\bar \ell}} \, \bigg( R(x_{1}^{R_{\bar \ell}},\ldots,x_{n}^{R_{\bar \ell}})\ \wedge\\
\bigwedge_{\{i,j \in [n] \mid \ell_i = \ell_j\}} \left(x_{i}^{R_\ell} = x_{j}^{R_\ell}\right)\bigg),
\end{multline*}
where $x_{1}^{R_{\bar \ell}},\ldots,x_{n}^{R_{\bar \ell}}$ are distinct variables.
It is clear that $Q_\dep$ is finite since $P_{\simple{\dep}}$ is finite.
It is also easy to verify that $Q_\dep$ is correct since, for a database $D$, a fact with predicate $R_{(\ell_1,\ldots,\ell_n)}$ belongs to $\simple{D}$ iff there exists an atom $R(t_1,\ldots,t_n) \in D$ such that, for each $i,j \in [n]$, $t_i = t_j$ iff $\ell_i = \ell_j$.
Consequently, for a fixed set $\dep \in \class{L}$ of TGDs, we construct $Q_\dep$ in constant time, and then evaluate it over the input database $D$, which can be done in \textsc{AC}$_0$, and return $\mathsf{accept}$ if $D$ satisfies $Q_\dep$; otherwise, return $\mathsf{reject}$.


\section*{Linearization}

We formally define the technique of linearization, which converts a set of guarded TGDs into a set of linear TGDs. In particular, we are going to explain how we convert a database $D$ and a set $\dep \in \class{G}$ into a database $\mathsf{lin}(D)$ and a set $\mathsf{lin}(\dep)$ of linear TGDs, respectively, such that Proposition~\ref{pro:linearization} holds, namely
\begin{enumerate}
	\item $\dep \in \class{CT}_{D}$ iff $\mathsf{lin}(\dep) \in \class{CT}_{\mathsf{lin}(D)}$.
	\item $\mathsf{maxdepth}(D,\dep) = \mathsf{maxdepth}(\mathsf{lin}(D),\mathsf{lin}(\dep))$.
\end{enumerate}
The proof of Proposition~\ref{pro:linearization} is given in the next section.
Before defining $\mathsf{lin}(D)$ and $\mathsf{lin}(\dep)$, we need some auxiliary notions.
In the rest, let $D$ be a database, and $\dep \in \class{G}$ a set of TGDs.

\medskip

\noindent
\textbf{Auxiliary Notions.}
For an atom $\alpha \in \chase{D}{\dep}$, the {\em type of $\alpha$ w.r.t.~$D$ and $\dep$}, denoted $\type{D,\dep}{\alpha}$, is the set of atoms
\[
\{\beta \in \chase{D}{\dep} \mid \adom{\beta} \subseteq \adom{\alpha}\}.
\]
The {\em completion of $D$ w.r.t.~$\dep$}, denoted $\completion{D}{\dep}$, is
\[
\{ \alpha \in \chase{D}{\dep} \mid \adom{\alpha} \subseteq \adom{D}\}.
\]
For a schema $\ins{S}$, and a set of terms $V$, we write $\base{\ins{S},V}$ for the set of atoms that can be formed using predicates from $\ins{S}$ and terms from $V$, that is, the set of atoms
\[
\left\{R(\bar v) \mid R \in \ins{S} \text{ and } \bar v \in V^{\arity{R}}\right\}.
\] 
A {\em $\dep$-type} $\tau$ is a pair $(\alpha,T)$, where $\alpha$ is an atom $R(\bar t) = R(t_1,\ldots,t_n)$, with $R \in \sch{\dep}$, $t_1 = 1$,
$
t_1 \leq t_i \leq \max_{j \in [i-1]}\{t_j\}+1
$
for $i \in \{2,\ldots,n\}$, and $T \subseteq \base{\sch{\dep},\adom{\alpha}} \setminus \{\alpha\}$.
We write $\guard{\tau}$ for the atom $\alpha$, $\atoms{\tau}$ for the set $T \cup \{\alpha\}$, and $\arity{\tau}$ for the integer $\arity{\guard{\tau}}$, i.e., the {\em arity of $\tau$}. Intuitively, $\tau$ encodes the shape of a guard and its type.
%
%
Given a tuple $\bar u$ such that $\bar u \simeq \bar t$, the {\em instantiation of $\tau$ with $\bar u$}, denoted $\tau(\bar u)$, is the set of atoms obtained from $\atoms{\tau}$ after replacing $t_i$ with $u_i$.
Finally, for $\alpha = R(\bar t)$, we write $\types{\dep}{\alpha}$ for the set of all $\dep$-types $\tau$ such that $\guard{\tau} = R(\bar u)$, and
$\bar t \simeq \bar u$.

\medskip

\noindent
\textbf{Database Linearization.} We define the linearization of $D$ as
\[
\begin{array}{rcl}
\ling{D} &=& \left\{[\tau](\bar t)\ ~\left|~
\begin{array}{l}
R(\bar t) \in D,\\
\tau \in \types{\dep}{R(\bar t)},\\
\tau(\bar t) = \type{D,\dep}{R(\bar t)}
\end{array} \right.\right\},
\end{array}
\]
where $[\tau]$ is a new predicate not occurring in $\sch{\dep}$.
Here is a simple example that illustrates the above definition

\begin{example}\label{ex:db-lin}
	Consider the database
	\[
	D\ =\ \{R(a,a,b,c)\}
	\]
	and the set $\dep$ consisting of the guarded TGDs
\[
	\sigma\ =\ P(x,y,x,u,w),S(x,u)\ \ra\
	\exists z_1 \exists z_2 \, R(u,y,x,z_1),T(z_1,z_2,x)
\]
	and
	\[
	\sigma'\ =\ R(x,x,y,z) \ra Q(x,z).
	\]
	It is clear that the $\dep$-type
	\[
	\tau\ =\ (R(1,1,2,3), \{Q(1,3)\})
	\]
	belongs to $\types{\dep}{R(a,a,b,c)}$ since it holds that $(a,a,b,c) \simeq (1,1,2,3)$.
	Moreover, it is clear that
\[
		\tau(a,a,b,c)\ =\ \{R(a,a,b,c),Q(a,c)\}\ =\ \type{D,\dep}{R(a,a,b,c)}.
\]
	We conclude that $\ling{D} = \{\predt{\tau}(a,b,c)\}$. \hfill\markfull
\end{example}


\noindent
\textbf{TGD Linearization.} 
Consider a  TGD $\sigma \in \dep$ of the form
\[
\varphi(\bar x,\bar y)\ \ra\ \exists z_1 \cdots \exists z_n \, R_1(\bar u_1),\ldots,R_m(\bar u_m)
\]
with $\guard{\sigma} = R(\bar u)$, and $(\bar x \cup \{z_i\}_{i \in [n]})$ be the variables occurring in $\head{\sigma}$.
For a $\dep$-type $\tau$ such that there exists a homomorphism $h$ from $\varphi(\bar x,\bar y)$ to $\atoms{\tau}$ and $h(R(\bar u)) = \guard{\tau}$, the {\em linearization of $\sigma$ induced by $\tau$ and $h$} is
\[
[\tau](\bar u)\ \ra\ \exists z_1 \cdots \exists z_n \, [\tau_1](\bar u_1),\ldots,[\tau_m](\bar u_m),
\]
where, for each $i \in [m]$, $[\tau_i]$ is defined as follows.
Let $f$ be the function from the variables in $\head{\sigma}$ to $\mathbb{N}$  defined as
\begin{eqnarray*}
	f(t)\
	=\ \left\{
	\begin{array}{ll}
		h(t) & \quad \text{if } t \in \bar x\\
		&\\
		\arity{\dep}+i & \quad \text{if } t = z_i.
	\end{array}\right.
\end{eqnarray*}
%
With $\alpha_i = R_i(f(\bar u_i))$, $\widehat{\tau_i} = (\alpha_i,T_i)$, where $T_i$ is
\[
\{\beta \in \completion{I}{\dep} \mid \adom{\beta} \subseteq \adom{\alpha_i}\} \setminus \{\alpha_i\}
\]
with
\[
I\ =\ \{\alpha_1,\ldots,\alpha_m\}\ \cup\ \atoms{\tau}.
\]
Note that $\widehat{\tau_i}$ is not a proper $\dep$-type since the integers in $\alpha_i$ do not appear in the right order.
Let $\rho$ be the renaming function that renames the integers in $\alpha_i$ in order to appear in increasing order starting from $1$ (e.g., $\rho(R(2,2,4,1)) = R(1,1,2,3)$; the formal definition is omitted since it is clear what the function $\rho$ does). We then define $\tau_i$ as the pair $(\rho(\alpha_i),\rho(T_i))$.
We write $\ling{\sigma}$ for the set of all linearizations of $\sigma$, induced by some $\dep$-type $\tau$, and homomorphism $h$ from $\varphi(\bar x,\bar y)$ to $\atoms{\tau}$ such that $h(R(\bar u)) = \guard{\tau}$.
Finally, the \emph{linearization of $\dep$} is defined as the (finite) set of TGDs
\[
\ling{\dep}\ =\ \bigcup\limits_{\sigma \in \dep} \ling{\sigma}.
\]
An example that illustrates the above definition follows.

\begin{example}
	Let $\dep$ be the set of TGDs given in Example~\ref{ex:db-lin}. Consider the $\dep$-type
	\[
	\tau\ = (P(1,2,1,2,3), \{S(1,2),S(1,1)\}).
	\]
	It is easy to verify that 
	\[
	h\ =\ \{ x \mapsto 1, y \mapsto 2, u \mapsto 2, w \mapsto 3\}
	\]
	is a homomorphism from $\body{\sigma}$ to
	\[
	\atoms{\tau}\ =\ \{P(1,2,1,2,3),S(1,2),S(1,1)\}, 
	\]
	and $h(P(x,y,x,u,w)) = P(1,2,1,2,3)$.
	The linearization of $\sigma$ induced by $\tau$ and $h$ is the linear TGD
	\[
	\predt{\tau}(x,y,x,u,w) \ra \exists z_1, \exists z_2\, \predt{\tau_1}(u,y,x,z_1), \predt{\tau_2}(z_1,z_2,x),
	\]
	where 
	\begin{eqnarray*}
		\tau_1 &=& (R(1,1,2,3),\{S(2,1),S(2,2),Q(1,3)\}),\\
		\tau_2 &=& (T(1,2,3), \emptyset). \hspace{54mm} \hfill\markfull
	\end{eqnarray*}
\end{example}


\section*{Proof of Proposition~\ref{pro:linearization}}

It is easy to see that $(2)$ implies $(1)$. The rest of the proof is devoted to showing statement $(2)$.
Let $D$ be a database, and $\dep \in \class{G}$ a set of TGDs. We first need a way, analogous to that of equivalence up to simplification (given in Definition~\ref{def:equiv-simpl}), to relate the terms in $\chase{D}{\dep}$ with the terms in $\chase{\ling{D}}{\ling{\dep}}$.
For a guarded TGD $\sigma$, we write $\sigma_{\tau,g}$, where $\tau$ is a $\dep$-type, and $g$ a homomorphism from $\body{\sigma}$ to $\atoms{\tau}$ with $g(\guard{\sigma}) = \guard{\tau}$, for the linearization of $\sigma$ induced by $\tau$ and $g$.

\begin{definition}\label{def:equiv-linearization}
	Consider the terms $t \in \adom{\chase{D}{\dep}}$ and $u \in \adom{\chase{\ling{D}}{\ling{\dep}}}$. We say that {\em $t$ is equivalent to $u$ up to linearization (w.r.t.~$D$ and $\dep$)}, denoted $t \equiv_\mathsf{lin} u$, if:
	\begin{enumerate}
		\item $t,u \in \ins{C}$ implies $t = u$.
		\item $t = \bot_{\sigma,h}^{z} \in \ins{N}$ and $u = \bot_{\sigma',h'}^{z'} \in \ins{N}$ implies
		\begin{enumerate}
			\item $\sigma' = \sigma_{\tau,g}$ and $z' = z$, and
			\item for each $y \in \fr{\sigma'}$, $h(y) \equiv_{\mathsf{lin}} h'(y)$.
		\end{enumerate}
	\end{enumerate}
	We write $\mathsf{EL}(t)$ for the set of terms
	\begin{flalign*}
		&& \{u \in \adom{\chase{\ling{D}}{\ling{\dep}}} \mid t \equiv_\mathsf{lin} u\}. && \markfull
	\end{flalign*}
\end{definition}

We next establish a useful lemma, analogous to Lemma~\ref{lem:simplification-aux-1}, concerning the notion of equivalence up to linearization.

\begin{lemma}\label{lem:linearization-aux-1}
	The following hold:
	\begin{enumerate}
		\item For each $t,t' \in \adom{\chase{D}{\dep}}$, $t \neq t'$ implies $\mathsf{EL}(t) \cap \mathsf{EL}(t') = \emptyset$.
		\item For each $t \in \adom{\chase{D}{\dep}}$ and $u \in \mathsf{EL}(t)$, $\depth{t} = \depth{u}$.
	\end{enumerate}
\end{lemma}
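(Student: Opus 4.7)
The plan is to mirror, almost step by step, the proof of Lemma~\ref{lem:simplification-aux-1} given above for simplification, since the definition of $\equiv_\mathsf{lin}$ is structurally parallel to that of $\equiv_\mathsf{sim}$. Both items will be obtained by an induction on the depth of a suitable term, and in both cases the base case follows immediately from clause~(1) of Definition~\ref{def:equiv-linearization} (a constant is only equivalent to itself).

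For item~(1), I will argue by contradiction: assume $t \neq t'$ and pick some $u \in \mathsf{EL}(t) \cap \mathsf{EL}(t')$, and induct on $\depth{u}$. If $u$ is a constant, the definition forces $t = u = t'$. Otherwise $u = \bot^{z'}_{\sigma',h'}$, and $t, t'$ must be nulls of the form $\bot^{z'}_{\sigma_1,h_1}$ and $\bot^{z'}_{\sigma_2,h_2}$ with $\sigma' = (\sigma_i)_{\tau_i,g_i}$ for suitable $\tau_i, g_i$. The first technical step is to recover the original TGD from its linearization, i.e., to show $\sigma_1 = \sigma_2$; this relies on the standing convention that no two TGDs of $\dep$ share a variable, so the head of $\sigma'$ (whose variables are inherited verbatim from the original head) already identifies the source TGD uniquely. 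Call this common TGD $\sigma$. A further observation is then needed: by guardedness every body variable of $\sigma$ appears in $\guard{\sigma}$, hence every frontier variable of $\sigma$ is among the guard arguments $\bar u$, which in turn appear as the arguments of the linearized body atom $[\tau](\bar u)$; this gives $\fr{\sigma_{\tau,g}} = \fr{\sigma}$. Clause~(2)(b) of $\equiv_\mathsf{lin}$ then yields $h_1(y) \equiv_\mathsf{lin} h'(y) \equiv_\mathsf{lin} h_2(y)$ for every $y \in \fr{\sigma}$, and the induction hypothesis gives $h_1(y) = h_2(y)$. Since the null $\bot^{z'}_{\sigma, h_{i|\fr{\sigma}}}$ depends only on the restriction of $h_i$ to $\fr{\sigma}$, I conclude $t = t'$, contradicting the assumption.

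For item~(2), I will induct on $\depth{t}$. The base case, $t$ a constant, is immediate since by definition $u = t$. For the inductive step, $t = \bot^z_{\sigma,h}$ forces $u = \bot^z_{\sigma_{\tau,g},h'}$ for some $\tau, g, h'$; using the identity $\fr{\sigma_{\tau,g}} = \fr{\sigma}$ and clause~(2)(b), the induction hypothesis delivers $\depth{h(y)} = \depth{h'(y)}$ for each $y \in \fr{\sigma}$. Applying the recursive definition of depth (one plus the maximum over the frontier) to both sides yields $\depth{t} = \depth{u}$.

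The principal obstacle is the bookkeeping in item~(1): identifying $\sigma_1$ with $\sigma_2$, and aligning $\fr{\sigma_{\tau,g}}$ with $\fr{\sigma}$ so that the clause~(2)(b) equivalence feeds the induction hypothesis exactly on the frontier variables that label the null. Once these two identifications are in place, the inductive step goes through just as in the simplification case; in fact, since linearization performs no variable merging (unlike simplification, where the specialization $f$ forces one to transport equivalences through $f$), the argument is slightly cleaner here, and no analogue of the function $g = \{f(x) \mapsto h(x)\}$ from the simplification proof is required.
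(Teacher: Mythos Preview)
Your proof is correct and follows essentially the same approach as the paper's, which also argues item~(1) by contradiction via induction on $\depth{u}$ and dismisses item~(2) as immediate from the definition. In fact you are more explicit than the paper on two points the paper glosses over: that $\sigma_1 = \sigma_2$ (the paper silently writes both nulls over a common $\sigma$) and that $\fr{\sigma_{\tau,g}} = \fr{\sigma}$ (the paper implicitly uses this when asserting that $h_1$ and $h_2$ are substitutions from $\fr{\sigma'}$).
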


\begin{proof}
	\OMIT{
	(\textbf{Item 1}) We proceed by induction on the depth of $t$. 
	
	\medskip
	
	\noindent
	\textbf{Base Case:} If $\depth{t} = 0$, then $t$ is a constant and the only term equivalent to $t$ is $t$ itself, i.e., $\mathsf{EL}(t) = \{t\}$.
	
	\medskip
	
	\noindent
	\textbf{Induction Step:} Assume that $\depth{t} > 0$. Thus, $t$ is a null of the form $\skolem{\sigma}{h}{z}$, for some $\sigma \in \dep$, homomorphism $h$, and existentially quantified variable $z$ in $\sigma$. 
	By definition, for a null $u = \bot_{\sigma',h'}^{z}$ that occurs in $\chase{\ling{D}}{\ling{\dep}}$, $t \equiv_{\mathsf{lin}} u$ iff $\sigma' = \sigma_{\tau,g}$, and for each $y \in \fr{\sigma'}$, $h(y) \equiv_{\mathsf{lin}} h'(y)$. 
	Note that there exist finitely many linearizations of $\sigma$ since $\arity{\dep}$ and $|\sch{\dep}|$ are finite. 
	Moreover, for each $y \in \fr{\sigma'}$, by definition, $\depth{h(y)} < \depth{t}$. Hence, by induction hypothesis, $\mathsf{EL}(h(y))$ is finite. This, together with the fact that $\sigma$ has finitely many linearizations, implies $\mathsf{EL}(t)$ is finite.

	\medskip
	
}
	(\textbf{Item 1}) By contradiction, assume that $t \neq t'$ and $\mathsf{EL}(t) \cap \mathsf{EL}(t') \neq \emptyset$. We show that $t = t'$, which contradicts our hypothesis.
	Let $u \in \mathsf{EL}(t) \cap \mathsf{EL}(t')$. We proceed by induction on the depth of $u$.
	
	\medskip
	
	\noindent
	\textbf{Base Case:} If $\depth{u} = 0$, then $u$ is a constant. Therefore, $t=u$ and $t'=u$, i.e., $t = t'$, and the claim follows.

	\medskip
	
	\noindent
	\textbf{Induction Step:} Assume that $\depth{u} > 0$. Thus, $u$ is a null of the form $\skolem{\sigma'}{h'}{z}$ for some $\sigma' \in \ling{\dep}$, homomorphism $h'$, and existentially quantified variable $z$ in $\sigma'$.
	Hence, $t,t'$ are nulls of the form $\skolem{\sigma}{h_1}{z}$ and $\skolem{\sigma}{h_2}{z}$, respectively, such that $\sigma' = \sigma_{\tau,g}$.
	Since $t \equiv_{\mathsf{lin}} u$ and $t' \equiv_{\mathsf{lin}} u$, by definition, $h_1(y) \equiv_{\mathsf{lin}} h'(y)$ and $h_2(y) \equiv_{\mathsf{lin}} h'(y)$, for each $y \in \fr{\sigma'}$. Thus, by induction hypothesis, $h_1(y) = h_2(y)$. Since $h_1$ and $h_2$ are substitutions from $\fr{\sigma'}$, we conclude that $h_1 = h_2$, which in turn implies that $t = t'$.
	
	\medskip
	
	(\textbf{Item 2}) It easily follows by definition.
\end{proof}

We now transfer the notion of equivalence up to linearization to tuples of terms and atoms.
Recall that, for two tuples $ \bar t = (t_1,\ldots,t_n)$ and $\bar u = (u_1,\ldots,u_m)$, for $n,m>0$, we write $\bar t \simeq \bar u$ if $n=m$, and $t_i = t_j$ iff $u_i = u_j$ for each $i,j \in [n]$.
Furthermore, assuming that $n=m$, we write $\bar t \equiv_{\mathsf{lin}} \bar u$ for the fact that $t_i \equiv_{\mathsf{lin}} u_i$ for each $i \in [n]$.
%

\begin{definition}\label{def:atom-equivalence}
	Consider the atoms $\alpha = R(\bar t) \in \chase{D}{\dep}$ and $\beta = [\tau](\bar u) \in \chase{\ling{D}}{\ling{\dep}}$. We say that {\em $\alpha$ is equivalent to $\beta$ up to linearization (w.r.t.~$D$ and $\dep$)}, denoted $\alpha \equiv_{\mathsf{lin}} \beta$, if there is a bijection $\mu$ from $\adom{\type{D,\dep}{\alpha}}$ to $\adom{\atoms{\tau}}$ such that $\mu(\type{D,\dep}{\alpha}) = \atoms{\tau}$ and $\mu(\alpha) = \guard{\tau}$, $\bar t \simeq \bar u$, and $\bar t \equiv_{\mathsf{lin}} \bar u$. 
	We let $\mathsf{EL}(\alpha) = \{\beta \in \chase{\ling{D}}{\ling{\dep}} \mid \alpha \equiv_{\mathsf{sim}} \beta\}$. \hfill\markfull
\end{definition}

We are now ready to establish a crucial technical lemma, which is analogous to Lemma~\ref{lem:linearization-aux-1}, but for atoms instead of terms, that will easily lead to Proposition~\ref{pro:linearization}:

\begin{lemma}\label{lem:linearization-aux-2}
	The following hold:
	\begin{enumerate}
		\item $\{\mathsf{EL}(\alpha) \mid \alpha \in \chase{D}{\dep}\}$ forms a partition of $\chase{\ling{D}}{\ling{\dep}}$.
		\item For each $\alpha \in \chase{D}{\dep}$ and $\beta \in \mathsf{EL}(\alpha)$, $\depth{\alpha} = \depth{\beta}$.
	\end{enumerate}
\end{lemma}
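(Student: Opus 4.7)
The plan is to mirror the structure of the proof of Lemma~\ref{lem:simplification-aux-2}. Item (2) will follow immediately from item (2) of Lemma~\ref{lem:linearization-aux-1}: since $\alpha \equiv_{\mathsf{lin}} \beta$ entails $\bar t \equiv_{\mathsf{lin}} \bar u$, applying the term-level lemma componentwise gives $\depth{t_i} = \depth{u_i}$ for all $i$, hence the maxima (defining $\depth{\alpha}$ and $\depth{\beta}$) agree. For item (1) I would decompose the claim into the three sub-statements (a) pairwise disjointness, (b) non-emptiness of each $\mathsf{EL}(\alpha)$, and (c) exhaustiveness $\bigcup_{\alpha \in \chase{D}{\dep}} \mathsf{EL}(\alpha) = \chase{\ling{D}}{\ling{\dep}}$.

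For (a), take $\beta \in \mathsf{EL}(\alpha) \cap \mathsf{EL}(\alpha')$ with $\beta = [\tau](\bar u)$. Both $\alpha = R(\bar t)$ and $\alpha' = R(\bar t')$ must share the predicate $R$ determined by $\guard{\tau}$, with $\bar t \simeq \bar u \simeq \bar t'$, $\bar t \equiv_{\mathsf{lin}} \bar u$ and $\bar t' \equiv_{\mathsf{lin}} \bar u$. Applying item (1) of Lemma~\ref{lem:linearization-aux-1} coordinatewise forces $\bar t = \bar t'$, hence $\alpha = \alpha'$.

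For (b) and (c) I would run parallel inductions on the lengths of valid chase derivations on each side. Concentrating on (b), fix a valid chase derivation $(I_i)_{i \ge 0}$ of $D$ w.r.t.~$\dep$. In the base case, each $\alpha \in D$ has $\type{D,\dep}{\alpha} \subseteq \completion{D}{\dep}$ finite, so by construction of $\ling{D}$ there is a $\dep$-type $\tau$ with $\tau(\bar t) = \type{D,\dep}{\alpha}$ and $[\tau](\bar t) \in \ling{D} \subseteq \mathsf{EL}(\alpha)$. In the induction step, consider a trigger $(\sigma_i,h_i)$ with guard $\gamma = h_i(\guard{\sigma_i}) \in I_i$; the induction hypothesis furnishes $[\tau](\bar u) \in \mathsf{EL}(\gamma)$. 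The equivalence supplies a bijection $\mu$ with $\mu(\type{D,\dep}{\gamma}) = \atoms{\tau}$ and $\mu(\gamma) = \guard{\tau}$. Since $\sigma_i$ is guarded and $h_i(\body{\sigma_i}) \subseteq \type{D,\dep}{\gamma}$, the composite $g = \mu \circ h_i$ is a homomorphism from $\body{\sigma_i}$ to $\atoms{\tau}$ with $g(\guard{\sigma_i}) = \guard{\tau}$, so $\sigma_{\tau,g} \in \ling{\dep}$ exists. One then builds a homomorphism $h'_i$ from the body of $\sigma_{\tau,g}$ into $[\tau](\bar u)$ using the tuple-level equivalence, applies the trigger $(\sigma_{\tau,g},h'_i)$, and checks that each produced head atom $\beta_j$ lies in $\mathsf{EL}(\alpha_j)$ for the corresponding new atom $\alpha_j$ generated by $(\sigma_i,h_i)$.

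The main obstacle will be verifying, inside this induction step, that the $\dep$-types $\tau_j$ occurring in the head of $\sigma_{\tau,g}$ faithfully encode $\type{D,\dep}{\alpha_j}$; this is what provides the bijection witnessing $\alpha_j \equiv_{\mathsf{lin}} \beta_j$. The key ingredient here is the fundamental property of guarded chases recalled before the definition of $\ling{\dep}$: every atom in $\chase{D}{\dep}$ over $\adom{\alpha_j}$ must be generated from a guard whose terms lie in $\adom{\gamma} \cup \adom{\alpha_j}$, and hence is determined by $\type{D,\dep}{\gamma}$ together with $\result{\sigma_i}{h_i}$. But this is exactly the data from which the construction of $\tau_j$ in $\ling{\sigma_i}$ is assembled (via the set $I$ and the renaming $\rho$ in the definition of linearization), so a careful bookkeeping of $\mu$ on old terms and of the fresh-null correspondence on the existentially quantified variables yields the required bijection. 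Statement (c) is then obtained symmetrically, inducting on a chase derivation of $\ling{D}$ w.r.t.~$\ling{\dep}$ and exploiting the fact that by construction every TGD in $\ling{\dep}$ is $\sigma_{\tau,g}$ for some $\sigma \in \dep$ together with a witnessing $\tau$ and $g$, so each step on the linearized side can be traced back to a trigger on the guarded side.
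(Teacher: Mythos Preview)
Your proposal is correct and follows essentially the same approach as the paper: the same three-part decomposition of item (1), the same use of item (1) of Lemma~\ref{lem:linearization-aux-1} for disjointness, and the same induction on chase derivations for (b) and (c), with the guard's type providing the bijection needed to build the linearized trigger. The only substantive difference is that for the ``main obstacle'' (showing $\tau_j$ really encodes $\type{D,\dep}{\alpha_j}$) the paper invokes an external result, Lemma~6 of~\cite{GoMP20}, which states precisely that $\type{D,\dep}{\alpha_j}$ coincides with $\{\gamma \in \completion{J}{\dep} \mid \adom{\gamma} \subseteq \adom{\alpha_j}\}$ for $J = \result{\sigma_i}{h_i} \cup \type{D,\dep}{\beta}$; your informal description of the ``fundamental property of guarded chases'' is exactly this lemma, so you should cite it rather than reprove it.
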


\begin{proof}
	\OMIT{
	(\textbf{Item 1}) Consider an atom $R(\bar t) \in \chase{D}{\dep}$. By definition, each $\beta \in \mathsf{EL}(R(\bar t))$ is of the form $[\tau](\bar u)$ such that $\bar t \equiv_{\mathsf{lin}} \bar u$. Since, by item (1) of Lemma~\ref{lem:linearization-aux-1}, for each term $t$ in $\bar t$, $\mathsf{EL}(t)$ is finite, we get that there are finitely many tuples $\bar u$ such that $\bar t \equiv_{\mathsf{lin}} \bar u$. Therefore, $\mathsf{EL}(R(\bar t))$ is finite, as needed.

\medskip
}
(\textbf{Item 1}) To establish the claim we need to show:
\begin{enumerate}
	\item for $\alpha,\alpha' \in\chase{D}{\dep}$, $\alpha \neq \alpha'$ implies $\mathsf{EL}(\alpha) \cap \mathsf{EL}(\alpha') = \emptyset$,
	\item for each $\alpha \in \chase{D}{\dep}$, $\mathsf{EL}(\alpha) \neq \emptyset$, and
	\item $\bigcup_{\alpha \in \chase{D}{\dep}} \mathsf{EL}(\alpha) = \chase{\ling{D}}{\ling{\dep}}$.
\end{enumerate}
We proceed to establish the above three statements:
\begin{description}
	\item[Statement (a).] By contradiction, assume that $\alpha \neq \alpha'$ and $\mathsf{EL}(\alpha) \cap \mathsf{EL}(\alpha') \neq \emptyset$. We show that $\alpha = \alpha'$, which contradicts the fact that $\alpha \neq \alpha'$.
	Let $\beta \in \mathsf{EL}(\alpha) \cap \mathsf{EL}(\alpha')$, i.e., $\alpha \equiv_{\mathsf{lin}} \beta$ and $\alpha' \equiv_{\mathsf{lin}} \beta$.
	Assuming that $\beta$ is of the form $\predt{\tau}(\bar u)$, with $\guard{\tau} = R(i_1,\ldots,i_k)$, we get that $\alpha,\alpha'$ are of the form $R(\bar t)$ and $R(\bar t')$, respectively. Furthermore, $\bar t \equiv_{\mathsf{lin}} \bar u$ and $\bar t' \equiv_{\mathsf{lin}} \bar u$. By item (1) of Lemma~\ref{lem:linearization-aux-1}, $\bar t = \bar t'$, and thus, $\alpha = \alpha'$.
	
	\item[Statement (b).] Consider a (possibly infinite) valid chase derivation $I_0,I_1,\ldots$ of $D$ w.r.t.\ $\dep$, with $I_i \app{\sigma_i}{h_i} I_{i+1}$ for $i \ge 0$. We show that, for each $i \ge 0$, and $\alpha \in I_i$, $\mathsf{EL}(\alpha) \neq \emptyset$. We proceed by induction on the number of chase steps in the derivation.
	
	\medskip
	
	\noindent
	\textbf{Base Case:}
	For each $\alpha \in I_0 = D$, $\mathsf{EL}(\alpha) \neq \emptyset$ follows from the fact that $\ling{D} \subseteq \chase{\ling{D}}{\ling{\dep}}$.
	
	\medskip
	
	\noindent
	\textbf{Induction Step:}
	Let $i  \ge 0$, and consider an atom $\alpha \in I_{i+1}$. 
	In case $\alpha \in I_{i}$, the claim follows by induction hypothesis.
	The interesting case is when $\alpha \in I_{i+1} \setminus I_{i}$. Therefore, $\alpha \in \result{\sigma_i}{h_i}$, i.e., $\alpha$ belongs to the result of the application of $(\sigma_i,h_i)$ to $I_i$. Let $\beta$ be the atom $R(\bar t) = h_i(\guard{\sigma_i}) \in I_i$.
	
	By induction hypothesis, there is an atom $\beta' = \predt{\tau}(\bar u) \in \mathsf{EL}(\beta)$, which implies that there exists a bijection $\mu$ from $\adom{\type{D,\dep}{\beta}}$ to $\adom{\atoms{\tau}}$ such that $\mu(\type{D,\dep}{\beta}) = \atoms{\tau}$ and $\mu(\beta) = \guard{\tau}$. Hence, the substitution $f = \mu \circ h_i$ is a homomorphism from $\body{\sigma_i}$ to $\atoms{\tau}$, with $f(\guard{\sigma_i}) = \guard{\tau}$. Let $\guard{\sigma_i} = R(\bar x)$, and let $\sigma_i'$ be the linearization of $\sigma_i$ induced by $\tau$ and $f$.
	The body of $\sigma_i'$ is of the form $\predt{\tau}(\bar x)$.
	Since $\beta \equiv_{\mathsf{lin}} \beta'$, by definition, $\bar t \simeq \bar u$. Hence, there exists a bijection $g : \bar t \ra \bar u$ such that $g(\bar t) = \bar u$.
	We can compose $h_i$ and $g$, and obtain an homomorphism $h_i'$ from $\predt{\tau}(\bar x)$ to $\beta'$, i.e., with $h_i' = g \circ h_i$,
	$\predt{\tau}(h_i'(\bar x)) = \beta'$. Therefore, there must be an atom $\alpha' \in \chase{\ling{D}}{\ling{\dep}}$ that belongs to $\result{\sigma_i'}{h_i'}$. We show that such $\alpha'$ belongs to $\mathsf{EL}(\alpha)$.

	Assume $\sigma_i$ and $\sigma_i'$ are of the form
	\[
	\varphi(\bar x) \ra \exists z_1 \cdots \exists z_n \, R_1(\bar u_1),\ldots,R_m(\bar u_m),
	\]
	and
	\[
	\predt{\tau}(\bar x) \ra \exists z_1 \cdots \exists z_n \, \predt{\tau_1}(\bar u_1),\ldots,\predt{\tau_m}(\bar u_m),
	\]
	respectively, 
	and assume $\alpha$ is the atom $\mu_i(R_\ell(\bar u_\ell)) = R_\ell(\bar k) \in \result{\sigma_i}{h_i}$, for some $\ell \in [m]$, where $\mu_i$ is the function such that $\mu_i(\head{\sigma_i}) = \result{\sigma_i}{h_i}$.
	Consider the atom $\alpha' = \mu_i'(\predt{\tau_\ell}(\bar u_\ell)) = \predt{\tau_\ell}(\bar v) \in \result{\sigma_i'}{h_i'}$, with $\mu_i'(\head{\sigma_i'}) = \result{\sigma_i'}{h_i'}$.
	Note that $\bar k \simeq \bar v$ and $\bar k \equiv_{\mathsf{lin}} \bar v$, since $\bar t \simeq \bar u$ and $\bar t \equiv_{\mathsf{lin}} \bar u$, respectively.
	It remains to prove that there is a bijection $\mu$ from $\adom{\type{D,\dep}{\alpha}}$ to $\adom{\atoms{\tau_\ell}}$ such that $\mu(\type{D,\dep}{\alpha}) = \atoms{\tau_\ell}$ and $\mu(\alpha) = \guard{\tau_\ell}$.
	
	Recalling that $f$ is the homomorphism from $\body{\sigma_i}$ to $\atoms{\tau}$, with $f(\guard{\sigma_i}) = \guard{\tau}$, let $f'$ be the function from the variables of $\head{\sigma_i}$ to $\mathbb{N}$, defined as $f'(t) = f(t)$, if $t \in \bar x$, and $f'(t) = \arity{\dep} + i$, if $t = z_i$.
	By definition of $\sigma_i'$, with $\alpha_j = f'(R_j(\bar u_j))$, for each $j \in [m]$, $\tau_\ell$ is the $\dep$-type obtained by renaming the integers of $\widehat{\tau_\ell} = (\alpha_\ell,T_\ell)$, where $\{\alpha_\ell\} \cup T_\ell$ is the set
	\[
	\{\gamma \in \completion{I}{\dep} \mid \adom{\gamma} \subseteq \adom{\alpha_\ell}\}
	\]
	with
	\[I = \{\alpha_1,\ldots,\alpha_m\}\ \cup\ \atoms{\tau}.
	\]
	Moreover, by Lemma 6 of~\cite{GoMP20}, we know that $\type{D,\dep}{\alpha}$ can be equivalently defined as
	\[
	\{ \gamma \in \completion{J}{\dep} \mid \adom{\gamma} \subseteq \adom{\alpha}\},
	\]
	with
	\[J = \result{\sigma_i}{h_i} \cup\type{D,\dep}{\beta}.
	\]
	By induction hypothesis, there is a bijection $\mu$ from $\adom{\type{D,\dep}{\beta}}$ to $\adom{\atoms{\tau}}$ such that 
	\[
	\mu(\type{D,\dep}{\beta})\ =\ \atoms{\tau}
	\] 
	and 
	\[\mu(\beta)\ =\ \guard{\tau}.
	\]
	By definition of $\result{\sigma_i}{h_i}$, and by construction of the atoms $\alpha_1,\ldots,\alpha_m$, $\mu$ can be extended to a bijection $\mu'$ from $\adom{J}$ to $\adom{I}$ such that $\mu'(J) = I$ and $\mu'(\alpha) = \alpha_\ell$.
	Furthermore, by definition of completion, we conclude that $\mu'$ is also a bijection from $\adom{\type{D,\dep}{\alpha}}$ to $\adom{\{\alpha_\ell\} \cup T_\ell}$ such that $\mu'(\type{D,\dep}{\alpha}) = \{\alpha_\ell\} \cup T_\ell$ and $\mu'(\alpha) = \alpha_\ell$.
	Since $\tau_\ell$ is obtained by applying a renaming function $\rho$ of the integers of $\alpha_\ell$ to $\widehat{\tau_\ell}$, then $\mu '' = \rho \circ \mu'$ is a bijection from $\adom{\type{D,\dep}{\alpha}}$ to $\adom{\atoms{\tau_\ell}}$ such that $\mu''(\type{D,\dep}{\alpha}) = \atoms{\tau_\ell}$ and $\mu''(\alpha) = \guard{\tau_\ell}$.
	
	\item[Statement (c).] We actually need to show that 
	\[
	\bigcup_{\alpha \in \chase{D}{\dep}} \mathsf{EL}(\alpha)\ \supseteq\ \chase{\ling{D}}{\ling{\dep}}. 
	\]
	This can be done via an inductive argument analogous to the one given above for statement (b), with the difference that the induction is on the number of chase steps of a valid chase derivation of $\ling{D}$ w.r.t.~$\ling{\dep}$.
\end{description}

\medskip

(\textbf{Item 2}) This follows from item (2) of Lemma~\ref{lem:linearization-aux-1}.
\end{proof}

It is not difficult to verify that statement $(2)$ of Proposition~\ref{pro:linearization} follows from Lemma~\ref{lem:linearization-aux-2}. 



\section*{Proof of Theorem~\ref{the:lower-bound-guarded}}

For each $\ell > 0$, we define the database
\[
D_\ell\ =\ \{\text{\rm Node}(c_1,c_1,0,1),\ldots,\text{\rm Node}(c_\ell,c_\ell,0,1)\}.
\]
An atom of the form $\text{\rm Node}(u,v,z,o)$ simply states that $v$ is a node ($v$ is the encoded node) with $u$ being its parent.
The repeated constant $c_i$ indicates that $c_i$ is a root note of a tree of the $0$-th stratum.
As we shall see, the constants $0$ and $1$ in the last two positions of the database atoms will give us access to $0$ and $1$, respectively, without having to explicitly mention them in the TGDs.

We proceed to define $\dep_{n,m}$, for each $n,m>0$. Intuitively speaking, the goal of $\dep_{n,m}$ is to construct, for each $i \in [\ell]$, $2^n$ strata such that the $0$-th stratum consists of a full binary tree of depth $2^{(2^m)}-1$ rooted at $c_i$, while the $j$-th stratum, for $j \in \{1,\ldots,2^n-1\}$, consists of $2^{\left(j \cdot \left(2^{(2^m)}-1\right)\right)}$ full binary trees of depth $2^{(2^m)}-1$ rooted at the leaf nodes of the trees of stratum $j-1$.
The crucial components of $\dep_{n,m}$ are the counting mechanisms for the strata (this is an exponential counter w.r.t.~$n$), and the depth of the trees (this is a double-exponential counter w.r.t.~$m$). To this end, we need a convenient representation of strata ids and depth ids.
Strata ids are represented as $n$-bit binary numbers. More precisely, we use $n$ binary predicates $S_1,\ldots,S_n$ with the following meaning: for $i \in [n]$, $S_i(x,b)$ encodes the fact that $x$ is a node of a binary tree of a stratum $b_1,\ldots,b_n$ with $b_i = b$. In other words, the set of atoms $S_1(x,b_1),\ldots,S_n(x,b_n)$ encodes the fact that $x$ is a node of a tree of the stratum $b_1,\ldots,b_n$.
Depth ids are represented as $2^m$-bit binary numbers. More precisely, we use an $(m+2)$-ary predicate $\text{\rm Depth}$ with the following meaning: $\text{\rm Depth}(x,b_1,\ldots,b_m,b)$ encodes the fact that $x$ is a node of depth $b'_1,\ldots,b'_{2^m}$ with $b'_i = b$ assuming that $b_1,\ldots,b_m$ is the binary encoding of the digit-id $i$.
The formal construction of $\dep_{n,m}$ follows. For brevity, given a variable $x$, we write $x^k$ for $\underbrace{x,\ldots,x}_{k}$.

The following TGD creates the root of the $0$-th stratum
\[
\text{\rm Node}(x,x,z,o)\ \ra\ \text{\rm Root}(x), S_1(x,z),\ldots,S_n(x,z).
\]
%
For implementing the double-exponential depth counter, we first associate digit-ids to tree nodes via the following TGDs
\[
\text{\rm Node}(x,y,z,o)\ \ra\ \text{\rm Did}(x,y,z,o,z^m)
\]
that creates the digit-id zero, and, for each $i \in [m]$,
\begin{align*}
\text{\rm Did}(x,y,z,o,w_1,\ldots,w_{i-1},z,w_{i+1},\ldots,w_m)\ \ra\\ \text{\rm Did}(x,y,z,o,w_1,\ldots,w_{i-1},o,w_{i+1},\ldots,w_m)
\end{align*}
that create all the other digit-ids starting from the digit-id zero. In simple words, for each tree node, the above rules generate all the $2^m$ digit-ids by starting from $0^m$, and successively switching a zero ($z$) to an one ($o$) in all possible ways.

Having the $2^m$ digit-ids for each node in place, we can then set the depth counter of a root node to zero by simply setting each bit of the $2^m$-bit depth counter to zero. This is easily done via the TGD
\[
\text{\rm Did}(x,y,z,o,w_1,\ldots,w_m), \text{\rm Root}(y)\ \ra\ \text{\rm Depth}(y,w_1,\ldots,w_m,z).
\]

As we shall see, for increasing the depth counter it is crucial to be able to say which digit-id comes after a given digit-id. In other words, we need a successor relation over digit-ids. This can be easily computed via the following set of TGDs: for each $i \in \{1,\ldots,m\}$
\begin{align*}
\text{\rm Did}(x,y,z,o,w_1,\ldots,w_{i-1},z,o^{m-i})\ \ra\\ \text{\rm Succ}(x,y,z,o,w_1,\ldots,w_{i-1},z,o^{m-i},w_1,\ldots,w_{i-1},o,z^{m-i}).
\end{align*}

We further need a way to specify whether a node belongs to a tree of the last stratum, and whether it has reached the maximal depth of its tree. We can actually easily specify the complement via the following TGDs: for each $i \in [n]$
\[
\text{\rm Node}(x,y,z,o),S_i(y,z)\ \ra\ \text{\rm NonMaxStratum}(y)
\]
and
\[
\text{\rm Depth}(x,w_1,\ldots,w_m,z)\ \ra \text{\rm NonMaxDepth}(x).
\]

We can now generate non-root nodes, i.e., nodes that belong to a tree of a certain stratum. As long as a node has not reached the maximal depth, it has two children generated via the TGD
\begin{align*}
\text{\rm Node}(x,y,z,o),\text{\rm NonMaxDepth}(y)\ \ra\\
\exists w \exists w' \, \text{\rm Node}(y,w,z,o),\text{\rm NonRoot}(w), \text{\rm Node}(y,w',z,o),\text{\rm NonRoot}(w')
\end{align*}
and we further specify that the newly generated nodes belong to the same stratum as their parent via the TGDs
\[
\text{\rm Node}(x,y,z,o),\text{\rm NonRoot}(y),S_i(x,z)\ \ra\ S_i(y,z)
\]
and
\[
\text{\rm Node}(x,y,z,o),\text{\rm NonRoot}(y),S_i(x,o)\ \ra\ S_i(y,o).
\]

We now need to specify that the depth of the newly generated nodes is the depth of their parent plus one, i.e., we need a mechanism for increasing the depth counter of the parent by one. This can be done in a standard way: convert the rightmost digit that is zero (which we call pivot) to one, change all the digits right to the pivot digit (which are by definition one) to zero, and keep all the digits left to the pivot digit unchanged. To this end, we need a way to classify digits as pivot, change and copy. This is done via the following TGDs; $\bar w$ and $\bar w'$ are $m$-ary tuples of distinct variables:
\begin{align*}
	\text{\rm Depth}(y,o^m,z)\ \ra\ \text{\rm DPivot}(y,o^m)\\
	\text{\rm Depth}(y,o^m,o)\ \ra\ \text{\rm DChange}(y,o^m)\\
	\text{\rm Succ}(x,y,z,o,\bar w,\bar w'),\text{\rm DChange}(y,\bar w'),\text{\rm Depth}(y,\bar w,z)\ \ra\\ \text{\rm DPivot}(y,\bar w)\\
	\text{\rm Succ}(x,y,z,o,\bar w,\bar w'),\text{\rm DChange}(y,\bar w'),\text{\rm Depth}(y,\bar w,o)\ \ra\\ \text{\rm DChange}(y,\bar w)\\
	\text{\rm Succ}(x,y,z,o,\bar w,\bar w'),\text{\rm DPivot}(y,\bar w')\ \ra\ \text{\rm DCopy}(y,\bar w)\\
	\text{\rm Succ}(x,y,z,o,\bar w,\bar w'),\text{\rm DCopy}(y,\bar w')\ \ra\ \text{\rm DCopy}(y,\bar w).
\end{align*}
We can now specify that the depth of non-root nodes is the depth of their parent plus one as follows:
\begin{align*}
	\text{\rm Did}(x,y,z,o,\bar w), \text{\rm NonRoot}(y),\text{\rm DChange}(x,\bar w)\ \ra\\ \text{\rm Depth}(y,\bar w,z)\\
	\text{\rm Did}(x,y,z,o,\bar w), \text{\rm NonRoot}(y),\text{\rm DPivot}(x,\bar w)\ \ra\\ \text{\rm Depth}(y,\bar w,o)\\
	\text{\rm Did}(x,y,z,o,\bar w), \text{\rm NonRoot}(y),\text{\rm DCopy}(x,\bar w),\text{\rm Depth}(x,\bar w,z)\ \ra\\ \text{\rm Depth}(y,\bar w,z)\\
	\text{\rm Did}(x,y,z,o,\bar w), \text{\rm NonRoot}(y),\text{\rm DCopy}(x,\bar w),\text{\rm Depth}(x,\bar w,o)\ \ra\\ \text{\rm Depth}(y,\bar w,o).
\end{align*}

It remains to explain how we start the construction of a new stratum. To this end, we first need to specify the new root nodes, which is done by the following TGD
\begin{align*}
\text{\rm Node}(x,y,z,o),\text{\rm NonMaxStratum}(y)\ \ra\\ \exists w \, \text{\rm Node}(y,w,z,o),\text{\rm NewRoot}(w),
\end{align*}
and, of course, a new root node is a root node
\[
\text{\rm NewRoot}(x)\ \ra\ \text{\rm Root}(x).
\]

We finally need to specify that a new root node belongs to a tree of the next stratum, i.e., we need a mechanism for increasing the stratum counter by one. This can be done similarly to the depth counter. To this end, we need a way to classify digits as pivot, change and copy, which is done via the following TGDs:
\begin{align*}
\text{\rm Node}(x,y,z,o),S_n(y,z)\ \ra\ \text{\rm SPivot}_n(y)\\
\text{\rm Node}(x,y,z,o),S_n(y,o)\ \ra\ \text{\rm SChange}_n(y)
\end{align*}
and, for each $i \in \{2,\ldots,n\}$, we have
\begin{align*}
\text{\rm Node}(x,y,z,o),\text{\rm SChange}_i(y),S_{i-1}(y,z)\ \ra\ \text{\rm SPivot}_{i-1}(y)\\
\text{\rm Node}(x,y,z,o),\text{\rm SChange}_i(y),S_{i-1}(y,o)\ \ra\ \text{\rm SChange}_{i-1}(y)\\
\text{\rm Node}(x,y,z,o),\text{\rm SPivot}_i(y)\ \ra\ \text{\rm SCopy}_{i-1}(y)\\
\text{\rm Node}(x,y,z,o),\text{\rm SCopy}_i(y)\ \ra\ \text{\rm SCopy}_{i-1}(y).
\end{align*}
We can now increment the stratum counter via the following TGDs: for each $i \in \{2,\ldots,n\}$, we have
\begin{align*}
\text{\rm Node}(x,y,z,o),\text{\rm NewRoot}(y),\text{\rm SChange}_i(x)\ \ra\ S_i(y,z)\\
\text{\rm Node}(x,y,z,o),\text{\rm NewRoot}(y),\text{\rm SPivot}_i(x)\ \ra\ S_i(y,o)\\
\text{\rm Node}(x,y,z,o),\text{\rm NewRoot}(y),\text{\rm SCopy}_i(x),S_i(x,z)\ \ra\ S_i(y,z)\\
\text{\rm Node}(x,y,z,o),\text{\rm NewRoot}(y),\text{\rm SCopy}_i(x),S_i(x,o)\ \ra\ S_i(y,o).
\end{align*}
This completes the construction of $\dep_{n,m}$. It can be verified that, for each $\ell,n,m>0$, $\dep_{n,m} \in \class{G} \cap \class{CT}_{D_{\ell}}$. It remains to show the desired lower bound.
To this end, we rely on the following claim, which can be shown via an easy induction. For an atom $\text{\rm Node}(u,v,0,1) \in \chase{D}{\dep_{n,m}}$, where $D = \{\text{\rm Node}(c,c,0,1)\}$, we say that it belongs to stratum $j \in [2^n-1]$ if $\{S_1(v,b_1),\ldots,S_n(v,b_n)\} \subseteq \chase{D}{\dep_{n,m}}$ implies $b_1,\ldots,b_n$ is the binary representation of $j$.

\begin{claim}\label{cla:lower-bound-guarded}
	Let $D = \{\text{\rm Node}(c,c,0,1)\}$. For $j \in \{0,\ldots,2^n-1\}$,
	\begin{eqnarray*}
	&& |\{\bar t \mid \text{\rm Node}(\bar t) \in \chase{D}{\dep_{n,m}} \text{ belongs to stratum } j\}|\\
	&\geq& 2^{\left((j+1) \cdot \left(2^{\left(2^m\right)}-1\right)\right)}.
	\end{eqnarray*}
\end{claim}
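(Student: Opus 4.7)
The plan is to proceed by induction on $j \in \{0,\ldots,2^n-1\}$ with a slightly strengthened hypothesis that counts only the ``leaves'' of stratum $j$, namely the Node atoms whose depth counter has reached the all-ones value. Writing $L = 2^{(2^m)}-1$ for the target tree depth, I would strengthen the claim to: at least $2^{(j+1)\cdot L}$ distinct tuples $\bar t$ appear as Node atoms in stratum $j$ at depth $L$. Since every such leaf is in particular a Node atom of stratum $j$, this implies the stated bound.

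For the base case $j=0$, I would start from $\text{Node}(c,c,0,1)$: the ``create root'' TGD produces $\text{Root}(c)$ together with $S_i(c,0)$ for all $i$, placing $c$ in stratum $0$, while the initialization TGDs set the depth of $c$ to the all-zeros counter. Since $c$ is not yet at maximal depth, the ``binary-tree'' TGD (triggered by $\text{NonMaxDepth}$) produces two children of $c$ at depth $1$ via the successor/pivot/copy machinery, each marked $\text{NonRoot}$ and staying in stratum $0$ by the stratum-propagation TGDs. Iterating level by level, and using the fact that the semi-oblivious chase instantiates a fresh null per trigger (and that distinct parents give distinct frontier restrictions, hence distinct nulls), would yield a full binary tree of depth $L$ whose $2^L$ leaves are pairwise distinct Node atoms, matching the bound $2^{(0+1)\cdot L}$.

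For the inductive step, I would pick any leaf $v$ of stratum $j$. Since $j < 2^n-1$, the stratum bits of $v$ are not all one, so $\text{NonMaxStratum}(v)$ holds and the ``NewRoot'' TGD produces $\text{Node}(v,w,0,1)$ with a fresh null $w$, which is then promoted to a $\text{Root}$. Different leaves $v$ give rise to different triggers (the relevant frontier restriction pins down $v$), so the resulting $w$'s are pairwise distinct. The stratum-counter TGDs then place each such $w$ in stratum $j+1$, after which the base-case argument applied with root $w$ produces $2^L$ new leaves of stratum $j+1$. Combining the $2^{(j+1)L}$ leaves of stratum $j$ with the $2^L$ new leaves per new root gives at least $2^{(j+1)L}\cdot 2^L = 2^{(j+2)L}$ leaves of stratum $j+1$, closing the induction.

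The main obstacle I anticipate will be carefully verifying that the two counter gadgets, namely the double-exponential depth counter (encoded by $\text{Depth}$ and driven by $\text{DPivot}$/$\text{DChange}$/$\text{DCopy}$) and the exponential stratum counter (encoded by $S_1,\ldots,S_n$ and driven by $\text{SPivot}_i$/$\text{SChange}_i$/$\text{SCopy}_i$), faithfully implement binary increment. One has to check that for each non-maximal counter value the pivot position is unique and correctly identified, and that the child's $\text{Depth}$ and $S_i$ atoms are exactly the expected increment of the parent's. Combined with the observation that the semi-oblivious chase generates distinct nulls from distinct frontier-homomorphism triggers, this should suffice to conclude that the leaves of consecutive strata are genuinely distinct atoms, so that the counts multiply as claimed.
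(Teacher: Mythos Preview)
Your proposal is correct and matches the paper's approach: the paper itself does not spell out a proof but merely states that the claim ``can be shown via an easy induction,'' and your induction on $j$ with the strengthened hypothesis counting leaves (nodes at depth $L=2^{(2^m)}-1$) is precisely that induction. Your identification of the key checks---that the frontier of the binary-tree TGD and of the NewRoot TGD contains the parent node $y$, so that distinct parents yield distinct semi-oblivious nulls, and that the two counter gadgets implement binary increment---is exactly what is needed to make the ``easy'' induction go through.
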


By Claim~\ref{cla:lower-bound-guarded}, it is straightforward to see that
\[
|\chase{D_{\ell}}{\dep_{n,m}}|\ \geq\ \ell \cdot 2^{\left(2^n \cdot \left(2^{\left(2^m\right)}-1\right)\right)}
\]
and the claim follows.


\section*{Proof of Theorem~\ref{the:complexity-guarded}}

Consider a database $D$, and a set $\dep \in \class{G}$. By Theorem~\ref{thm:characterization-guarded}, we know that $\dep \in \class{CT}_D$ iff $\gsimple{\dep}$ is $\gsimple{D}$-weakly-acyclic. From the work~\cite{GoMP14}, where the linearization technique has been originally introduced, we know that $\ling{D}$ (respectively, $\ling{\dep}$) can be computed in polynomial time in $|D|$ (respectively, $|\dep|$), exponential time in $|\sch{\dep}|$, and double-exponential time in $\arity{\dep}$. Therefore:
\begin{itemize}
	\item Since the simplification of an atom can be clearly computed in polynomial time, we conclude that $\gsimple{D}$ can be computed in polynomial time in $|D|$, exponential time in $|\sch{\dep}|$, and double-exponential time in $\arity{\dep}$.
	
	\item Since each TGD of $\ling{\dep}$ induces at most $\arity{\dep}^{\arity{\dep}}$ simplifications, we conclude that $\gsimple{\dep}$ can be computed in polynomial time in $|\dep|$, exponential time in $|\sch{\dep}|$, and double-exponential time in $\arity{\dep}$.
\end{itemize}
By Theorem~\ref{the:complexity-sl}, $\mathsf{CT}(\class{SL})$ is in $\textsc{NL} \subseteq \textsc{PTime}$. Thus, we can check whether $\gsimple{\dep}$ is $\gsimple{D}$-weakly-acyclic in polynomial time in $|D|$, exponential time in $|\sch{\dep}|$, and double-exponential time in $\arity{\dep}$, by constructing $\gsimple{D}$ and $\gsimple{\dep}$, and then call the algorithm for $\mathsf{CT}(\class{SL})$.
Hence, $\mathsf{CT}(\class{G})$ is in \textsc{2ExpTime}, in \textsc{ExpTime} for schemas of bounded arity, and in \textsc{PTime} in data complexity.

Recall that the \textsc{2ExpTime} and \textsc{ExpTime} lower bounds are inherited from~\cite{CaGP15}. It remains to explain how we get the \textsc{PTime}-hardness in data complexity. To this end, we recall the problem of propositional atom entailment, parameterized by a class $\class{C}$ of TGDs:\footnote{The input TGDs to this problem can mention $0$-ary predicates.}

\smallskip

\begin{center}
	\fbox{
		\begin{tabular}{ll}
			{\small PROBLEM} : & $\mathsf{PAE}(\class{C})$
			\\
			{\small INPUT} : & A database $D$, a set $\dep \in \class{C}$ of TGDs,\\
			& and a propositional atom $R()$.
			\\
			{\small QUESTION} : &  Is it the case that $R() \in \chase{D}{\dep}$?
	\end{tabular}}
\end{center}

\medskip

From Proposition~15 of~\cite{CaGP15}, which relies on a technique known as the looping operator, we conclude the following: assuming that $\mathsf{PAE}(\class{CT} \cap \class{G})$ is $C$-hard in data complexity (i.e., when $\dep$ and the propositional atom are fixed), where $C$ is a complexity class closed under logspace reductions, then $\mathsf{CT}(\class{G})$ is co$C$-hard in data complexity.
We can inherit from~\cite{CaGL12} that $\mathsf{PAE}(\class{CT} \cap \class{G})$ is \textsc{PTime}-hard in data complexity, and thus, $\mathsf{CT}(\class{G})$ is \textsc{PTime}-hard in data complexity.

\end{document}